\theoremstyle{definition}
\newtheorem{defn}{Definition}[section]
\newtheorem{thm}[defn]{Theorem}
\newtheorem{tvr}[defn]{Proposition}
\theoremstyle{remark}
\newtheorem{example}{Example}[section]
\newlength{\defbaselineskip}
\newcommand{\setlinespacing}[1]%
           {\setlength{\baselineskip}{#1 \defbaselineskip}}
\newcommand{\hart}{\zeta}
\newcommand{\si}{\sigma}
\renewcommand{\i}{\mathrm{i}}
\newcommand{\map}{\rightarrow}
\newcommand{\id}{\mathfrak{1}}
\newcommand{\q}{\quad}
\renewcommand{\epsilon}{\varepsilon}
\newcommand{\ep}{\varepsilon}
\newcommand{\la}{\lambda}
\newcommand{\al}{\alpha}
\newcommand{\om}{\omega}
\renewcommand{\rho}{\varrho}
\renewcommand{\phi}{\varphi}
\newcommand{\R}{{\mathbb{R}}}
\newcommand{\N}{{\mathbb N}}
\newcommand{\Z}{\mathbb{Z}}
\newcommand{\C}{\mathbb{C}}
\newcommand{\seq}[1]{\left<#1\right>}
\newcommand{\set}[2]{\left\{#1 \, |\, #2 \right\}}
\newcommand{\setb}[2]{\left\{#1 \, \mid\, #2 \right\}}
\newcommand{\setm}[2]{\left\{#1 \,\, \big|\,\, #2 \right\}}
\newcommand{\abs}[1]{\left\vert#1\right\vert}
\newcommand{\wt}{\widetilde}
\newcommand{\sca}[2]{\langle #1,\, #2\rangle}
\begin{document}

\title[Dual-root lattice discretization]
{Dual-root lattice discretization of Weyl orbit functions}

\author[J. Hrivn\'{a}k]{Ji\v{r}\'{i} Hrivn\'{a}k$^{1}$}
\author[L. Motlochov\'{a}]{Lenka Motlochov\'{a}$^{1}$}

\date{\today}
\begin{abstract}\small
Four types of discrete transforms of Weyl orbit functions on the finite point sets are developed. The point sets are formed by intersections of the dual-root lattices with the fundamental domains of the affine Weyl groups. The finite sets of weights, labelling the orbit functions, obey symmetries of the dual extended affine Weyl groups. Fundamental domains of the dual extended affine Weyl groups are detailed in full generality. Identical cardinality of the point and weight sets is proved and explicit counting formulas for these cardinalities are derived. Discrete orthogonality of complex-valued Weyl and real-valued Hartley orbit functions over the point sets is established and the corresponding discrete Fourier-Weyl and Hartley-Weyl transforms are formulated.
\end{abstract}

\maketitle
\noindent
$^1$ Department of Physics, Faculty of Nuclear Sciences and Physical Engineering, Czech Technical University in Prague, B\v{r}ehov\'a~7, CZ-115 19 Prague, Czech Republic
\vspace{10pt}

\noindent
E-mail: jiri.hrivnak@fjfi.cvut.cz, lenka.motlochova@fjfi.cvut.cz

\bigskip

\noindent
Keywords: Weyl orbit functions, root lattice, discrete Fourier transform, Hartley transform
\smallskip


\section{Introduction}

The purpose of this article is to extend the collection of discrete Fourier transforms of Weyl orbit functions on Weyl group invariant lattices \cite{HP,HMPdis,HW2}. A finite fragment of a refinement of the classical dual root lattice \cite{Bour} serves as the starting set of points over which the discrete orthogonality of four types of complex Weyl orbit functions \cite{KP1,KP2,MMP} is developed. The entire resulting transform formalism produces a real-valued multidimensional Weyl group invariant generalizations of the one-dimensional discrete Hartley transform~\cite{Brac}. 

The antisymmetric and symmetric exponential orbit sums over Weyl groups form a standard part of the theory of Lie algebras and their representations \cite{Bour}. From the viewpoint of the Coxeter groups theory, Weyl groups cover all finite crystallographic reflection groups \cite{H2}.  Depending on the type of the underlying crystallographic root system, two or four sign homomorphisms exist \cite{MMP}. Each sign homomorphism determines signs in the exponential sums and thus generates for each Weyl group two or four types of complex special functions.  
Lattice shift and Weyl group invariance of the resulting Weyl orbit functions generalize periodicity and boundary behaviour of the standard cosine and sine functions of one variable. Investigating Weyl orbit functions as special functions, the results range from generalizations of continuous multivariate Fourier transforms in \cite{KP1,KP2} to generalized Chebyshev polynomial methods \cite{xuAd,MPcub}. Discrete Fourier methods are comprehensively studied for Weyl orbit functions \cite{HP,HMPdis,HW2} as well as for their multivariate Chebyshev polynomial generalizations \cite{diejen,HMPcub,MMP,MPcub}. The refinement of the dual weight lattice intersected with the fundamental domain of the affine Weyl group form a finite point set on which the majority of the discrete Fourier and Chebyshev methods is developed \cite{HP,HMPdis,xuAd,MMP}. This choice of the point set generates symmetries of labels of orbit functions governed by the dual affine Weyl group  \cite{HP}. Choosing as the starting point set the refinement of the weight lattice produced different argument and label symmetries, both controlled by the same affine Weyl group \cite{HW2}. 

The dual root and root lattices constitute the last classical Weyl group invariant lattices for which the inherent Fourier methods have not yet been studied. Several apparent relative difficulties, surmounted in the present paper for the dual root lattice, stem from the fact that the label symmetries are in this case determined by the dual extended affine Weyl group. 
Firstly, even though the structure of the extended affine Weyl group and its dual version is detailed already in \cite{Bour}, relevant rigorous results about their fundamental domains appeared much later in \cite{Komr}. Moreover, these fundamental domains, essential for selecting the labels of orbit functions in discrete transforms, are determined in \cite{Komr} only up to their boundaries. A~uniform description of the fundamental domains from \cite{Komr}, including a unique layout of the boundary points, is achieved in the present paper by introducing lexicographical ordering on the Kac coordinates \cite{Kac}. Secondly, the main challenge poses linking the number of weights, found in the fundamental domain of the dual extended affine Weyl group, with the number of points from the refined dual root lattice, lying in the fundamental domain of the affine Weyl group.  Both sets form topologically distant finite subsets of the underlying Euclidean space and their common cardinality constitutes a novel invariant characteristic of the crystallographic root systems and corresponding simple Lie algebras. Determining the cardinality of these sets in full generality requires invoking and extending concepts from the theory of invariant polynomials \cite{invtheory,H2}. Common cardinality of the point and weight sets guarantees in turn the existence of both complex-valued Fourier-Weyl and real-valued Hartley-Weyl discrete transforms.   

Both complex and real types of developed Fourier-like transforms significantly enhance the collection of available Weyl group invariant discrete transforms \cite{HP,HMPdis,xuAd,MMP}. The topology of the point sets of the dual root lattice discretization and their relative position with respect to the fundamental domain of the affine Weyl group substantially diverge from the locations of the original dual weight lattice points. Moreover, fundamentally novel options are generated by combining both dual root and dual weight discretizations to produce discrete transforms on generalized and composed grids. These options encompass transforms on the refined dual weight lattice with points from the dual root lattice omitted, such as two-variable transforms on the  honeycomb lattice. Assuming Weyl symmetric or antisymmetric boundary conditions in mechanical graphene eigenvibrations model \cite{CsTi} or in quantum field lattice models \cite{DrSa} potentially yields novel Weyl invariant solutions and dispersion relations in terms of four types of extended Weyl orbit functions.    
 Real-valued multivariate Hartley-Weyl versions of the transforms augment the application potential of discrete Hartley transforms in pattern recognition \cite{Bhar}, geophysics \cite{Kuhl}, signal processing \cite{Paras, Pusch}, optics \cite{Liu} and measurement \cite{Sun}. The unitary matrices of the derived transforms also permit the construction of novel analogues of the Kac-Peterson matrices from the conformal field theory \cite{HW2}.     

The paper is organized as follows. In Section 2, the necessary facts concerning root systems and invariant lattices of Weyl groups are recalled. Section 3 contains a description of infinite extensions of Weyl groups. In Section 4, fundamental domains of infinite extensions of Weyl groups and the corresponding invariant polynomials are detailed. Section 5 is devoted to the study of finite sets of points and weights. The identical cardinality of these sets is proven and explicit counting formulas for the cardinalities are listed in full generality. Section~6 describes Weyl and Hartley orbit functions together with their discrete orthogonality and discrete Fourier transforms. Comments and follow-up questions are contained in the last section.

\section{Invariant lattices of Weyl groups}

\subsection{Root systems and Weyl groups}\

The notation used in this article is based on papers \cite{HP,HMPdis}. The purpose of this section is to extend this notation and recall other pertinent details \cite{Bour}.
Each simple Lie algebra from the classical four series $A_n\,(n\geq1)$, $B_n\,(n\geq3)$, $C_n\,(n\geq2)$, $D_n\,(n\geq4)$ and from the five exceptional cases $E_6,E_7,E_8,F_4, G_2$ determines its set $\Delta=\{\al_1,\dots,\al_n\}$ of simple roots \cite{Bour,VO}. For the cases of simple Lie algebras with two different root-lengths, the set $\Delta$ is disjointly decomposed into $\Delta_s$ of short simple roots and $\Delta_l$ of long simple roots, 
$$\Delta=\Delta_s\cup\Delta_l.$$ 
The set $\Delta$ forms a non-orthogonal basis of the Euclidean space $\R^n$ with the standard scalar product $\seq{\cdot,\cdot}$. 
To every simple root $\al_i\in \Delta $ corresponds a reflection $r_i$ given by the formula
\begin{equation*}
r_i a\equiv a-\frac{2\seq{a,\al_i}}{\seq{\al_i,\al_i}}\al_i,\q a\in\R^n\,.
\end{equation*}
Reflections $r_i, \, i\in \{1,\dots,n\}$ generate an irreducible Weyl group 
$$W\equiv\seq{r_1,\dots,r_n}$$
and $W$ in turn generates the entire root system $\Pi\equiv W\Delta$. The set of simple roots $\Delta$ induces a partial ordering $\leq$ on $\R^n$ such that for $\la,\nu \in \R^n$ it holds $\nu\leq \la$ if and only if $\la-\nu = k_1\al_1+\dots+ k_n \al_n$ with $k_i \in \Z^{\geq 0}$ for all $i\in \{1,\dots,n\}$. There exists a unique root $\xi \in \Pi$, highest with respect to this ordering, of the form $$\xi=m_1\al_1+\dots+m_n\al_n.$$ To each simple root $\al_i\in\Delta$ relates the dual simple root given by
$$\alpha_i^\vee=\frac{2\al_i}{\seq{\al_i,\al_i}}.$$
The set of dual simple roots $\Delta^\vee = \{\al_1^\vee,\dots,\al_n^\vee\}$ generates the entire dual root system  $\Pi^\vee\equiv W\Delta^\vee$. The dual root system $\Pi^\vee$ contains the highest dual root with respect to the ordering induced by $\Delta^\vee$ of the form $$\eta=m^\vee_1\al^\vee_1+\dots+m^\vee_n\al^\vee_n.$$ The expansion coefficients of the highest root $m_1,\dots,m_n $ and of the highest dual root $m ^\vee_1,\dots,m^\vee_n $, named the marks and the dual marks, respectively, are listed in Table~1 in \cite{HP}. Setting additionally $m_0=m_0^\vee=1$, the Coxeter number $m$ is given by
\begin{equation}\label{coxnum}
m=\sum_{i=0}^n m_i=\sum_{i=0}^n m^\vee_i . 
\end{equation}
The dual marks with unitary values determine an important index set $J \subset \{1,\dots,n\}$,
\begin{equation*}
J\equiv\set{1\leq i\leq n}{m^\vee_i=1}.
\end{equation*}    
Note that $J$ is empty for the simple Lie algebras $E_8,F_4$ and $G_2$.

Minimal number generators $r_i$, necessary to generate an element $w\in W,$ is called the length $l(w)$ of $w$. There is a unique longest element $w_0 \in W$, also called the opposite involution, and for its length it holds that 
\begin{equation*}
l(w_0)= \frac{|\Pi|}{2}.	
\end{equation*}
For root systems with even Coxeter numbers are the opposite involutions of the form
$$w_0 = (r_{1}\dots r_{n})^\frac{m}{2}.$$
An important parabolic subgroup $W_i$ of the Weyl group $W$ is obtained by omitting  $r_i$ from the set of generators of $W,$
$$W_i\equiv\seq{r_1,\dots,r_{i-1},r_{i+1},\dots,r_n}.$$ 
The subgroup $W_i$ forms the Weyl group corresponding to the root system $\Pi_i\equiv W_i(\Delta\setminus \al_i)$. 
The opposite involution in $W_i$ with respect to the root system $\Pi_i$ is denoted by $w_i$.

\subsection{Invariant lattices}\
 
Four classical Weyl group invariant lattices \cite{Bour} comprise the root lattice, the dual weight lattice, the dual root lattice and the weight lattice. The root lattice $Q$ is the $\Z$-span of the set of simple roots $\Delta$,
 $$Q=\Z\al_1+\dots +\Z\al_n.$$
The dual weight lattice is $\Z-$dual to the root lattice $Q$, 
$$P^\vee=\set{\omega^\vee\in\R^n}{\seq{\omega^\vee,\al_i}\in\Z,\,\forall \al_i\in\Delta}=\Z\omega_1^\vee+\dots+\Z\omega_n^\vee,$$ 
with the dual fundamental weights $\omega_i^\vee$ given by $$\seq{\omega^\vee_i,\al_j}=\delta_{ij}.$$
 The dual root lattice is the $\Z$-span of the set of dual simple roots $\Delta^\vee,$ 
 $$Q^\vee=\Z\al_1^\vee+\dots +\Z\al_n^\vee.$$
The weight lattice is $\Z-$dual to the dual root lattice $Q^\vee$, 
\begin{equation}\label{P}
P=\set{\omega\in\R^n}{\seq{\omega,\al^\vee_i}\in\Z,\,\forall \al^\vee_i\in\Delta^\vee}=\Z\omega_1+\dots+\Z\omega_n,	
\end{equation}
with the fundamental weights $\omega_i$ given by $$\langle\omega_i,\al^\vee_j\rangle=\delta_{ij}.$$
The weight lattice $P$ is partitioned into $|J|+1$ components and this decomposition consists of the root lattice $Q$ and its $|J|$ shifted copies, 
\begin{equation*}
P=Q\cup  \bigcup_{i\in J}(\omega_i+Q).
\end{equation*}
The Cartan matrix $C$ with entries given by
\begin{equation}\label{Cart}
C_{ij}=\seq{\al_i,\al^\vee_j} 	
\end{equation}
relates the simple roots and fundamental weights as well as their dual versions via formulas
\begin{equation}\label{alom}
\omega_i=\sum_{j=1}^n C^{-1}_{ij}\al_j,\q \omega^\vee_i=\sum_{j=1}^n \left(C^T_{ij}\right)^{-1}\al^\vee_j.
\end{equation}
The determinant $c$ of the Cartan matrix $C$ determines the index of connection of the root system $\Pi$ and the order of the quotient group $P/Q$,
$$c=\det C= \abs{P/Q}=\abs{J}+1. $$ 

\subsection{Sign homomorphisms}\

Recall from \cite{HMPdis} that a homomorphism $\sigma :W\mapsto \{\pm 1\} $ is called a sign homomorphism.
The identity $\id$ and the determinant $\si^e$ sign homomorphisms, which exist for any Weyl group $W,$ are given on the generating reflections $r_i$ as
\begin{alignat*}{2}
& \id (r_i) &=& 1,\\ 
& \sigma^e (r_i) &=& -1.
\end{alignat*} 
For the root systems with two lengths of roots, the short and long sign homomorphisms $\sigma^s $ and $\sigma^l$ are defined as
\begin{alignat*}{1}
 \sigma^s (r_i) &= \left\{ 
  \begin{array}{l l}
    -1 & \quad \text{if $\alpha_i \in \Delta_s, $}\\
     1 & \quad \text{otherwise,}
  \end{array} \right. \\
	\sigma^l (r_i) &= \left\{ 
  \begin{array}{l l}
    -1 & \quad \text{if $\alpha_i \in \Delta_l, $}\\
     1 & \quad \text{otherwise.}
  \end{array} \right. 
\end{alignat*}

To each sign homomorphism $\sigma$ is attached a vector $\rho^\sigma=\rho^\sigma_1\omega_1+\dots+\rho^\sigma_n\omega_n\in P$ defined by   
\begin{equation}\label{rhoi}
\begin{alignedat}{4}
&\rho_i^\id\equiv0&\q&i=1,\dots,n,\\
&\rho_i^{\sigma^e}\equiv1&\q&i=1,\dots,n,\\
&\rho_i^{\sigma^s}\equiv1&\q&\alpha_i\in\Delta_s,&\q\q& \rho_i^{\sigma^s}\equiv0&\q& \alpha_i\in\Delta_l,\\
&\rho_i^{\sigma^l}\equiv0&\q&\alpha_i\in\Delta_s,&\q\q& \rho_i^{\sigma^l}\equiv1&\q&\alpha_i\in\Delta_l.
\end{alignedat}
\end{equation}
The vector $\rho^{\sigma^e}$ becomes the standard $\rho$ vector defined as half-sum of the positive roots. Zero coordinates $\rho^\sigma_0$ of the vectors $\rho^\sigma$ are for convenience defined  by
$$\rho^\id_0\equiv0,\q \rho^{\sigma^e}_0\equiv1,\q\rho^{\sigma^s}_0\equiv1,\q\rho^{\sigma^l}_0\equiv0.$$
Furthermore, to each sign homomorphism $\sigma$ is associated a generalized Coxeter number $m^\sigma$ by defining relation
\begin{equation}\label{msigma}
m^\sigma= \sum_{i=0}^n m^\vee_i \rho^\sigma_i.
\end{equation}
Note that $m^\id=0$ and the number $m^{\sigma^e}=m$ coincides with the standard Coxeter number. The short $m^{\sigma^s}$ and long $m^{\sigma^l}$ Coxeter numbers are tabulated in \cite{HMPdis}. 

\section{Extensions of Weyl groups}

\subsection{Affine Weyl groups}\

The affine Weyl group is a semidirect product of the group of translations $Q^\vee$ and $W,$ 
\begin{equation}\label{Waff}
W^{\mathrm{aff}}= Q^\vee\rtimes W.
\end{equation}
For $q^\vee \in Q^\vee$ and $w\in W$, any element $T(q^\vee)w \in W^{\mathrm{aff}}$ acts on $\R^n$ as $$T(q^\vee)w\cdot a\equiv wa+q^\vee, \q a\in \R^n.$$ 
The standard retraction homomorphism $\psi :W^{\mathrm{aff}}\map W$ of the semidirect product \eqref{Waff}  is given by 
\begin{equation*}
\psi(T(q^\vee)w)=w.
\end{equation*}

The fundamental domain $F\subset\R^n$ of $W^{\mathrm{aff}}$ 
is a simplex explicitly given by
$$F=\set{a_1\omega_1^\vee+\dots+a_n\omega_n^\vee}{a_0+m_1a_1+\dots+m_na_n=1,a_i\geq0,i=0,\dots,n}.$$
The stabilizer $\mathrm{Stab}_{W^\mathrm{aff}}(a)$ is a subgroup of $W^\mathrm{aff}$ stabilizing $a\in \R^n,$
\begin{equation*}
\mathrm{Stab}_{W^{\mathrm{aff}}}(a) = \setb{w^{\mathrm{aff}}\in W^{\mathrm{aff}}}{w^{\mathrm{aff}}a=a},
\end{equation*}
and the function $\varepsilon:\R^n\map \N$ is defined by 
\begin{equation}\label{ep}
\varepsilon(a)=\frac{\abs{W}}{\abs{\mathrm{Stab}_{W^{\mathrm{aff}}}(a)}}.
\end{equation} 
The stabilizers $\mathrm{Stab}_{W^{\mathrm{aff}}}(a) $ and $\mathrm{Stab}_{W^{\mathrm{aff}}}(w^{\mathrm{aff}}a) $ are conjugated
and therefore the function $\varepsilon$ is $W^{\mathrm{aff}}$ invariant
\begin{equation}\label{epinv}
\varepsilon(a)=\varepsilon(w^{\mathrm{aff}}a), \q w^{\mathrm{aff}} \in W^{\mathrm{aff}}.
\end{equation} 

The standard action of $W$ on the torus $\R^n/Q^{\vee}$ generates for $x\in \R^n/Q^{\vee}$ its isotropy groups $\mathrm{Stab} (x)$
and orbits $W x$ of orders
\begin{equation*}
\wt \ep(x)\equiv |Wx|, \q x\in \R^n/Q^{\vee}.
\end{equation*}
The following three properties from Proposition 2.2 in \cite{HP} of the action of $W$ on the torus $\R^n/Q^{\vee}$ are essential,
\begin{enumerate}
\item For any $x\in \R^n/Q^{\vee}$, there exists $x'\in F \cap \R^n/Q^{\vee} $ and $w\in W$ such that
\begin{equation}\label{rfun1}
 x=wx'.
\end{equation}
\item If $x,x'\in F \cap \R^n/Q^{\vee} $ and $x'=wx$, $w\in W$, then
\begin{equation}\label{rfun2}
 x'=x=wx.
\end{equation}
\item If $x\in F \cap \R^n/Q^{\vee} $, i.e. $x=a+Q^{\vee}$, $a\in F$, then $\psi (\mathrm{Stab}_{W^{\mathrm{aff}}}(a))=\mathrm{Stab}(x)$ and
\begin{equation}\label{rfunstab}
\mathrm{Stab} (x) \cong \mathrm{Stab}_{W^{\mathrm{aff}}}(a).
\end{equation}
\end{enumerate}
Relation \eqref{rfunstab} grants that for $x=a+Q^{\vee}$, $a\in F$ it holds that
\begin{equation}\label{ept}
	\ep(a)= \wt\ep(x).
\end{equation}
Note that instead of $\wt\ep(x)$, the symbol $\ep(x)$ is used for $|Wx|$, $x\in F\cap\R^n/Q^{\vee} $ in \cite{HP,HMPdis}. The algorithm for calculation of the coefficients $\ep(x)$ is described in
\cite[\S 3.7]{HP}.

To each sign homomorphism $\si$ is associated a subset $F^\si \subset F$ of the form   
\begin{equation}\label{domainF2}
F^\sigma\equiv\setb{a\in F}{\sigma\circ\psi\left(\mathrm{Stab}_{W^\mathrm{aff}}(a)\right)=\{1\}}.
\end{equation}  
The sets $F^\sigma$ are detailed in \cite{HP,HMPdis} and described via   
non-negative symbols $a_i^\sigma $ defined by
\begin{equation*}
\begin{alignedat}{4}
&a^\mathfrak{1}_0,a^\mathfrak{1}_i\geq 0 &\q& \text{ if } \alpha_i\in\Delta,\\
&a^{\sigma^e}_0,a^{\sigma^e}_i> 0 &\q& \text{ if } \alpha_i\in\Delta,\\
&a^{\sigma^s}_0,a^{\sigma^s}_i\geq 0 &\q& \text{ if } \alpha_i\in\Delta_l, &\q\q & a^{\sigma^s}_i>0 &\q& \text{ if }\al_i\in\Delta_s,\\
&a^{\sigma^l}_0,a^{\sigma^l}_i> 0 &\q&\text{ if } \alpha_i\in\Delta_l, &\q\q & a^{\sigma^l}_i\geq 0 &\q&\text{ if }\al_i\in\Delta_s.
\end{alignedat}
\end{equation*} 
The explicit formulas for $F^\sigma$ are thus of the form
\begin{equation}\label{domainF}
F^\sigma=\set{a^\sigma_1\omega_1^\vee+\dots+a^\sigma_n\omega_n^\vee}{a^\sigma_0+m_1a^\sigma_1+\dots+m_na^\sigma_n=1}.
\end{equation}
Note that $F^\mathfrak{1}= F$ and $F^{\sigma^e}=\mathrm{int}(F)$.

\subsection{Dual affine Weyl groups}\

The dual affine Weyl group $W^{\mathrm{aff}}_Q$ is a semidirect product of the group of translations in the root lattice $Q$ and $W,$
\begin{equation}\label{WaffQ}
	W^{\mathrm{aff}}_Q\equiv Q\rtimes W.
\end{equation}
For $q\in Q$ and $w\in W$, any element $T(q)w \in W^{\mathrm{aff}}_Q$ acts on $\R^n$ as 
\begin{equation}\label{TQ}
	T(q)w\cdot b\equiv wb+q, \q b\in \R^n.
\end{equation}
The fundamental domain $F_Q\subset\R^n$ of the dual affine Weyl group $W^{\mathrm{aff}}_Q$, denoted by $F^\vee$ in \cite{HP,HMPdis}, is explicitly given by  
\begin{equation}\label{FQ}
F_Q\equiv\set{b_1\omega_1+\dots+b_n\omega_n}{b_0+m^\vee_1 b_1+\dots+m_n^\vee b_n=1,b_i\geq0,i=0,\dots,n}.
\end{equation}  
By identifying each $b=b_1\omega_1+\dots+b_n\omega_n\in F_Q$ with its Kac coordinates $[b_0,b_1,\dots,b_n]$ from \eqref{FQ}, 
$$b\equiv[b_0,b_1,\dots,b_n],$$
the lexicographic ordering on $F_Q$ is introduced in the following way. An element $b=[b_0,b_1,\dots,b_n]$ is lexicographically higher than $b'=[b'_0,b_1',\dots, b_n']$,
\begin{equation}\label{lex}
b>_\mathrm{lex} b',	
\end{equation}
if and only if $b_i>b_i'$ for the first $i=0,1,\dots,n$ where $b_i$ differs from $b_i'$.

The standard dual retraction homomorphism $\widehat\psi :W^{\mathrm{aff}}_Q\map W$ of the semidirect product \eqref{WaffQ}  is given by 
\begin{equation}\label{retQ}
\widehat\psi(T(q)w)=w.
\end{equation}
Similar to \eqref{domainF2}, four subsets of $F_Q^\si \subset F_Q$ are introduced by
$$F_Q^\sigma\equiv\setm{b\in F_Q}{\sigma\circ\widehat\psi\left(\mathrm{Stab}_{W_Q^\mathrm{aff}}(b)\right)=\{1\}}.$$
The domains $F_Q^\sigma$ are explicitly described  by 
\begin{equation}\label{FQS}
F_Q^\sigma=\set{b_1^\sigma\omega_1+\dots+b_n^\sigma\omega_n}{b_0^\sigma+m_1^\vee b_1^\sigma+\dots+m_n^\vee b^\sigma_n=1}
\end{equation}
with the symbols $b_0^\sigma,\dots,b_n^\sigma$ satisfying
\begin{equation}\label{fun1}
\begin{alignedat}{4}
&b^\mathfrak{1}_0,b^\mathfrak{1}_i\geq 0 &\q& \text{ if } \alpha_i\in\Delta,\\ &b^{\sigma^e}_0,b^{\sigma^e}_i> 0 &\q& \text{ if } \alpha_i\in\Delta,\\
&b^{\sigma^s}_i\geq 0 &\q& \text{ if } \alpha_i\in\Delta_l, &\q\q& b^{\sigma^s}_0,b^{\sigma^s}_i>0 &\q& \text{ if }\al_i\in\Delta_s,\\
&b^{\sigma^l}_i> 0 &\q&\text{ if } \alpha_i\in\Delta_l, &\q\q& b^{\sigma^l}_0,b^{\sigma^l}_i\geq 0 &\q&\text{ if }\al_i\in\Delta_s.
\end{alignedat}
\end{equation} 

\subsection{Extended dual affine Weyl groups}\

The extended dual affine Weyl group $W^{\mathrm{aff}}_P$ is a semidirect product of the group of translations $P$ and the Weyl group $W,$ 
\begin{equation}\label{WaffP}
	W^{\mathrm{aff}}_P\equiv P\rtimes W.
\end{equation}
For $p\in P$ and $w\in W$, extending the action \eqref{TQ} on $\R^n$  to elements $T(p)w \in W^{\mathrm{aff}}_P$  yields $$T(p)w\cdot b\equiv wb+p, \q b\in \R^n.$$ 
The extended dual retraction homomorphism $\widehat\psi :W^{\mathrm{aff}}_P\map W$ of the semidirect product \eqref{WaffP}  is a natural extension of \eqref{retQ} given by 
\begin{equation*}
\widehat\psi(T(p)w)=w.
\end{equation*}

Introducing the subgroup $\Gamma \subset W_P^\mathrm{aff}$ of all elements of $W_P^\mathrm{aff}$ leaving the fundamental domain $F_Q$ of $W^{\mathrm{aff}}_Q$ invariant
\begin{equation}\label{Gamma0}
\Gamma=\set{\gamma\in W^{\mathrm{aff}}_P}{\gamma\cdot F_Q=F_Q}	
\end{equation}
allows to represent $W^{\mathrm{aff}}_P$ as a semidirect product 
\begin{equation}\label{extgroup}
W^{\mathrm{aff}}_P=W^{\mathrm{aff}}_Q\rtimes \Gamma.
\end{equation}
Explicit structure of the abelian group $\Gamma$ is detailed in \cite[Ch.VI,\S2]{Bour} as
\begin{equation}\label{gamma}
\Gamma=\setm{1,\gamma_i\in W^{\mathrm{aff}}_P}{i\in J},\q \gamma_i\equiv T(\omega_i)w_iw_0,\q\abs{\Gamma}=c.
\end{equation} 
The action of $\Gamma$ on $F_Q$ assigns to each fixed $\gamma\in\Gamma$ a bijection on $F_Q$ given on Kac coordinates $[b_0,\dots,b_n]$ by  
\begin{equation}\label{akcegamma}
\gamma\cdot b=\gamma\cdot [b_0,\dots,b_n]=[b_{\pi_\gamma(0)},\dots,b_{\pi_\gamma(n)}]
\end{equation}
where $\pi_\gamma$ denotes a permutation of the index set $\{0,1,\dots,n\}$. These permutations of the Kac coordinates $[b_0,\dots,b_n]$, which determine the group $\Gamma$, are specified for every simple Lie algebra in Table~\ref{gamtab}.
Direct analysis of the permutations $\gamma\in\Gamma$ in Table~\ref{gamtab} on the $\rho^\sigma$ vector \eqref{rhoi} yields for its coordinates $\rho^\sigma_i$ that  
\begin{equation}\label{gammarho}
\rho^\sigma_{\pi_\gamma(i)}=\rho^\sigma_i, \q i \in \{0,1,\dots,n \}.	
\end{equation}

{\small
\bgroup
\def\arraystretch{1.5}
\begin{table}
\begin{tabular}{|c||c|c|c|c|c|c|}
\hline
&\multirow{2}{*}{$\Gamma$}&\multirow{2}{*}{$\gamma$}&\multirow{2}{*}{$\gamma\cdot [b_0,\dots,b_n]$}&\multicolumn{3}{|c|}{$\sigma\circ\widehat{\psi}(\gamma)$}\\ \cline{5-7}
&&&&$\sigma^e$&$\sigma^s$&$\sigma^l$\\
\hline\hline
$A_n$&$\Z_{n+1}$&$\gamma_i$&$[b_{n-i+1},b_{n-i+2},\dots, b_n,b_0,b_1,\dots,b_{n-i}]$&$(-1)^{ni}$&$-$&$-$\\\hline
$B_n$&$\Z_2$&$\gamma_n$&$[b_n,b_{n-1},\dots, b_1,b_0] $&$(-1)^{\tfrac{n(n+1)}{2}} $&$(-1)^n$&$(-1)^{\tfrac{(n-1)n}{2}}$\\\hline
$C_n$&$\Z_2$&$\gamma_1$&$[b_1,b_0,b_2,\dots, b_n] $&$-1$&$1$&$-1$\\\hline
\multirow{3}{*}{$D_{2k}$}&\multirow{3}{*}{$\Z_2\times\Z_2$}&$\gamma_1$&$[b_1,b_0,b_2,\dots,b_{2k-2},b_{2k},b_{2k-1}] $&$1$&$-$&$-$\\\cline{3-7}
&&$\gamma_{2k-1}$& $[b_{2k-1},b_{2k},b_{2k-2},\dots,b_2,b_0,b_1]$&$(-1)^k$&$-$&$-$\\\cline{3-7}
&&$\gamma_{2k}$&$[b_{2k},b_{2k-1},\dots,b_1,b_0]$&$(-1)^k$&$-$&$-$\\\cline{3-7}
\hline
\multirow{3}{*}{$D_{2k+1}$}&\multirow{3}{*}{$\Z_4$}&$\gamma_1$&$ [b_1,b_0,b_2,\dots,b_{2k-1},b_{2k+1},b_{2k}]$&$1$&$-$&$-$\\\cline{3-7}
&&$\gamma_{2k}$&$[b_{2k+1},b_{2k},\dots,b_2,b_0,b_1]$&$(-1)^k$&$-$&$-$\\\cline{3-7}
&&$\gamma_{2k+1}$&$[b_{2k},b_{2k+1},b_{2k-1},\dots,b_1,b_0]$&$(-1)^k$&$-$&$-$\\\cline{3-7}
\hline
\multirow{2}{*}{$E_6$}&\multirow{2}{*}{$\Z_3$}&$\gamma_1$&$ [b_1,b_5,b_4,b_3,b_6,b_0,b_2]$&$1$&$-$&$-$\\\cline{3-7}
&&$\gamma_5$&$[b_5,b_0,b_6,b_3,b_2,b_1,b_4] $&$1$&$-$&$-$\\\hline
$E_7$&$\Z_2$&$\gamma_6$&$[b_6,b_5,b_4,b_3,b_2,b_1,b_0,b_7] $&$-1$&$-$&$-$\\\hline
\end{tabular}
\bigskip
\caption{Non-trivial  abelian groups $\Gamma$, isomorphic to the groups in the second column, are listed. The action of each $\gamma\in \Gamma$, different from the identity, on the Kac coordinates is  specified and the corresponding values of sign homomorphisms on $\widehat{\psi}(\gamma)$ are given.}\label{gamtab}
\end{table}
\egroup}

Considering a resolution factor $M\in\N$, an important class of subgroups $\Gamma_M$ of $W_P^{\mathrm{aff}}$ is given by
\begin{equation}\label{gammaM}
\Gamma_M=\set{1,\gamma_{M,i}\in W^{\mathrm{aff}}_P}{i\in J},\q \gamma_{M,i}\equiv T(M\omega_i)w_iw_0,\q\abs{\Gamma_M}=c.
\end{equation} 
For each $M$, the subgroup $\Gamma_M \subset W_P^{\mathrm{aff}}$ is isomorphic to $\Gamma$ by an isomorphism
\begin{equation}\label{izo}
\Gamma_M\ni  \gamma_M \mapsto \gamma\in\Gamma	
\end{equation}
defined by assigning $\gamma_i\in\Gamma$ to $\gamma_{M,i}\in\Gamma_M$.
 The action of $\gamma_M  \in \Gamma_M $ on $b\in\R^n$ is directly related to action of $\gamma \in \Gamma$ by
\begin{equation}\label{acgamM}
\gamma_M\cdot b=M\;\left(\gamma\cdot\frac{b}{M}\right).
\end{equation}
Consequently, $\Gamma_M$ acts naturally on the magnified fundamental domain $MF_Q$. Setting for $b=b_1\om_1+\dots+b_n\om_n\in MF_Q$ the magnified Kac coordinates $b_0\equiv M-\left(m_1^\vee b_1+\dots+m_n^\vee b_n\right)$, this action is described by 
\begin{equation}\label{akcegamM}
\gamma_M\cdot b=\gamma_M\cdot[b_0,\dots,b_n]=[b_{\pi_\gamma(0)},\dots,b_{\pi_\gamma(n)}].
\end{equation} 
 To each element $\gamma_M\in\Gamma_M$ is thus assigned the same permutation $\pi_\gamma$  of $\{0,1,\dots,n\}$ from Table \ref{gamtab} as to the corresponding $\gamma\in\Gamma$ in \eqref{akcegamma}. 

\section{Fundamental domains and invariant polynomials}

\subsection{Stabilizers}\

The stabilizer $\mathrm{Stab}_{W_P^\mathrm{aff}}(b)$ is a subgroup of $W_P^\mathrm{aff}$ stabilizing $b\in \R^n$
and the related counting discrete function $h_{P,M}:\R^n\map \N$ is for any $M\in \N$ defined by 
\begin{equation}\label{hPM}
h_{P,M}(b)=\abs{\mathrm{Stab}_{W_P^{\mathrm{aff}}}\left(\frac{b}{M}\right)}.
\end{equation} 
Characterizing the structure of $\mathrm{Stab}_{W_P^\mathrm{aff}}(b)$ in the following proposition subsequently allows the calculation of the function $h_{P,M}$.
\begin{tvr}
For any $b\in F_Q$ it holds that
\begin{equation}\label{stabil}
\mathrm{Stab}_{W^\mathrm{aff}_P}(b)=\mathrm{Stab}_{W^\mathrm{aff}_Q} (b)\rtimes\mathrm{Stab}_{\Gamma}(b).
\end{equation}
\end{tvr}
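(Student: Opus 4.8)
The plan is to exploit the semidirect-product decomposition $W^{\mathrm{aff}}_P=W^{\mathrm{aff}}_Q\rtimes\Gamma$ from \eqref{extgroup}, together with the defining property \eqref{Gamma0} of $\Gamma$ and the classical fact that the simplex $F_Q$ is a \emph{strict} fundamental domain of $W^{\mathrm{aff}}_Q$: every $W^{\mathrm{aff}}_Q$-orbit meets $F_Q$ in exactly one point. Throughout I would write a general element of $W^{\mathrm{aff}}_P$ uniquely as $g=h\gamma$ with $h\in W^{\mathrm{aff}}_Q$ and $\gamma\in\Gamma$, which is exactly what \eqref{extgroup} provides.

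First I would check that the right-hand side of \eqref{stabil} is a genuine subgroup of $W^{\mathrm{aff}}_P$ and has the claimed internal semidirect-product structure. Since $W^{\mathrm{aff}}_Q$ is normal in $W^{\mathrm{aff}}_P$ by \eqref{extgroup}, for $\gamma\in\mathrm{Stab}_\Gamma(b)$ and $h\in\mathrm{Stab}_{W^{\mathrm{aff}}_Q}(b)$ the conjugate $\gamma h\gamma^{-1}$ again lies in $W^{\mathrm{aff}}_Q$, and it fixes $b$ because $\gamma^{-1}\cdot b=b$, $h\cdot b=b$, $\gamma\cdot b=b$ give $(\gamma h\gamma^{-1})\cdot b=b$. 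Hence $\mathrm{Stab}_\Gamma(b)$ normalizes $\mathrm{Stab}_{W^{\mathrm{aff}}_Q}(b)$, and since $\mathrm{Stab}_{W^{\mathrm{aff}}_Q}(b)\cap\mathrm{Stab}_\Gamma(b)\subseteq W^{\mathrm{aff}}_Q\cap\Gamma=\{1\}$, the set product $\mathrm{Stab}_{W^{\mathrm{aff}}_Q}(b)\,\mathrm{Stab}_\Gamma(b)$ inside $W^{\mathrm{aff}}_P$ is precisely the internal semidirect product appearing on the right of \eqref{stabil}. The inclusion $\supseteq$ is then immediate: for $g=h\gamma$ with $h\in\mathrm{Stab}_{W^{\mathrm{aff}}_Q}(b)$ and $\gamma\in\mathrm{Stab}_\Gamma(b)$ one has $g\cdot b=h\cdot(\gamma\cdot b)=h\cdot b=b$.

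For the reverse inclusion I would take $g\in\mathrm{Stab}_{W^{\mathrm{aff}}_P}(b)$ and write $g=h\gamma$ as above, so that $g\cdot b=b$ becomes $h\cdot(\gamma\cdot b)=b$. By \eqref{Gamma0} the element $\gamma$ leaves $F_Q$ invariant, so $\gamma\cdot b\in F_Q$; since also $b\in F_Q$ and the two points lie in the same $W^{\mathrm{aff}}_Q$-orbit, the strict fundamental-domain property forces $\gamma\cdot b=b$, i.e. $\gamma\in\mathrm{Stab}_\Gamma(b)$. Substituting back gives $h\cdot b=h\cdot(\gamma\cdot b)=b$, so $h\in\mathrm{Stab}_{W^{\mathrm{aff}}_Q}(b)$, and therefore $g=h\gamma$ lies in the product of the two stabilizers. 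Uniqueness of the decomposition $g=h\gamma$ then identifies this product with the internal semidirect product of \eqref{stabil}, completing the argument.

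I expect the only genuinely delicate point — and the one where the hypothesis $b\in F_Q$ is used in an essential way — to be the uniqueness half of the fundamental-domain property of $F_Q$, namely that two points of $F_Q$ lying in the same $W^{\mathrm{aff}}_Q$-orbit must coincide; this is the standard strict-fundamental-domain property of the closed alcove (cf.\ \cite{Bour}), and it is exactly what prevents $\gamma$ from merely permuting distinct boundary points of $F_Q$ that are then carried back by some element of $W^{\mathrm{aff}}_Q$.
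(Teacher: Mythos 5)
Your proposal is correct and follows essentially the same route as the paper's proof: decompose $g=h\gamma$ via \eqref{extgroup}, use the $\Gamma$-invariance \eqref{Gamma0} of $F_Q$ and the fact that $F_Q$ meets each $W^{\mathrm{aff}}_Q$-orbit in exactly one point to force $\gamma\cdot b=b$ and hence $h\cdot b=b$. Your additional verification of the internal semidirect-product structure (normality of $\mathrm{Stab}_{W^{\mathrm{aff}}_Q}(b)$ and triviality of the intersection) is a slightly more explicit rendering of what the paper states in one line, but it is not a different argument.
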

\begin{proof}
The semidirect decomposition \eqref{extgroup}, where $W^{\mathrm{aff}}_Q$ is normal in $W^\mathrm{aff}_P$, directly guarantees that  $\mathrm{Stab}_{W^\mathrm{aff}_Q} (b)$ is a normal subgroup of $\mathrm{Stab}_{W^\mathrm{aff}_P}(b)$. Moreover, for any $b\in F_Q$ stabilized by both $w^{\mathrm{aff}} \in W^{\mathrm{aff}}_Q$ and $\gamma \in \Gamma$ is $w^{\mathrm{aff}} \gamma \in \mathrm{Stab}_{W^\mathrm{aff}_P}(b).$ 
Conversely, for any $w_P\in \mathrm{Stab}_{W^\mathrm{aff}_P} (b)$ exists from \eqref{extgroup} a unique decomposition $w_P=w_Q\gamma$ with $w_Q\in W^\mathrm{aff}_Q$ and $\gamma\in\Gamma$. 
Invariance \eqref{Gamma0} of $F_Q$ under the action of $\Gamma$ implies that $\gamma\cdot b \in F_Q$. Since the fundamental domain $F_Q$ contains only one point from each $W^\mathrm{aff}_Q-$orbit, the relation $$w_P\cdot b=w_Q\cdot(\gamma\cdot b)=b$$ forces $\gamma\cdot b=b$. 
\end{proof}

The isomorphism \eqref{izo} and relation \eqref{acgamM} imply for the stabilizers that 
\begin{equation}\label{Stgm}
\mathrm{Stab}_{\Gamma_M}(b)\simeq \mathrm{Stab}_\Gamma\left(\frac{b}{M}\right).
\end{equation}
The resulting counting formula for $h_{P,M}$ is deduced for $b\in MF_Q$ from relations \eqref{stabil} and \eqref{Stgm} as
\begin{equation*}
h_{P,M}(b)=\abs{\mathrm{Stab}_{W_Q^{\mathrm{aff}}}\left(\frac{b}{M}\right)}\cdot\abs{\mathrm{Stab}_{\Gamma_M}(b)}.	
\end{equation*}
The calculation procedure for $|\mathrm{Stab}_{W_Q^{\mathrm{aff}}}\left(b/M\right)|$ is described in \cite[\S 3.7]{HP}. The orders of stabilizers  $\mathrm{Stab}_{\Gamma_M}(b)$ are directly derived from the explicit form of permutations given by \eqref{akcegamma} in Table \ref{gamtab}.

Note that since $\gamma_M \in \Gamma_M$ and $\gamma\in \Gamma$ from \eqref{gamma} and \eqref{gammaM} differ only in the translation part, their retractions coincide
\begin{equation}\label{hatpsi}
\widehat\psi\left(\gamma_M\right)=\widehat\psi(\gamma).
\end{equation}
Consequently, the retraction of the stabilizers are also identical
\begin{equation}\label{stabgamM}
\widehat\psi\left(\mathrm{Stab}_{\Gamma_M}(b)\right)=\widehat\psi\left( \mathrm{Stab}_\Gamma\left(\frac{b}{M}\right)\right).
\end{equation}

\subsection{Fundamental domains}\

Significant results concerning the structure of the fundamental domain $F_P$ of the group $W^{\mathrm{aff}}_P$ are achieved in \cite{Komr}.  
Firstly, from the semidirect decomposition \eqref{extgroup} follows  that the fundamental domain $F_P$ coincides with a fundamental domain of the group $\Gamma$ acting on $F_Q$. Secondly, 
the interior of $F_P$ is determined as \cite{Komr}
$$\mathrm{int}(F_P)=\set{b\in \mathrm{int}(F_Q)}{\langle\eta+\alpha_i^\vee,b\rangle<1,\, i\in J}.$$
It is also asserted that the extended interior $\mathrm{ent}(F_P)$ of $F_P$ defined by
\begin{equation}\label{fptilde}
\mathrm{ent}(F_P)\equiv\set{b\in F_Q}{\langle\eta+\alpha_i^\vee,b\rangle<1,\, i\in J}
\end{equation}
forms a subset of $F_P$,  $\mathrm{ent}(F_P) \subset F_P$. 

In order to uniquely determine the remaining boundary points from $F_P\setminus \mathrm{ent}(F_P)$, note that the defining relation in \eqref{fptilde}
\begin{equation*}
b_i+b_1m^\vee_1+\dots+m^\vee_nb_n=\langle\eta+\alpha_i^\vee,b\rangle<1=b_0+m^\vee_1b_1+\dots+m^\vee_n b_n	
\end{equation*}
implies that $b_0>b_i,i\in J$. Taking into account explicit forms of $\Gamma-$permutations of $[b_0,\dots,b_n]$ from Table~\ref{gamtab} and the lexicographic ordering \eqref{lex}, the inequality $b_0>b_i,i\in J$ grants that the point $[b_0,\dots,b_n]$ is the lexicographically highest among the points lying in its $\Gamma-$orbit. Consequently, the fundamental domain $F_P$ is taken as such subset of $F_Q$ which contains the lexicographically highest point from each $\Gamma-$orbit of  $b\in F_Q$. This resulting form of the fundamental domain $F_P$, which contains exactly one point from each $W^{\mathrm{aff}}_P-$orbit of $\R^n,$ is summarized in the following theorem. 
\begin{thm}\label{thmlex} The set $F_P\subset \R^n$ defined by
\begin{equation}\label{lexorder}
F_P=\set{b\in F_Q}{b=\mathrm{max}_{>_{\mathrm{lex}} } \Gamma b}
\end{equation}
forms a fundamental domain of the extended dual affine Weyl group $W^{\mathrm{aff}}_P.$
\end{thm}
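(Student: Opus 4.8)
The plan is to reduce the statement, by means of the semidirect decomposition \eqref{extgroup}, to a purely order-theoretic observation about the finite group $\Gamma$ acting on $F_Q$. First I would record the reduction: since $W^{\mathrm{aff}}_Q$ is normal in $W^{\mathrm{aff}}_P$, since $F_Q$ meets each $W^{\mathrm{aff}}_Q$-orbit of $\R^n$ in exactly one point, and since $\Gamma$ maps $F_Q$ bijectively onto itself by \eqref{Gamma0}, a subset $D\subseteq F_Q$ is a fundamental domain of $W^{\mathrm{aff}}_P$ if and only if $D$ contains exactly one point from each $\Gamma$-orbit $\Gamma b$, $b\in F_Q$. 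Indeed, any $a\in\R^n$ admits $b\in F_Q$ with $b\in W^{\mathrm{aff}}_Q\cdot a\subseteq W^{\mathrm{aff}}_P\cdot a$, so every $W^{\mathrm{aff}}_P$-orbit meets $F_Q$; and if $b,b'\in F_Q$ satisfy $b'=w_Q\gamma\cdot b$ with $w_Q\in W^{\mathrm{aff}}_Q$ and $\gamma\in\Gamma$ according to \eqref{extgroup}, then $\gamma\cdot b\in F_Q$, and the defining property of $F_Q$ forces $b'=\gamma\cdot b$, so $b$ and $b'$ lie in one $\Gamma$-orbit.

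Next I would check that the prescription $b=\mathrm{max}_{>_{\mathrm{lex}}}\Gamma b$ selects exactly one point in each $\Gamma$-orbit. The assignment $b\mapsto[b_0,b_1,\dots,b_n]$ with $b_0=1-(m^\vee_1b_1+\dots+m^\vee_nb_n)$ is injective on $F_Q$ because $\omega_1,\dots,\omega_n$ form a basis of $\R^n$; hence the lexicographic relation \eqref{lex} is a strict total order on $F_Q$. Since $\abs{\Gamma}=c$ is finite, every orbit $\Gamma b$ is a finite nonempty subset of $F_Q$ and therefore possesses a unique $>_{\mathrm{lex}}$-maximal element; this element lies in $F_P$ as defined by \eqref{lexorder}, and no other point of $\Gamma b$ does. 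Combining this with the first step proves the theorem: each $W^{\mathrm{aff}}_P$-orbit meets $F_Q$, hence meets $F_P$ precisely in the lexicographic maximum of the associated $\Gamma$-orbit, and any two points of $F_P$ lying in one $W^{\mathrm{aff}}_P$-orbit are $\Gamma$-equivalent, hence both equal to that same maximum, so they coincide.

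I do not expect a serious obstacle. The substantive geometric input — that $\mathrm{ent}(F_P)$ from \eqref{fptilde} is contained in $F_P$ and is carved out by the inequalities $b_0>b_i$, $i\in J$, which by the explicit permutations in Table~\ref{gamtab} make $[b_0,\dots,b_n]$ the lexicographically highest point of its $\Gamma$-orbit — has already been assembled in the discussion preceding the statement and in \cite{Komr}, so what remains is only the order-theoretic bookkeeping above. The one place requiring care is verifying that \eqref{lex} really is a total order on $F_Q$, since this is exactly what guarantees that taking the lexicographic maximum over each finite $\Gamma$-orbit is a well-defined, single-valued choice of representative, and hence that the boundary points of $F_P\setminus\mathrm{ent}(F_P)$ are pinned down unambiguously as asserted.
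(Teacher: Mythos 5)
Your proof is correct and follows essentially the same route as the paper: reduce via the semidirect decomposition \eqref{extgroup} to the statement that a fundamental domain of $W^{\mathrm{aff}}_P$ is exactly a set of one representative per $\Gamma$-orbit in $F_Q$, and then observe that the lexicographic maximum — well defined because the Kac coordinates give a strict total order on $F_Q$ and each $\Gamma$-orbit is finite and nonempty — selects precisely one such representative. If anything, your write-up is more self-contained than the paper's, which folds this argument into the discussion preceding the theorem and invokes \cite{Komr} only to check that the choice is consistent with the known interior of $F_P$.
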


Four crucial subsets $F^\sigma_P$ of $F_P$ are for each sign homomorphism $\sigma$ defined by
\begin{equation}\label{FsiP}
F^\sigma_P\equiv\set{b\in F_P}{\sigma\circ\widehat\psi(\mathrm{Stab}_{W^{\mathrm{aff}}_P} (b))=\{1\}}.
\end{equation}
Recall also from \cite{Komr} that the stabilizer $\mathrm{Stab}_{W^{\mathrm{aff}}_P}(b)$ of any interior point $b \in \mathrm{int}(F_P)$ is trivial, therefore it holds that $\mathrm{int}(F_P)\subset F^\sigma_P$. The form of the sets  $F^\sigma_P$ is simplified in the following proposition.
\begin{tvr}
The sets $F^\sigma_P$, defined by \eqref{FsiP}, are of the form
\begin{equation}\label{fundP}
F_P^\sigma=\set{b\in F_P\cap F_Q^\sigma}{\sigma\circ\widehat\psi\left(\mathrm{Stab}_\Gamma(b)\right)=\{1\}}.
\end{equation}
\end{tvr}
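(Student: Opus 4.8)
The plan is to unfold the defining condition \eqref{FsiP} by means of the stabilizer decomposition \eqref{stabil}. Fix $b\in F_P$. Since $F_P\subset F_Q$ by Theorem~\ref{thmlex}, the identity \eqref{stabil} applies and gives
\[
\mathrm{Stab}_{W^{\mathrm{aff}}_P}(b)=\mathrm{Stab}_{W^{\mathrm{aff}}_Q}(b)\rtimes\mathrm{Stab}_\Gamma(b),
\]
with $\mathrm{Stab}_\Gamma(b)=\Gamma\cap\mathrm{Stab}_{W^{\mathrm{aff}}_P}(b)$. Because $\widehat\psi$ is a group homomorphism, the image $\widehat\psi\!\left(\mathrm{Stab}_{W^{\mathrm{aff}}_P}(b)\right)$ is the product of subgroups $\widehat\psi\!\left(\mathrm{Stab}_{W^{\mathrm{aff}}_Q}(b)\right)\cdot\widehat\psi\!\left(\mathrm{Stab}_\Gamma(b)\right)$ of $W$, and post-composing with $\sigma$ yields a homomorphism of $\mathrm{Stab}_{W^{\mathrm{aff}}_P}(b)$ into the abelian group $\{\pm1\}$.

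Next I would invoke the elementary fact that a homomorphism of a semidirect product $A\rtimes B$ into an abelian group is trivial if and only if it is trivial on $A$ and on $B$ separately: every element factors as $ab$ with $a\in A$, $b\in B$, so its image equals $\chi(a)\chi(b)$. Applied to $\sigma\circ\widehat\psi$ restricted to $\mathrm{Stab}_{W^{\mathrm{aff}}_P}(b)$, this shows that $\sigma\circ\widehat\psi\!\left(\mathrm{Stab}_{W^{\mathrm{aff}}_P}(b)\right)=\{1\}$ is equivalent to the conjunction of $\sigma\circ\widehat\psi\!\left(\mathrm{Stab}_{W^{\mathrm{aff}}_Q}(b)\right)=\{1\}$ and $\sigma\circ\widehat\psi\!\left(\mathrm{Stab}_\Gamma(b)\right)=\{1\}$. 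The first of these conditions is, by the definition of $F^\sigma_Q$ preceding \eqref{FQS} and since $b\in F_Q$ already, exactly the membership $b\in F^\sigma_Q$, i.e. $b\in F_P\cap F^\sigma_Q$. Substituting back into \eqref{FsiP} reproduces precisely the description \eqref{fundP}.

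The only step requiring a moment of care — the closest thing to an obstacle — is the reduction in the preceding paragraph: it genuinely uses the \emph{semidirect} structure supplied by \eqref{stabil} (not merely a set equality) together with the homomorphism property of $\widehat\psi$, so that the image of the full stabilizer really is the product of the images of its two pieces and the triviality of $\sigma\circ\widehat\psi$ on it decouples into the two separate conditions. Once this is in place, the argument is a direct substitution into the definitions \eqref{FsiP}, \eqref{FQS} and \eqref{fundP}, using only $F_P\subset F_Q$.
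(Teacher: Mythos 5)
Your proof is correct and follows essentially the same route as the paper: both directions rest on the semidirect decomposition \eqref{stabil} of $\mathrm{Stab}_{W^{\mathrm{aff}}_P}(b)$, with the forward implication using that the two factors are subgroups of the full stabilizer and the converse using that they generate it, so that triviality of the character $\sigma\circ\widehat\psi$ decouples into the two separate conditions. Your explicit statement of the ``trivial on $A\rtimes B$ iff trivial on $A$ and on $B$'' fact merely makes the paper's terser argument symmetric.
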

\begin{proof}
Since both stabilizers $\mathrm{Stab}_{W^{\mathrm{aff}}_Q} (b)$ and $ \mathrm{Stab}_{\Gamma} (b) $ are subgroups of  $\mathrm{Stab}_{W^{\mathrm{aff}}_P} (b)$, it holds for any point $b\in F^\sigma_P$ that 
\begin{equation}\label{bstab}
\sigma\circ\widehat\psi(\mathrm{Stab}_{W^{\mathrm{aff}}_Q} (b))=\sigma\circ\widehat\psi\left(\mathrm{Stab}_\Gamma(b)\right)=\{1\},
\end{equation}
and thus $b\in F^\sigma_Q$. Conversely, for any $b \in F_P$ follows from \eqref{bstab} and the semidirect decomposition \eqref{stabil} that $\sigma\circ\widehat\psi(\mathrm{Stab}_{W^{\mathrm{aff}}_P} (b))=\{1\}$ and thus $b\in F^\sigma_P$.  
\end{proof}

Theorem \ref{thmlex} and relation \eqref{acgamM} guarantee that the magnified domain $MF_P$ is a fundamental domain of the action of $\Gamma_M$ on $MF_Q$.
 Relations \eqref{fundP} and \eqref{stabgamM} allow to express the magnified domains $MF_P^\sigma$,
\begin{equation}\label{MFP}
MF_P^\sigma=\setb{b\in M(F_P\cap F_Q^\sigma)}{\sigma\circ\widehat\psi\left(\mathrm{Stab}_{\Gamma_M}(b)\right)=\{1\}}.
\end{equation}
\begin{example}\label{exA3}
For the simple Lie algebra $A_3$, the fundamental domain of the affine Weyl group $F_Q$ is according to \eqref{FQ} of the form
\begin{equation*}
F_Q=\set{b_1\omega_1+\dots+b_3\omega_3}{b_0+ b_1+b_2+b_3=1,b_i\geq0,i=0,1,2,3},
\end{equation*}  
and the fundamental domain of the extended dual affine Weyl group $F_P\subset F_Q$ is by \eqref{lexorder} of the explicit form 
 \begin{equation*}
\begin{aligned}
F_P=&\set{b\in F_Q}{b_0>b_1,b_0>b_2,b_0>b_3}\cup\set{b\in F_Q}{b_0=b_1> b_2,b_0>b_3}\\&\cup\set{b\in F_Q}{b_0=b_2>b_1\geq b_3}
\cup\set{b\in F_Q}{b_0=b_1=b_2\geq b_3}.
\end{aligned}
\end{equation*}
The sets \eqref{FQS} coincide for the identity sign homomorphism with the original sets $F_Q$ and $F_P$, 
$$F_Q^\id=F_Q,\q  F_P^\id=F_P.$$
For the sign homomorphism $\si^e$, the set $F_Q^{\sigma^e}$ is by \eqref{FQS} and \eqref{fun1} of the form
\begin{equation*}
F_Q^{\sigma^e}=\set{b^{\sigma^e}_1\omega_1+b^{\sigma^e}_2\omega_2+b^{\sigma^e}_3\omega_3}{b^{\sigma^e}_0+b^{\sigma^e}_1+b^{\sigma^e}_2+b^{\sigma^e}_3=1,b^{\sigma^e}_i>0,i=0,1,2,3},
\end{equation*}
and relation \eqref{fundP} together with Table \ref{gamtab} implies the following explicit description of  $F_P^{\sigma^e},$
\begin{equation*}
\begin{aligned}
F_P^{\sigma^e}=&\set{b^{\sigma^e}\in F_Q^{\sigma^e}}{b^{\sigma^e}_0>b^{\sigma^e}_1,b^{\sigma^e}_0>b^{\sigma^e}_2,b^{\sigma^e}_0>b^{\sigma^e}_3}\cup\set{b^{\sigma^e}\in F^{\sigma^e}_Q}{b^{\sigma^e}_0=b^{\sigma^e}_1> b^{\sigma^e}_2,b^{\sigma^e}_0>b^{\sigma^e}_3}\\
&\cup\set{b^{\sigma^e}\in F^{\sigma^e}_Q}{b^{\sigma^e}_0=b^{\sigma^e}_2>b^{\sigma^e}_1\geq b^{\sigma^e}_3}\cup\set{b^{\sigma^e}\in F^{\sigma^e}_Q}{b^{\sigma^e}_0=b^{\sigma^e}_1=b^{\sigma^e}_2> b^{\sigma^e}_3}.
\end{aligned}
\end{equation*}
The domains $F_Q$, $F_P$, $F_Q^{\sigma^e}$, $F_P^{\sigma^e}$ and $WF_P$ of $A_3$ are depicted in Figure~\ref{obrFPA3}.
\begin{figure}
\resizebox{5cm}{!}{\input{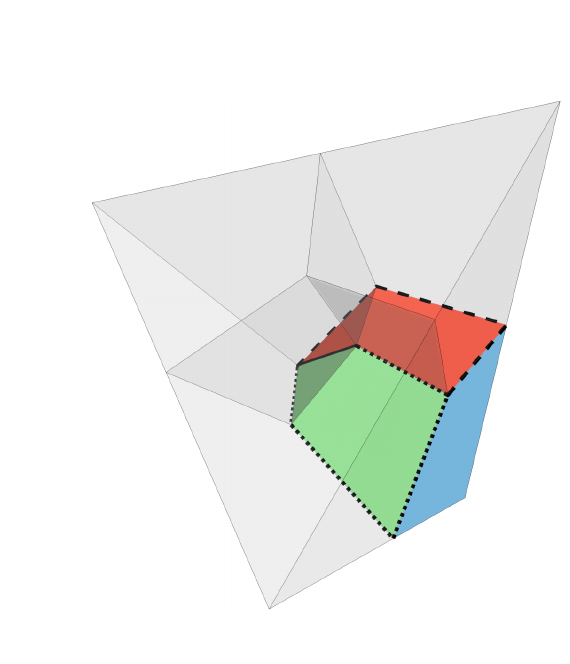_t}}\hspace{1.8cm}\resizebox{5cm}{!}{\input{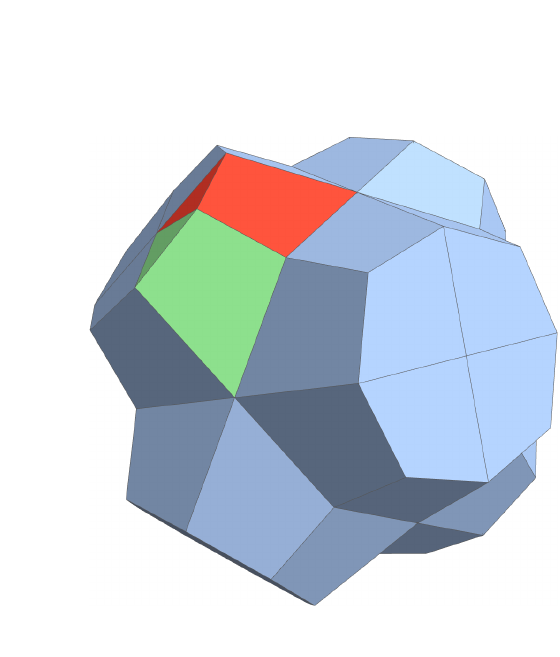_t}}
\caption{(a) The domain $F_P$ of $A_3$, bounded by six kite-shaped surfaces, contains its entire boundary except for the points lying on the green surfaces and the dotted lines. The domain $F_P^{\sigma^e}$ is obtained from $F_P$ by omitting the blue areas together with the dashed lines and the point in the intersection of the top three kite-shaped surfaces. The enveloping tetrahedron represents the domain $F_Q$, tiled by $\Gamma-$images of $F_P$. (b) The polyhedron is formed by images of the Weyl group $W$ acting on $F_P$.}
\label{obrFPA3}
\end{figure}

Setting $x=x_1\omega_1+x_2\omega_2+x_3\omega_3,x_1,x_2,x_3\in\R^{\geq0}$, another geometric shape for the choice of the fundamental domain of the extended dual affine Weyl group exists. The domain $\mathcal{B}$, bounded by the planes orthogonal to $\omega_i$ and passing through $\omega_i/2$,
\begin{equation*}
\begin{aligned}
\mathcal{B}=&\set{x\in\R^3}{3x_1+2x_2+x_3<\tfrac{3}{2},x_1+2x_2+x_3<1,x_1+2x_2+3x_3<\tfrac32}\cup\\
&\set{x\in\R^3}{3x_1+2x_2+x_3=\tfrac{3}{2},x_1+2x_2+x_3\leq1,x_1+2x_2+3x_3<\tfrac32}\cup\\
&\set{x\in\R^3}{3x_1+2x_2+x_3<\tfrac{3}{2},x_1+2x_2+x_3=1,x_1+2x_2+3x_3<\tfrac32,x_1\geq x_3}\cup\\
&\set{x\in\R^3}{x_1=x_2=x_3=1/4},
\end{aligned}
\end{equation*}
is also a fundamental domain of $W_P^{\mathrm{aff}}$. The closure of the domain $W\mathcal{B}$ coincides, up to a factor $2\pi$, with the Brillouin zone \cite{Mich} of the dual root lattice of $A_3$. The domains $\mathcal{B}$ and $W\mathcal{B}$ are depicted in Figure \ref{mic}.
\begin{figure}
\resizebox{5cm}{!}{\input{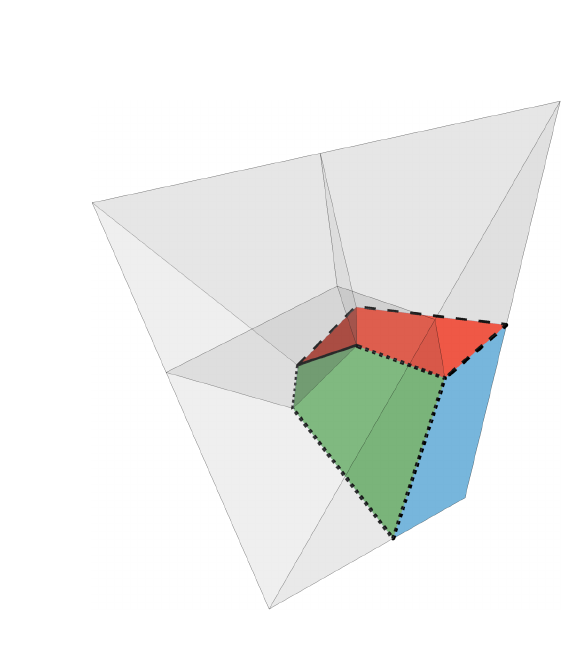_t}}\hspace{1.8cm}\resizebox{5cm}{!}{\input{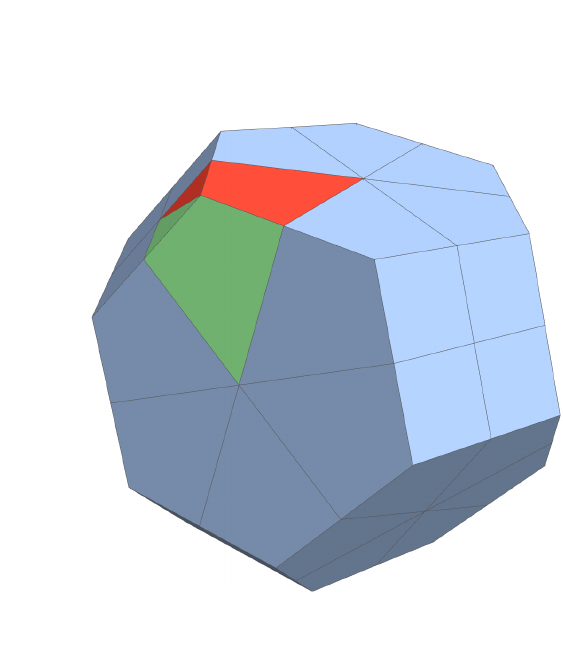_t}}
\caption{(a) The domain $\mathcal{B}$ of $A_3$, bounded by one square-shaped and  five kite-shaped surfaces, contains its entire boundary except for the points on the green areas and the dotted lines. The enveloping tetrahedron represents the domain $F_Q$, tiled by $\Gamma-$images of $\mathcal{B}$. (b) The polyhedron is formed by the images of the Weyl group $W$ acting on $\mathcal{B}$.}
\label{mic}
\end{figure}
\end{example}

\subsection{$(\mathcal{R},\sigma)-$invariant polynomials}\label{invar}\

The vector space $\C[x]$ comprises polynomials of $n+1$ variables $ x\equiv [x_0,\dots,x_n]^T$ over $\C$. 
The extended $m-$degree $\mathrm{edg}_m\,x^\lambda$ of a monomial $x^\lambda \equiv x_0^{\lambda_0}x_1^{\lambda_1}\dots x_n^{\lambda_n}$, $\lambda\equiv[\la_0,\dots,\la_n]$ is defined as $$\mathrm{edg}_m\,x^\lambda \equiv \lambda_0+m_1^\vee\lambda_1+\dots+m_n^\vee\lambda_n.$$
The extended $m-$degree of any polynomial $f\in\C[x]$ is then the maximum extended $m-$degree of homogeneous parts of $f$. Recall from \cite{HP} that the finite sets $\Lambda_M \subset P $ consist for any $M \in \N$ of the weights contained in the set $MF_Q$, 
\begin{equation}\label{lambda0}
\Lambda_M=P \cap MF_Q,
\end{equation}
and is of the explicit form
\begin{equation}\label{lambda}
\Lambda_M=\setm{\la_1\omega_1+\dots+\lambda_n\omega_n}{\la_0+m_1^\vee\la_1+\dots+m_n^\vee\la_n=M,\,\la_i\in\Z^{\geq0},i=0,\dots,n}. 
\end{equation}
Identifying each element $\la$ of $\Lambda_M$ with its Kac coordinates $[\la_0,\dots,\la_n]$, results to conclusion that $\mathrm{edg}_m\,x^\lambda=M$ if and only if $\lambda\in\Lambda_M$.
All linear combinations of monomials of extended $m-$degree equal to $M$ form a vector subspace of $\C[x]$ denoted by $\Pi_M$,
$$\Pi_M\equiv\setm{\sum_{\lambda \in \Lambda_M} c_\lambda x^\lambda}{ c_\lambda\in\C, \, \lambda\in\Lambda_M} .$$
¨
The standard action \cite{invtheory,H2} of an operator $\mathbb{G}\in \mathrm{GL}_{n+1}(\C)$ on $\C[x]$ is given by
\begin{equation}\label{matakce}
\mathbb{G}\cdot f(x)\equiv f(\mathbb{G}^{-1}x),\q f\in\C[x].
\end{equation}
For any representation $\mathcal{R}: \Gamma \map \mathrm{GL}_{n+1}(\C)$ of the abelian group \eqref{gamma} and a sign homomorphism $\si$, a polynomial $f\in\C[x]$ is called $(\mathcal{R},\sigma)-$invariant if it for all $\gamma\in\Gamma$ satisfies 
\begin{equation}\label{sigakce}
\mathcal{R}(\gamma)\cdot f=\sigma\circ\widehat\psi(\gamma)\,f.
\end{equation} 
A vector subspace $\Pi_M^{\mathcal{R},\sigma} \subset \Pi_M$ contains all $(\mathcal{R},\sigma)-$invariant polynomials from $\Pi_M$,
$$\Pi_M^{\mathcal{R},\sigma}\equiv \setb{f\in\Pi_M}{\mathcal{R}(\gamma)\cdot f=\sigma\circ\widehat\psi(\gamma)\,f, \,\, \gamma\in \Gamma}.$$
\begin{tvr}\label{iso}
Let $\mathcal{R}_1,\mathcal{R}_2$ be representations of $\Gamma$ to  $\mathrm{GL}_{n+1}(\C)$ for which there exists $\mathbb{P}\in\mathrm{GL}_{n+1}(\C)$ such that 
\begin{enumerate}[(i)]
\item $\mathcal{R}_2(\gamma)=\mathbb{P}^{-1}\mathcal{R}_1(\gamma)\mathbb{P}$ for all $\gamma\in \Gamma$, i.e. $\mathcal{R}_1$ and $\mathcal{R}_2$ are equivalent,
\item $\mathbb{P}\cdot f\in\Pi_M$ $\Leftrightarrow$ $f\in \Pi_M$.
\end{enumerate}
Then the spaces $\Pi^{\mathcal{R}_1,\sigma}_M$ and $\Pi^{\mathcal{R}_2,\sigma}_M$ are for any sign homomorphism $\sigma$ isomorphic, 
$$\Pi^{\mathcal{R}_1,\sigma}_M\simeq\Pi^{\mathcal{R}_2,\sigma}_M.$$ 
\end{tvr}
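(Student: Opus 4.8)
The plan is to show that the linear map $f\mapsto \mathbb{P}\cdot f$ restricts to a vector-space isomorphism $\Pi_M^{\mathcal{R}_2,\sigma}\to\Pi_M^{\mathcal{R}_1,\sigma}$. First I would observe that $f\mapsto \mathbb{P}\cdot f$ is a linear automorphism of $\C[x]$ by \eqref{matakce}, since $\mathbb{P}\in\mathrm{GL}_{n+1}(\C)$; its inverse is $g\mapsto\mathbb{P}^{-1}\cdot g$. Assumption (ii) says precisely that this automorphism maps $\Pi_M$ bijectively onto $\Pi_M$, so it restricts to a linear automorphism of $\Pi_M$. It therefore remains only to check that it carries the $(\mathcal{R}_2,\sigma)$-invariant subspace onto the $(\mathcal{R}_1,\sigma)$-invariant subspace.

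For that, let $f\in\Pi_M^{\mathcal{R}_2,\sigma}$ and set $g=\mathbb{P}\cdot f\in\Pi_M$. For every $\gamma\in\Gamma$ I compute, using that the action \eqref{matakce} is a left action of $\mathrm{GL}_{n+1}(\C)$ on $\C[x]$ (so $(\mathbb{A}\mathbb{B})\cdot f=\mathbb{A}\cdot(\mathbb{B}\cdot f)$) and the equivalence (i) in the form $\mathcal{R}_1(\gamma)\mathbb{P}=\mathbb{P}\,\mathcal{R}_2(\gamma)$:
\begin{equation*}
\mathcal{R}_1(\gamma)\cdot g=\mathcal{R}_1(\gamma)\cdot(\mathbb{P}\cdot f)=\bigl(\mathcal{R}_1(\gamma)\mathbb{P}\bigr)\cdot f=\bigl(\mathbb{P}\,\mathcal{R}_2(\gamma)\bigr)\cdot f=\mathbb{P}\cdot\bigl(\mathcal{R}_2(\gamma)\cdot f\bigr)=\mathbb{P}\cdot\bigl(\sigma\circ\widehat\psi(\gamma)\,f\bigr)=\sigma\circ\widehat\psi(\gamma)\,(\mathbb{P}\cdot f)=\sigma\circ\widehat\psi(\gamma)\,g,
\end{equation*}
where the fifth equality uses $f\in\Pi_M^{\mathcal{R}_2,\sigma}$ and the sixth uses linearity of the action of $\mathbb{P}$. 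Hence $g\in\Pi_M^{\mathcal{R}_1,\sigma}$, so $\mathbb{P}\cdot(\,\cdot\,)$ maps $\Pi_M^{\mathcal{R}_2,\sigma}$ into $\Pi_M^{\mathcal{R}_1,\sigma}$.

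The reverse inclusion is obtained by the symmetric argument applied to $\mathbb{P}^{-1}$: from (i), $\mathcal{R}_2(\gamma)=\mathbb{P}^{-1}\mathcal{R}_1(\gamma)\mathbb{P}$, and from (ii), $g\in\Pi_M\Leftrightarrow\mathbb{P}^{-1}\cdot g\in\Pi_M$, so the same chain of equalities with the roles of $\mathcal{R}_1,\mathcal{R}_2$ and $\mathbb{P},\mathbb{P}^{-1}$ interchanged shows $\mathbb{P}^{-1}\cdot(\,\cdot\,)$ maps $\Pi_M^{\mathcal{R}_1,\sigma}$ into $\Pi_M^{\mathcal{R}_2,\sigma}$. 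Since $\mathbb{P}\cdot(\,\cdot\,)$ and $\mathbb{P}^{-1}\cdot(\,\cdot\,)$ are mutually inverse, they furnish the desired isomorphism $\Pi_M^{\mathcal{R}_1,\sigma}\simeq\Pi_M^{\mathcal{R}_2,\sigma}$. There is no real obstacle here; the only point requiring a moment's care is bookkeeping the direction of conjugation in (i) against the contravariant-looking definition \eqref{matakce}, namely making sure that \eqref{matakce} is genuinely a left action so that the intertwining relation $\mathcal{R}_1(\gamma)\mathbb{P}=\mathbb{P}\,\mathcal{R}_2(\gamma)$ passes through the dot-action in the order used above.
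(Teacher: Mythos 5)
Your proof is correct and follows essentially the same route as the paper: both send $f\in\Pi_M^{\mathcal{R}_2,\sigma}$ to $\mathbb{P}\cdot f$, verify via the intertwining relation $\mathcal{R}_1(\gamma)\mathbb{P}=\mathbb{P}\,\mathcal{R}_2(\gamma)$ that the image is $(\mathcal{R}_1,\sigma)$-invariant (the paper writes out the arguments $f(\mathbb{P}^{-1}\mathcal{R}_1(\gamma)^{-1}x)=f(\mathcal{R}_2(\gamma)^{-1}\mathbb{P}^{-1}x)$ where you invoke the left-action property abstractly), and conclude by exhibiting $\mathbb{P}^{-1}\cdot(\,\cdot\,)$ as the inverse. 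Your care about \eqref{matakce} being a genuine left action is exactly the right point to check, and it holds.
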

\begin{proof}
Assumption (ii) implies for any $f\in\Pi_M^{\mathcal{R}_2,\sigma}$ that $\mathbb{P}\cdot f \in \Pi_M$.  Definitions \eqref{matakce} and \eqref{sigakce} and assumption (i) yield $(\mathcal{R}_1,\sigma)-$invariance of $\mathbb{P}\cdot f,$
$$\mathcal{R}_1(\gamma)\cdot\mathbb{P}\cdot f(x)=f(\mathbb{P}^{-1}\mathcal{R}_1(\gamma)^{-1}x)=f(\mathcal{R}_2(\gamma)^{-1}
\mathbb{P}^{-1}x)=\mathbb{P}\cdot \mathcal{R}_2(\gamma)\cdot f(x)=\sigma\circ\widehat\psi(\gamma)\,\mathbb{P}\cdot f(x)$$
and therefore $\mathbb{P}\cdot f\in \Pi_M^{\mathcal{R}_1,\sigma}$. Conversely, it holds that if $f\in\Pi_M^{\mathcal{R}_1,\sigma}$ then $\mathbb{P}^{-1}\cdot f\in \Pi_M^{\mathcal{R}_2,\sigma}$. The map $\Pi_M^{\mathcal{R}_2,\sigma}\ni f \mapsto \mathbb{P}\cdot f\in\Pi_M^{\mathcal{R}_1,\sigma}$ is linear and its inverse is the map $\Pi_M^{\mathcal{R}_1,\sigma}\ni f\mapsto \mathbb{P}^{-1}\cdot f\in\Pi_M^{\mathcal{R}_2,\sigma}.$
\end{proof}

\begin{tvr}\label{cond}
Let $\mathbb{P}\in\mathrm{GL}_{n+1}(\C)$ be such that 
$(\mathbb{P}^{-1}x)_i $ is a linear combination of monomials of the extended $m-$degree $m_i^\vee$ for every $i\in \{0,1,\dots,n\}$.
Then $f\in\Pi_M$ if and only if $ \mathbb{P}\cdot f\in\Pi_M$.
\end{tvr}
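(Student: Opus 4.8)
The plan is to realise $\Pi_M$ as the weighted-homogeneous component of degree $M$ in $\C[x]$, where $x_i$ is assigned the weight $m_i^\vee$ (with $m_0^\vee=1$), and then to reinterpret the hypothesis on $\mathbb P$ as the statement that $\mathbb P$ commutes with the one-parameter group implementing this grading. First I would introduce, for each $t\in\C^*$, the diagonal operator $D_t\equiv\mathrm{diag}(t^{m_0^\vee},t^{m_1^\vee},\dots,t^{m_n^\vee})\in\mathrm{GL}_{n+1}(\C)$. A direct computation from \eqref{matakce} gives $D_t\cdot x^\lambda=(D_t^{-1}x)^\lambda=t^{-\mathrm{edg}_m\,x^\lambda}\,x^\lambda$, hence $D_t\cdot f=\sum_\lambda c_\lambda\,t^{-\mathrm{edg}_m\,x^\lambda}x^\lambda$ for $f=\sum_\lambda c_\lambda x^\lambda$. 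Comparing coefficients of the linearly independent monomials (and using that extended $m$-degrees are non-negative integers) then yields the characterisation
\[
f\in\Pi_M\quad\Longleftrightarrow\quad D_t\cdot f=t^{-M}f\ \ \text{for all }t\in\C^*.
\]

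Next I would unpack the hypothesis on $\mathbb P$. Since $\mathbb P^{-1}$ is a constant matrix, $(\mathbb P^{-1}x)_i=\sum_j(\mathbb P^{-1})_{ij}x_j$ is a linear form whose monomial $x_j$ has extended $m$-degree $m_j^\vee$; requiring this form to consist only of monomials of extended $m$-degree $m_i^\vee$ forces $(\mathbb P^{-1})_{ij}=0$ whenever $m_i^\vee\ne m_j^\vee$. Consequently the $(i,j)$ entries of $\mathbb P^{-1}D_t$ and $D_t\mathbb P^{-1}$ --- namely $(\mathbb P^{-1})_{ij}t^{m_j^\vee}$ and $t^{m_i^\vee}(\mathbb P^{-1})_{ij}$ --- coincide for every $t$, so $\mathbb P^{-1}D_t=D_t\mathbb P^{-1}$, and hence also $\mathbb P D_t=D_t\mathbb P$, for all $t\in\C^*$. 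Equivalently, coordinate-free: substituting $x\mapsto D_tx$ rescales each component $(\mathbb P^{-1}x)_i$ by $t^{m_i^\vee}$ precisely because that component is weighted-homogeneous of degree $m_i^\vee$.

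Finally I would combine the two observations using that \eqref{matakce} is a left action, $(\mathbb G_1\mathbb G_2)\cdot f=\mathbb G_1\cdot(\mathbb G_2\cdot f)$. From $\mathbb P D_t=D_t\mathbb P$ one gets
\[
D_t\cdot(\mathbb P\cdot f)=(D_t\mathbb P)\cdot f=(\mathbb P D_t)\cdot f=\mathbb P\cdot(D_t\cdot f).
\]
Therefore $\mathbb P\cdot f\in\Pi_M$ is equivalent, by the characterisation above, to $\mathbb P\cdot(D_t\cdot f)=t^{-M}\,\mathbb P\cdot f=\mathbb P\cdot(t^{-M}f)$ for all $t\in\C^*$; applying the invertible operator $\mathbb P^{-1}$ this is in turn equivalent to $D_t\cdot f=t^{-M}f$ for all $t\in\C^*$, i.e. to $f\in\Pi_M$. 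This delivers both implications simultaneously.

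The only step that actually carries weight is the passage from the weighted-homogeneity hypothesis to the commutation relation $\mathbb P D_t=D_t\mathbb P$; this is where the assumption on $\mathbb P$ is used in full. One must take care not to over-read the hypothesis, since for a genuine matrix $\mathbb P$ it amounts to $\mathbb P^{-1}$ (and $\mathbb P$) being block-diagonal with respect to the partition of $\{0,\dots,n\}$ into classes of equal dual marks. Everything else is routine bookkeeping with the grading torus $\{D_t\}$.
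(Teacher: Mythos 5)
Your proof is correct. The key step — reading the hypothesis as forcing $(\mathbb{P}^{-1})_{ij}=0$ whenever $m_i^\vee\neq m_j^\vee$, i.e.\ block-diagonality of $\mathbb{P}^{-1}$ (and hence of $\mathbb{P}$) with respect to the partition of indices by equal dual marks — is exactly the observation the paper uses, but only for its \emph{converse} direction. Where you diverge is in the mechanics: the paper proves the forward implication by a direct computation, expanding $\mathbb{P}\cdot x^\lambda=\prod_i(\mathbb{P}^{-1}x)_i^{\lambda_i}$ via the multinomial theorem and checking that each factor is weighted-homogeneous of degree $\lambda_i m_i^\vee$, and then handles the converse separately by the block-diagonality remark. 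You instead characterize $\Pi_M$ as the $t^{-M}$-eigenspace of the grading operators $D_t=\mathrm{diag}(t^{m_0^\vee},\dots,t^{m_n^\vee})$ acting via \eqref{matakce}, show that the hypothesis is equivalent to $\mathbb{P}D_t=D_t\mathbb{P}$, and obtain both implications simultaneously from $D_t\cdot(\mathbb{P}\cdot f)=\mathbb{P}\cdot(D_t\cdot f)$ together with invertibility of $f\mapsto\mathbb{P}\cdot f$. Your route is cleaner and more conceptual (no multinomial bookkeeping, manifest symmetry between the two directions), at the mild cost of introducing the auxiliary torus $\{D_t\}$; the paper's computation is more pedestrian but also shows en route the slightly finer fact that $\mathbb{P}\cdot x^\lambda$ is itself a nonzero element of $\Pi_M$ for each monomial. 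All the individual verifications in your argument (that \eqref{matakce} is a left action, that $D_t\cdot x^\lambda=t^{-\mathrm{edg}_m x^\lambda}x^\lambda$, and the coefficient comparison giving the eigenspace characterization) check out.
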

\begin{proof}
For any monomial $x^\lambda\in\Pi_M$, the assumption implies that the factors of $\mathbb{P}\cdot x^\lambda$,
\begin{equation}\label{Pprod}
\mathbb{P}\cdot x^\lambda=(\mathbb{P}^{-1}x)_0^{\la_0}(\mathbb{P}^{-1}x)_1^{\la_1}\dots(\mathbb{P}^{-1}x)_n^{\la_n}, 	
\end{equation}
satisfy for every $i\in \{0,1,\dots,n\}$ that
\begin{equation}\label{Pf}
(\mathbb{P}^{-1}x)_i^{\lambda_i}=\left(\sum_{k=1}^l p_{j_k} x_{j_k}\right)^{\lambda_i},\q p_{j_k}\in\C\setminus \{0\},\q \mathrm{edg}_m\, x_{j_k}=m_i^\vee,\, k=1,\dots,l.	
\end{equation}
Multinomial expansion of \eqref{Pf} provides the relation
\begin{equation*}
(\mathbb{P}^{-1}x)_i^{\lambda_i} = \sum_{\begin{smallmatrix}r_1, r_2, \dots, r_l \geq 0 \\ r_1 + r_2+ \dots + r_l = \la_i  \end{smallmatrix}}\left( \frac{\la_i !}{r_1! r_2!\dots r_l !}
p_{j_1}^{r_1}p_{j_2}^{r_2}\dots p_{j_l}^{r_l} \right) x_{j_1}^{r_1}x_{j_2}^{r_2}\dots x_{j_l}^{r_l},
\end{equation*}
which guarantees that the polynomial $(\mathbb{P}^{-1}x)_i^{\lambda_i}$ is a linear combination of monomials of extended $m-$degree $\lambda_i m_i^\vee$ and thus
$$\mathrm{edg}_m\,(\mathbb{P}^{-1}x)_i^{\lambda_i}=\lambda_i m_i^\vee .$$
Since $\mathbb{P}\cdot x^\lambda$ is a product of non-zero polynomials  $(\mathbb{P}^{-1}x)_i^{\lambda_i}$ in \eqref{Pprod}, it is also non-zero and a linear combination of monomials of extended $m-$degree $M$,
$$\mathrm{edg}_m\,\mathbb{P}\cdot x^\lambda=\mathrm{edg}_m\,x^\lambda.$$
The resulting polynomial $\mathbb{P}\cdot f$ is thus a linear combination of monomials of extended $m-$degree $M$ and therefore $ \mathbb{P}\cdot f\in\Pi_M$.
Conversely, the assumption on $\mathbb{P}$ is equivalent to $\mathbb{P}^{-1}$ being a block diagonal matrix in a suitable permutation of the standard basis with its blocks determined by the same values of~$m_i^\vee$.  Taking into account that the matrix $\mathbb{P}$ retains the same form, $f\in \Pi_M$ forces also $\mathbb{P}^{-1}\cdot f \in\Pi_M$.  
\end{proof}

The action \eqref{akcegamma} of $\Gamma$ on $F_Q$ via permutations of the Kac coordinates $[b_0,\dots,b_n]$ induces a faithful representation $\mathcal{A}: \Gamma \map \mathrm{GL}_{n+1}(\C)$ by assigning each element $\gamma\in\Gamma$ its permutation matrix
$\mathcal{A} (\gamma)\in \mathrm{GL}_{n+1}(\C)$, 
\begin{equation}\label{permdef}
\mathcal{A} (\gamma)[b_0,\dots,b_n]^T=[b_{\pi_\gamma(0)},\dots,b_{\pi_\gamma(n)}]^T.	
\end{equation}
Similarly, the action of $\Gamma_M$ on $MF_Q$, described by \eqref{akcegamM}, assigns to each element $\gamma_M\in\Gamma_M$ its permutation matrix corresponding to $\gamma\in\Gamma$, provided by isomorphism \eqref{izo}. Consequently, the action of $\Gamma_M$ also induces, independently on $M$, the identical representation matrices $\mathcal{A}(\gamma)$,
\begin{equation}\label{matice}
\mathcal{A}(\gamma)=\mathcal{A}(\gamma_M).
\end{equation}

Since the finite group $\Gamma$ is abelian, the commuting diagonalizable matrices $\mathcal{A}(\gamma)$ can be simultaneously diagonalized, i.e. there exists a unitary matrix $\mathbb{P}\in\mathrm{GL}_{n+1}(\C)$ such that $\mathbb{P}^{\dagger}\mathcal{A}(\gamma)\mathbb{P}$ is a diagonal matrix for each $\gamma\in\Gamma$. Therefore, the diagonal representation $\mathcal{D}: \Gamma \map \mathrm{GL}_{n+1}(\C)$, equivalent to $\mathcal{A}$, is given by 
\begin{equation}\label{diagdef}
\mathcal{D}(\gamma)=\mathbb{P}^{\dagger}\mathcal{A}(\gamma)\mathbb{P}.	
\end{equation}
The diagonal matrices $\mathcal{D}(\gamma)$ are for the generators of the non-trivial groups $\Gamma$ of the following form 
\begin{equation}\label{diagonal}
\begin{alignedat}{2}
&A_n:&\q & \mathcal{D}(\gamma_1)=\mathrm{diag}\left(1,e^{\frac{2\pi\i}{n+1}},e^{\frac{4\pi\i}{n+1}},\dots,e^{\frac{2n\pi\i}{n+1}}\right),\\
&B_{2k+1}:&\q &\mathcal{D}(\gamma_{2k+1})=\mathrm{diag}\left(1,-1,1,-1,\dots,1,-1\right),\\
&B_{2k}:&\q&\mathcal{D}(\gamma_{2k})=\mathrm{diag}\left(1,1,1,-1,1,-1\dots,1,-1,-1\right),\\
&C_n:&\q &\mathcal{D}(\gamma_1)=\mathrm{diag}\left(1,-1,1,1,\dots,1,1\right),\\
&D_{2k+1}:&\q &\mathcal{D}(\gamma_{2k+1})=\mathrm{diag}\left(1,-1,1,-1,\dots,1,-1,\i,-\i\right),\\
&D_{2k}:&\q & \mathcal{D}(\gamma_1)=\mathrm{diag}\left(1,1,\dots,1,-1,-1\right),\\
&&\q &\mathcal{D}(\gamma_{2k})=\mathrm{diag}\left(1,-1,1,-1,\dots,1,-1,1\right),\\
&E_6:&\q &\mathcal{D}(\gamma_5)=\mathrm{diag}\left(1,e^{\frac{2\pi\i}{3}},e^{\frac{4\pi\i}{3}},1,e^{\frac{2\pi\i}{3}},e^{\frac{4\pi\i}{3}},1\right),\\
&E_7:&\q & \mathcal{D}(\gamma_6)=\mathrm{diag}\left(1,1,1,1,-1,1,-1,-1\right)
\end{alignedat}
\end{equation}
and the unitary conjugation matrices $\mathbb{P}$ are listed in Table~\ref{matP}.

\begin{thm}\label{isoAD}
Let $\mathcal{A}$ be the permutation representation \eqref{permdef} of $\Gamma$ and $\mathcal{D}$ the corresponding diagonal representation \eqref{diagdef}. Then the spaces $\Pi^{\mathcal{A},\sigma}_M$ and $\Pi^{\mathcal{D},\sigma}_M$ are for any sign homomorphism $\sigma$ isomorphic, 
\begin{equation*}
\Pi^{\mathcal{A},\sigma}_M\simeq \Pi^{\mathcal{D},\sigma}_M.
\end{equation*}
\end{thm}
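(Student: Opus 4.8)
The plan is to derive Theorem~\ref{isoAD} directly from Proposition~\ref{iso}, applied with $\mathcal{R}_1=\mathcal{A}$, $\mathcal{R}_2=\mathcal{D}$, and $\mathbb{P}$ the diagonalizing unitary of \eqref{diagdef}. Assumption~(i) of Proposition~\ref{iso} is then literally the defining relation \eqref{diagdef}, so the entire argument reduces to checking assumption~(ii), namely that $\mathbb{P}\cdot f\in\Pi_M\Leftrightarrow f\in\Pi_M$. For this I would invoke Proposition~\ref{cond}: it is enough to show that each coordinate $(\mathbb{P}^{-1}x)_i=(\mathbb{P}^{\dagger}x)_i$ is a linear combination of monomials of extended $m$-degree $m_i^\vee$, i.e.\ that $\mathbb{P}$ mixes only those variables $x_j$ whose index $j$ carries the same dual mark as $i$.

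The structural input needed is that the permutations $\pi_\gamma$ of Table~\ref{gamtab} preserve the dual marks,
\[
m^\vee_{\pi_\gamma(i)}=m^\vee_i,\qquad i\in\{0,1,\dots,n\},\ \gamma\in\Gamma,
\]
in complete parallel with \eqref{gammarho}. This can be read off case by case from Table~\ref{gamtab} exactly as \eqref{gammarho} is obtained for the $\rho^\sigma$ vector, and conceptually it reflects that $\Gamma$ acts by symmetries of the affine Dynkin diagram attached to the dual root system, which necessarily fix its marks. Setting $I_d\equiv\{\,i\mid m^\vee_i=d\,\}$, it follows that every permutation matrix $\mathcal{A}(\gamma)$ permutes coordinates only inside the blocks $I_d$, so each $\mathcal{A}(\gamma)$ is block diagonal with respect to the partition $\{0,1,\dots,n\}=\bigsqcup_d I_d$.

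Since the commuting unitary permutation matrices $\mathcal{A}(\gamma)$ simultaneously respect this block partition, they can be simultaneously diagonalized block by block, yielding a unitary $\mathbb{P}$ that is itself block diagonal with respect to $\{I_d\}$; one then checks, directly against \eqref{diagonal}, that the explicit conjugation matrices recorded in Table~\ref{matP} are indeed of this block form. For such a $\mathbb{P}$ the entry $(\mathbb{P}^{-1}x)_i=(\mathbb{P}^{\dagger}x)_i$ is a linear combination of the variables $x_j$ with $j\in I_{m_i^\vee}$, and every such $x_j$ satisfies $\mathrm{edg}_m\,x_j=m_j^\vee=m_i^\vee$. Hence the hypothesis of Proposition~\ref{cond} is met, which gives assumption~(ii) of Proposition~\ref{iso}, and the asserted isomorphism $\Pi^{\mathcal{A},\sigma}_M\simeq\Pi^{\mathcal{D},\sigma}_M$ follows for every sign homomorphism $\sigma$.

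The one point demanding genuine care, rather than routine bookkeeping, is the block-diagonal choice of $\mathbb{P}$: as the $D_{2k}$ case already shows, distinct blocks $I_d$ may carry overlapping $\Gamma$-characters (for $D_4$ the trivial character occurs both on the singleton block $\{2\}$ and inside the block $\{0,1,3,4\}$), so a \emph{generic} simultaneous diagonalization of the $\mathcal{A}(\gamma)$ need not respect the partition $\{I_d\}$. One must therefore either argue abstractly that a block-respecting diagonalization exists, by diagonalizing the $\Gamma$-action on each block separately, or verify that the specific matrices listed in Table~\ref{matP} already have the required block structure. Once this is settled, the rest is an immediate combination of Propositions~\ref{iso} and~\ref{cond}.
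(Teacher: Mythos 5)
Your proposal is correct and follows essentially the paper's route: assumption (i) of Proposition~\ref{iso} is the definition \eqref{diagdef}, and assumption (ii) is obtained from Proposition~\ref{cond} by checking that the matrices $\mathbb{P}$ of Table~\ref{matP} mix only coordinates carrying equal dual marks. Your additional observations --- that $m^\vee_{\pi_\gamma(i)}=m^\vee_i$ forces each $\mathcal{A}(\gamma)$ to be block diagonal with respect to the partition by dual marks, and that a block-respecting diagonalization must be chosen deliberately because characters can repeat across blocks (as in $D_{2k}$) --- are sound and make explicit what the paper leaves to inspection of Table~\ref{matP}.
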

\begin{proof}
Explicit forms of the unitary conjugation matrices $\mathbb{P}$ in Table \ref{matP} guarantee that each $\mathbb{P}$ satisfies the assumption of Proposition~\ref{cond}. 
\begin{table}[!ht]
{\small
\begin{tabular}{|c||l|c||l|}
\hline
&$\mathbb{P}=(p_{ij})\in\C^{n+1,n+1},\q i,j=0,\dots,n$&&$\mathbb{P}=(p_{ij})\in\C^{n+1,n+1},\q i,j=0,\dots,n$\\
\hline\hline
\multirow{2}{*}{$A_n$}&$p_{ij}=\frac{1}{\sqrt{n+1}}\alpha^{-ij},\q i,j=0,\dots,n,\q \alpha=e^{\frac{2\pi \i}{n+1}}$&\multirow{2}{*}{$C_n$}&$p_{00}=p_{10}=p_{01}=\frac{1}{\sqrt{2}},\q p_{11}=-\frac{1}{\sqrt{2}}$,\\
&&&$p_{ii}=1,\q i=2,\dots,n$\\\hline
\multirow{3}{*}{$B_{2k+1}$}&$p_{ii}=p_{i,2k+1-i}=\frac{1}{\sqrt{2}},$&\multirow{3}{*}{$B_{2k}$}&$p_{00}=p_{2k,0}=p_{0,2k}=\frac{1}{\sqrt{2}}$, \\
&$p_{2k+1-i,i}=\frac{(-1)^i}{\sqrt{2}},$&&$p_{k,1}=1,\q p_{2k,2k}=-\frac{1}{\sqrt{2}}$,\\
&$p_{2k+1-i,2k+1-i}=\frac{(-1)^{i+1}}{\sqrt{2}},\q i=0,\dots, k$&&$p_{i-1,i}=p_{i-1,2k+1-i}=\frac{1}{\sqrt{2}},$\\
&&&$ p_{2k+1-i,i}=\frac{(-1)^{i}}{\sqrt{2}}$,\\
&&&$ p_{2k+1-i,2k+1-i}=\frac{(-1)^{i+1}}{\sqrt{2}},\q i=2,\dots, k$\\\hline
\multirow{6}{*}{$D_{2k+1}$}&$p_{00}=p_{01}=p_{10}=p_{11}=\frac12$,&\multirow{6}{*}{$D_{2k}$}&$p_{00}=p_{01}=p_{10}=p_{11}=p_{2k-1,0}=p_{2k,0}=\frac12$,\\
&$p_{2k,0}=p_{2k+1,0}=p_{2k+1,2k}=p_{2k+1,2k+1}=\frac12$,&&$p_{0,2k-1}=p_{0,2k}=p_{2k-1,2k-1}=p_{2k,2k}=\frac12$\\
&$p_{2k,1}=p_{2k+1,1}=p_{2k,2k}=p_{2k,2k+1}=-\frac{1}{2}$,&&$p_{k,2}=1,\q p_{2k-1,1}=p_{2k,1}=p_{1,2k-1}=-\frac12$,\\
&$p_{1,2k}=p_{0,2k+1}=-\frac{\i}{2},\q p_{0,2k}=p_{1,2k+1}=\frac{\i}{2}$,&&$p_{1,2k}=p_{2k-1,2k}=p_{2k,2k-1}=-\frac12$,\\
&$p_{i,2i-2}=p_{i,2i-1}=p_{2k+1-i,2i-2}=\frac{1}{\sqrt{2}}$,&&$p_{i-1,i}=p_{i-1,2k+1-i}=\frac{1}{\sqrt{2}},\q p_{2k+1-i,i}=\frac{(-1)^i}{\sqrt{2}}$,\\
&$p_{2k+1-i,2i-1}=-\frac{1}{\sqrt{2}},\q i=2,\dots,k$&&$p_{2k+1-i,2k+1-i}=\frac{(-1)^{i+1}}{\sqrt{2}},\q i=3,\dots,k$\\\hline
$E_6$&$\frac{1}{\sqrt{3}}\begin{pmatrix}
1&e^{\frac{4\pi\i}{3}}&0&0&0&e^{\frac{2\pi\i}{3}}&0\\
1&e^{\frac{2\pi\i}{3}}&0&0&0&e^{\frac{4\pi\i}{3}}&0\\
0&0&e^{\frac{2\pi\i}{3}}&0&e^{\frac{4\pi\i}{3}}&0&1\\
0&0&0&\sqrt{3}&0&0&0\\
0&0&e^{\frac{4\pi\i}{3}}&0&e^{\frac{2\pi\i}{3}}&0&1\\
1&1&0&0&0&1&0\\
0&0&1&0&1&0&1
\end{pmatrix}$&$E_7$&$\frac{1}{\sqrt{2}}\begin{pmatrix}
1&0&0&0&0&0&1&0\\
0&1&0&0&0&0&0&1\\
0&0&1&0&1&0&0&0\\
0&0&0&\sqrt{2}&0&0&0&0\\
0&0&1&0&-1&0&0&0\\
0&1&0&0&0&0&0&-1\\
1&0&0&0&0&0&-1&0\\
0&0&0&0&0&\sqrt{2}&0&0
\end{pmatrix}$\\\hline
\end{tabular}}
\bigskip
\caption{The non-zero entries of the unitary conjugation matrices $\mathbb{P}$, which for non-trivial groups $\Gamma$ diagonalize their permutation representations $\mathcal{A}$.}
\label{matP}
\end{table} 
This in turn, results to the validity of assumption (ii) in Proposition~\ref{iso}.
\end{proof}

\section{Sets of points and weights}

\subsection{Sets $\Lambda^\sigma_{P,M}$ and $F^\sigma_{Q^\vee,M}$}\

The Fourier calculus of Weyl orbit functions is commonly formulated \cite{HMPdis,HP,diejen} on points $F_{P^\vee,M}^\sigma$, $M\in\N$ from the refined dual weight lattice $\tfrac{1}{M}P^\vee$ contained in $F^\sigma$,
$$F_{P^\vee,M}^\sigma=\tfrac{1}{M}P^\vee\cap F^\sigma,$$
while weights $\Lambda^\sigma_{Q,M}$ labelling the functions are taken from the magnified dual fundamental domain $MF_Q^\sigma$,
\begin{equation*}
\Lambda^\sigma_{Q,M}=P\cap MF_Q^\sigma .	
\end{equation*}
This work develops the Fourier calculus on points $F^\sigma_{Q^\vee,M}$, $M>m^\sigma$ from the refined dual root lattice $\tfrac{1}{M}Q^\vee$ contained in $F^\sigma$,
\begin{equation}\label{FQvee}
F^\sigma_{Q^\vee,M}\equiv \frac{1}{M}Q^\vee \cap F^\sigma,
\end{equation}  
with weights $\Lambda^\sigma_{P,M}$ labelling the functions taken from the magnified fundamental domain $MF^\sigma_P$,
\begin{equation}\label{LP}
\Lambda^\sigma_{P,M}\equiv P \cap MF^\sigma_P.
\end{equation}
The set of points $F^\sigma_{Q^\vee,M}\subset F_{P^\vee,M}^\sigma$ contains all points of  $F_{P^\vee,M}^\sigma$ belonging to $\tfrac{1}{M}Q^\vee$
and the set of weights $\Lambda^\sigma_{P,M}\subset \Lambda^\sigma_{Q,M}$  is a suitable fragment of $ \Lambda^\sigma_{Q,M}$. The point sets $F_{Q^\vee,7}^\sigma$ and $F_{P^\vee,7}^\sigma$ together with the weight sets $\Lambda_{P,7}^\sigma$ and $\Lambda_{Q,7}^\sigma$ of  $A_2$ are depicted in  Figure~\ref{A2body}.

\begin{figure}[!ht]
\resizebox{5.2cm}{!}{\input{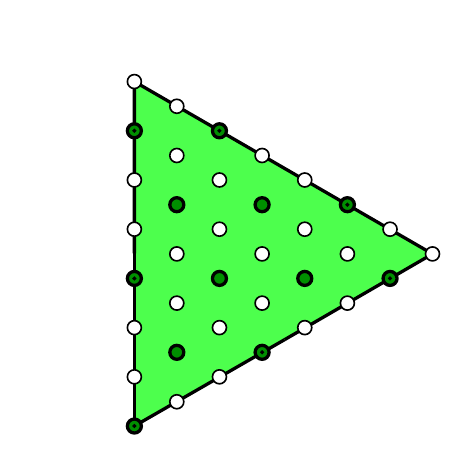_t}}\hspace{1.7cm}\resizebox{5.2cm}{!}{\input{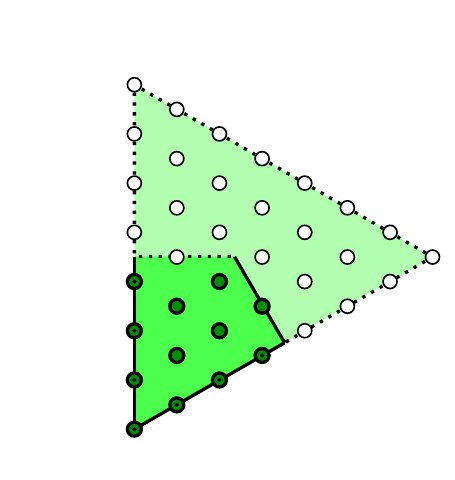_t}}
\caption{$(a)$ The fundamental domain $F$ of $A_2$, depicted as the green triangle, contains 24 white nodes and 12 green nodes forming together the set of points $F_{P^\vee,7}^\id$. The 12 green nodes represent the points of $F_{Q^\vee,7}^\id$ .
The green nodes without the dotted ones on the boundary of $F$ depict the 5 points of $F_{Q^\vee,7}^{\sigma^e}$. 
$(b)$ The domain $7F_Q$ of $A_2$, depicted as the light green triangle, contains 24 white nodes and 12 green nodes that form the set of weights $\Lambda_{Q,7}^\id$. The 12 green nodes in the kite-shaped domain $7F_P$ belong to the weight set $\Lambda_{P,7}^\id$. The green nodes without the dotted ones depict the 5 points of the set $\Lambda_{P,7}^{\sigma^e}$.} 
\label{A2body}
\end{figure}

In order to describe explicitly the point sets $F^\sigma_{Q^\vee,M}$ for any sign homomorphism $\sigma$, the symbols $s^\sigma_i$,  $i\in \{0,1,\dots,n\}$ are introduced via the relations
\begin{equation*}
\begin{alignedat}{4}
&s^\id_0,s^\id_i\in\Z^{\geq 0} &\q &\text{ if } \alpha_i\in\Delta,\\
&s^{\sigma^e}_0,s^{\sigma^e}_i\in\N &\q &\text{ if } \alpha_i\in\Delta,\\
&s^{\sigma^s}_0,s^{\sigma^s}_i\in\Z^{\geq 0} &\q &\text{ if } \alpha_i\in\Delta_l,&\q &s^{\sigma^s}_i\in\N &\q &\text{ if }\al_i\in\Delta_s,\\
&s^{\sigma^l}_0,s^{\sigma^l}_i\in\N& \q&\text{ if } \alpha_i\in\Delta_l,&\q& s^{\sigma^l}_i\in\Z^{\geq 0} &\q&\text{ if }\al_i\in\Delta_s.
\end{alignedat}
\end{equation*} 
The explicit description of the point set $F^\sigma_{Q^\vee,M}$ follows from relations \eqref{alom} and \eqref{domainF},
\begin{equation}\label{fundQvee}
F^\sigma_{Q^\vee,M}=\setb{\frac{s^\sigma_1}{M}\omega_1^\vee+\dots+\frac{s^\sigma_n}{M}\omega_n^\vee}{\sum_{i=0}^n m_i s^\sigma_i=M,\, \sum_{i=1}^n(C^T)^{-1}_{ij}s^\sigma_i\in\Z,\, j=1,\dots,n}
\end{equation}
where $C^T$ denotes the transposed Cartan matrix \eqref{Cart}.
The set of $n$ equations 
\begin{equation}\label{condpre2}
\sum_{i=1}^n(C^T)^{-1}_{ij}s^\sigma_i\in\Z,\q j=1,\dots,n	
\end{equation}
selects such points from $F_{P^\vee,M}^\sigma$ which belong to $\tfrac{1}{M}Q^\vee$. Standard reduction of the sets of equations \eqref{condpre2} yields their simplified form for each non-trivial case as
\begin{equation}\label{cond2}
\begin{alignedat}{2}
&A_n:&\q & s^\sigma_1+2s^\sigma_2+3s_3^\sigma+\dots+ ns^\sigma_n\equiv 0 \mod {n+1},\\
&B_{2k+1}:&\q & s^\sigma_1+s^\sigma_3+s^\sigma_5+\dots+s^\sigma_{2k+1}\equiv 0 \mod 2 ,\\
&B_{2k}:&\q & s^\sigma_1+s^\sigma_3+s^\sigma_5+\dots+s^\sigma_{2k-1}\equiv 0 \mod 2 ,\\
&C_n:&\q & s^\sigma_n\equiv 0 \mod 2,\\
&D_{4k:}&\q & s^\sigma_1+s^\sigma_3+\dots+s^\sigma_{4k-3}+s^\sigma_{4k-1}\equiv 0 \mod 2,\\
&&& s^\sigma_{4k-1}+s^\sigma_{4k}\equiv 0 \mod 2,\\
&D_{4k+1}:&\q & 2s^\sigma_1+2s^\sigma_3+\dots+ 2s^\sigma_{4k-1}+3s^\sigma_{4k}+s^\sigma_{4k+1} \equiv 0 \mod 4,\\
&D_{4k+2}:&\q &s^\sigma_1+s^\sigma_3+\dots+s^\sigma_{4k-1}+s^\sigma_{4k+2}\equiv 0 \mod 2,\\
&&& s^\sigma_{4k+1}+s^\sigma_{4k+2}\equiv 0 \mod 2,\\
&D_{4k+3}:&\q &2s^\sigma_1+2s^\sigma_3+\dots+ 2s^\sigma_{4k+1}+s^\sigma_{4k+2}+3s^\sigma_{4k+3} \equiv 0 \mod 4,\\
&E_6:&\q & s^\sigma_1+2s^\sigma_2+s^\sigma_4+2s^\sigma_5\equiv 0 \mod 3,\\
&E_7:&\q & s^\sigma_4+s^\sigma_6+s^\sigma_7\equiv 0 \mod 2.
\end{alignedat}
\end{equation}
 
\subsection{Cardinality of $\Lambda^\sigma_{P,M}$ and $F^\sigma_{Q^\vee,M}$}\

In order to prove that the point sets $F^\sigma_{Q^\vee,M}$ and the sets of weights $\Lambda^\sigma_{P,M}$ have for $M> m^\sigma$ the same cardinality, the isomorphism of polynomial spaces in Theorem \ref{isoAD} is employed. The first step is to introduce an auxiliary finite set of weights $\wt\Lambda^\sigma_{Q,M}$ by the relation
\begin{equation}\label{wLQM}
\wt\Lambda^\sigma_{Q,M}\equiv \setb{\la \in \Lambda_M}{\sigma\circ\widehat\psi(\mathrm{Stab}_{\Gamma_M}(\la))=\{1\}}
\end{equation}
together with its complementary set $\wt{H}_{Q,M}^\sigma$,
\begin{equation}\label{wHQM}
\wt{H}_{Q,M}^\sigma\equiv\set{\la\in \Lambda_M }{\sigma\circ\widehat\psi(\mathrm{Stab}_{\Gamma_M}(\la))=\{\pm1\}}
\end{equation}
and their corresponding sets $\wt\Lambda^\sigma_{P,M}$ and  $\wt{H}_{P,M}^\sigma$ of representative weights in $\Gamma_M-$orbits,
\begin{align}\label{pomlam}
\wt\Lambda^\sigma_{P,M}  & \equiv  MF_P \cap  \wt\Lambda^\sigma_{Q,M},  \\
\wt{H}_{P,M}^\sigma  &  \equiv  MF_P \cap  \wt H^\sigma_{Q,M}.   \label{pomlamH}
\end{align}
Indeed, since $MF_P$ is a fundamental domain of the action of $\Gamma_M$ on $MF_Q$, it holds that 
\begin{align}
\wt\Lambda^\sigma_{Q,M}  &= \Gamma_M \wt\Lambda^\sigma_{P,M}, \label{gaml1} \\
\wt{H}_{Q,M}^\sigma  & =  \Gamma_M   \wt H^\sigma_{P,M}.  \label{gaml2}
\end{align}
The following disjoint decomposition of the weight set \eqref{lambda0} and the set $\Lambda^\id_{P,M}$ are thus obtained,
\begin{align}\label{disLam}
\Lambda_M &= \wt\Lambda^\sigma_{Q,M} \cup \wt{H}_{Q,M}^\sigma, \\
\Lambda^\id_{P,M} & = \wt\Lambda^\sigma_{P,M} \cup \wt{H}_{P,M}^\sigma. \label{disPLam}
\end{align}
The set $\wt\Lambda^\sigma_{P,M}$ is related by $\rho^\sigma$ shifts \eqref{rhoi} to the set $\Lambda^\sigma_{P,M+m^\sigma}$ in the following proposition.
\begin{tvr}\label{body}
The set of weights $\Lambda^\sigma_{P,M+m^\sigma}$ coincides for any $M\in\N$ with the $\rho^\sigma-$shifted set $\wt\Lambda^\sigma_{P,M},$
\begin{equation*}
\Lambda^\sigma_{P,M+m^\sigma}= \rho^\sigma+\wt\Lambda^\sigma_{P,M}.
\end{equation*}
\end{tvr}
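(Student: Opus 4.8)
The plan is to show that the translation $\la\mapsto\la+\rho^\sigma$ restricts to a bijection of $\wt\Lambda^\sigma_{P,M}$ onto $\Lambda^\sigma_{P,M+m^\sigma}$. Injectivity is automatic, so the work consists in checking that this map and its inverse respect, one at a time, the clauses ``lies in $P$'', ``lies in the magnified fundamental domain'', and ``has the prescribed stabilizer'' that enter the definitions \eqref{LP}, \eqref{MFP} of $\Lambda^\sigma_{P,M+m^\sigma}$ and the definitions \eqref{wLQM}, \eqref{pomlam} of $\wt\Lambda^\sigma_{P,M}$.

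First I would record the action of the shift on Kac coordinates. If $\la\in\Lambda_M$ has Kac coordinates $[\la_0,\dots,\la_n]$, then $\la+\rho^\sigma$ has $\omega$-coordinates $\la_i+\rho^\sigma_i$, so at level $M+m^\sigma$ its zeroth Kac coordinate is
\begin{equation*}
(M+m^\sigma)-\sum_{i=1}^n m^\vee_i(\la_i+\rho^\sigma_i)=\la_0+\Bigl(m^\sigma-\sum_{i=1}^n m^\vee_i\rho^\sigma_i\Bigr)=\la_0+\rho^\sigma_0,
\end{equation*}
by the defining relation \eqref{msigma} for $m^\sigma$ together with $m_0^\vee=1$ and the convention for $\rho^\sigma_0$. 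Hence $\la+\rho^\sigma\equiv[\la_0+\rho^\sigma_0,\dots,\la_n+\rho^\sigma_n]$ at level $M+m^\sigma$. Since $\rho^\sigma\in P$, we have $\la\in P$ if and only if $\la+\rho^\sigma\in P$, and in that case every Kac coordinate $\la_i$ is an integer.

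The decisive step is matching membership in $F_Q^\sigma$. Comparing the inequalities \eqref{fun1} that cut out $F_Q^\sigma$ with the values \eqref{rhoi} of $\rho^\sigma_i$ (and the stated conventions for $\rho^\sigma_0$) shows, case by case in $\sigma\in\{\id,\sigma^e,\sigma^s,\sigma^l\}$, that $\rho^\sigma_i=1$ precisely on those indices where \eqref{fun1} requires a strict inequality, and $\rho^\sigma_i=0$ elsewhere. Consequently, for $\la\in P$ with integer Kac coordinates, the condition $\la_i\geq 0$ for all $i$, i.e.\ $\la\in\Lambda_M$, holds if and only if the coordinates $\la_i+\rho^\sigma_i$ satisfy the scaled defining relations \eqref{FQS}, that is, if and only if $\la+\rho^\sigma\in(M+m^\sigma)F_Q^\sigma$: a strict inequality $\la_i+1>0$ on an integer forces $\la_i\geq0$, and conversely $\la_i\geq0$ forces $\la_i+1\geq1>0$. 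I expect this bookkeeping to be the main obstacle, since it is exactly here that the particular shift level $m^\sigma$ is forced and that the four sign homomorphisms must be handled in parallel.

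Finally I would dispatch the $F_P$-clause and the stabilizer clause together. By \eqref{akcegamM} and the $\pi_\gamma$-invariance \eqref{gammarho} of $\rho^\sigma$, the magnified Kac coordinates of $\gamma_M\cdot(\la+\rho^\sigma)$ are $[\la_{\pi_\gamma(0)}+\rho^\sigma_0,\dots,\la_{\pi_\gamma(n)}+\rho^\sigma_n]$, so the $\Gamma_{M+m^\sigma}$-orbit of $\la+\rho^\sigma$ is obtained from the $\Gamma_M$-orbit of $\la$ by adding the single fixed tuple $[\rho^\sigma_0,\dots,\rho^\sigma_n]$ to each member. Translation by a fixed tuple is an order-automorphism for the lexicographic order \eqref{lex}, hence it sends the lexicographically highest element to the lexicographically highest element; by Theorem \ref{thmlex} this gives $\la\in MF_P$ if and only if $\la+\rho^\sigma\in(M+m^\sigma)F_P$. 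The same description shows that $\gamma_M$ fixes $\la+\rho^\sigma$ if and only if $\la_{\pi_\gamma(i)}=\la_i$ for all $i$, i.e.\ if and only if $\gamma_M$ fixes $\la$; under the canonical identifications $\Gamma_M\simeq\Gamma\simeq\Gamma_{M+m^\sigma}$ from \eqref{izo} these stabilizers correspond to the same subgroup of $\Gamma$, whose image under $\widehat\psi$ is therefore the same by \eqref{hatpsi}, so $\sigma\circ\widehat\psi(\mathrm{Stab}_{\Gamma_{M+m^\sigma}}(\la+\rho^\sigma))=\{1\}$ if and only if $\sigma\circ\widehat\psi(\mathrm{Stab}_{\Gamma_M}(\la))=\{1\}$. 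Chaining the four equivalences, and noting that the $F_Q^\sigma$-step already places $\la+\rho^\sigma$ inside $(M+m^\sigma)F_Q$ (which is what the $F_P$- and stabilizer-steps need), yields $\la+\rho^\sigma\in\Lambda^\sigma_{P,M+m^\sigma}$ if and only if $\la\in\wt\Lambda^\sigma_{P,M}$; bijectivity of the translation then finishes the argument.
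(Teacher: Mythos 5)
Your proposal is correct and follows essentially the same route as the paper's proof: translating Kac coordinates by $[\rho^\sigma_0,\dots,\rho^\sigma_n]$, using the integrality of weights to convert the strict/non-strict inequalities of \eqref{fun1} into the non-negativity conditions of $\Lambda_M$, and invoking the $\pi_\gamma$-invariance \eqref{gammarho} of $\rho^\sigma$ to transport both the lexicographic-maximality criterion \eqref{lexorder} and the $\Gamma_M$-stabilizer condition across the shift. Your explicit computation of the zeroth Kac coordinate via \eqref{msigma} and the observation that $\rho^\sigma_i=1$ exactly at the strict indices of \eqref{fun1} make the bookkeeping slightly more transparent than in the paper, but the argument is the same.
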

\begin{proof}
Firstly, for any coordinates $[\la_0,\dots,\la_n]$ of $\la=\la_1\omega_1+\dots+\la_n\omega_n\in P$ are the corresponding coordinates $[\la^\sigma_0,\dots,\la^\sigma_n]$ of $\la^\sigma=\la^\sigma_1\omega_1+\dots+\la^\sigma_n\omega_n \in P$ defined by $$\la_i^\sigma=\la_i+\rho_i^\sigma ,\q i\in \{0,1,\dots, n\}.$$
According to \eqref{LP} and \eqref{MFP}, any weight $\la^\sigma \in \Lambda^\sigma_{P,M+m^\sigma}$ satisfies the conditions
\begin{align}\label{shifti}
\la^\sigma &\in P\cap(M+m^\sigma)F_P, \\
\la^\sigma &\in P\cap(M+m^\sigma) F^\sigma_Q,		\label{shifti2}
\end{align}
together with
\begin{equation}\label{shiftii}
\sigma\circ\widehat{\psi}\left(\mathrm{Stab}_{\Gamma_{M+m^\sigma}}\left(\la^\sigma\right)\right)=\{1\}.
\end{equation}
Defining relations \eqref{rhoi}, \eqref{msigma} and \eqref{FQS}, \eqref{fun1} imply that $\rho^\sigma \in P\cap m^\sigma F^\sigma_Q$ and consequently \eqref{shifti2} grants that 
\begin{equation}\label{inFQ}
\la\in P\cap MF_Q.
\end{equation}
Taking into account \eqref{gammarho} and comparing the action of $\gamma_{M+m^\sigma}\in \Gamma_{M+m^\sigma}$ on $\la^\sigma\in P\cap (M+m^\sigma)F_Q^\sigma$,
\begin{equation}\label{acti}
\gamma_{M+m^\sigma}\cdot \la^\sigma=[\la^\sigma_{\pi_\gamma(0)},\dots,\la^\sigma_{\pi_\gamma(n)}]=[\la_{\pi_\gamma(0)}+\rho_0,\dots,\la_{\pi_\gamma(n)}+\rho_n],	
\end{equation}
to the action of $\gamma_M \in \Gamma_{M}$ on $\la\in P\cap MF_Q$, 
\begin{equation}\label{actii}
\gamma_M\cdot\la=[\la_{\pi_\gamma(0)},\dots,\la_{\pi_\gamma(n)}],
\end{equation}
yields that the point $\la^\si$ is the lexicographically highest in its $\Gamma_{M+m^\sigma}-$orbit whenever $\la$ is the lexicographically highest in its $\Gamma_{M}-$orbit. Thus, properties \eqref{shifti}, \eqref{inFQ} and \eqref{lexorder} ensure that 
\begin{equation}\label{MFP2}
\la\in P\cap MF_P.	
\end{equation}
Connecting action \eqref{acti} for the stabilizing 
$\gamma_{M+m^\sigma}\in\mathrm{Stab}_{\Gamma_{M+m^\sigma}}\left(\la^\sigma\right)$
and action \eqref{actii} of $\gamma_M \in \Gamma_{M}$  on $\la\in P\cap MF_P$ forces $\gamma_M$ to stabilize $\la$, i.e.
$\gamma_M\in\mathrm{Stab}_{\Gamma_{M}}\left(\la\right).$
Relation \eqref{hatpsi}  thus gives 
\begin{equation}\label{sistab}
	\sigma\circ\widehat{\psi}\left(\mathrm{Stab}_{\Gamma_{M}}\left(\la\right)\right)=  \sigma\circ\widehat{\psi}\left(\mathrm{Stab}_{\Gamma_{M+m^\sigma}}\left(\la^\sigma\right)\right),
\end{equation}
moreover \eqref{shiftii} guarantees in turn that
\begin{equation}\label{stabGM}
\sigma\circ\widehat{\psi}\left(\mathrm{Stab}_{\Gamma_{M}}\left(\la\right)\right)=\{1\},
\end{equation}
and therefore $\la \in\wt\Lambda^\sigma_{P,M}.$  

Conversely, $\la \in\wt\Lambda^\sigma_{P,M}$ implies validity of \eqref{stabGM} and \eqref{MFP2}. Defining relations \eqref{rhoi}, \eqref{msigma} and \eqref{FQS}, \eqref{fun1} then imply that $\la\in P\cap MF_Q$ grants \eqref{shifti2}. By comparing actions \eqref{acti} and \eqref{actii}, property \eqref{MFP2} forces relation \eqref{shifti}, properties \eqref{stabGM} and \eqref{sistab} guarantee \eqref{shiftii} and therefore $\la^\sigma \in \Lambda^\sigma_{P,M+m^\sigma}$. 
\end{proof}

\begin{tvr}\label{lemA}
The dimension of $\Pi^{\mathcal{A},\sigma}_{M}$ is for any $M\in\N$ equal to the number of points in $\Lambda^\sigma_{P,M+m^\sigma}$,
$$\dim{\Pi^{\mathcal{A},\sigma}_{M}}=\abs{\Lambda^\sigma_{P,M+m^\sigma}}.$$
\end{tvr}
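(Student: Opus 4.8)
The plan is to evaluate $\dim\Pi^{\mathcal{A},\sigma}_M$ as the dimension of the $(\sigma\circ\widehat\psi)$-isotypic component of the finite abelian group $\Gamma$ acting on $\Pi_M$ through the permutation representation $\mathcal{A}$, to compute it by a Burnside-type double count, and to identify the result with $|\wt\Lambda^\sigma_{P,M}|$; Proposition~\ref{body} then converts this quantity into $|\Lambda^\sigma_{P,M+m^\sigma}|$. The first step is to record that $\mathcal{A}(\gamma)$ restricts to a linear automorphism of $\Pi_M$ permuting the monomial basis $\{x^\lambda\mid\lambda\in\Lambda_M\}$. Since $\Gamma$ maps $F_Q$ onto itself, the permutation $\pi_\gamma$ from Table~\ref{gamtab} fixes the dual marks, $m^\vee_{\pi_\gamma(i)}=m^\vee_i$, which is exactly what makes the action \eqref{akcegamM} of $\Gamma_M$ preserve the defining relation of $MF_Q$; hence $(\mathcal{A}(\gamma)^{-1}x)_i=x_{\pi_\gamma^{-1}(i)}$ has extended $m$-degree $m^\vee_i$, so $\mathbb{P}=\mathcal{A}(\gamma)$ meets the hypothesis of Proposition~\ref{cond} and $f\in\Pi_M\Leftrightarrow\mathcal{A}(\gamma)\cdot f\in\Pi_M$. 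As $\mathcal{A}(\gamma)$ merely permutes monomials, $\tr(\mathcal{A}(\gamma)|_{\Pi_M})$ equals the number of $\lambda\in\Lambda_M$ with $\mathcal{A}(\gamma)\cdot x^\lambda=x^\lambda$; using $\mathcal{A}(\gamma)=\mathcal{A}(\gamma_M)$ from \eqref{matice} and the action \eqref{akcegamM} of $\Gamma_M$ on $\Lambda_M\subset MF_Q$, this happens precisely when $\gamma_M\in\mathrm{Stab}_{\Gamma_M}(\lambda)$.

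Because $\Gamma$ is abelian and $\sigma\circ\widehat\psi\colon\Gamma\to\{\pm1\}$ is a real-valued character, $\Pi^{\mathcal{A},\sigma}_M$ is precisely the corresponding isotypic subspace, and therefore
\[
\dim\Pi^{\mathcal{A},\sigma}_M=\frac{1}{|\Gamma|}\sum_{\gamma\in\Gamma}\sigma\circ\widehat\psi(\gamma)\,\tr(\mathcal{A}(\gamma)|_{\Pi_M}).
\]
Interchanging the order of summation and invoking $\sigma\circ\widehat\psi(\gamma)=\sigma\circ\widehat\psi(\gamma_M)$ from \eqref{hatpsi} transforms the right-hand side into
\[
\frac{1}{|\Gamma|}\sum_{\lambda\in\Lambda_M}\ \sum_{\gamma_M\in\mathrm{Stab}_{\Gamma_M}(\lambda)}\sigma\circ\widehat\psi(\gamma_M).
\]
The inner sum is the sum of the character $\sigma\circ\widehat\psi$ over the subgroup $\mathrm{Stab}_{\Gamma_M}(\lambda)$; it equals $|\mathrm{Stab}_{\Gamma_M}(\lambda)|$ when this restricted character is trivial, which by \eqref{wLQM} is exactly the case $\lambda\in\wt\Lambda^\sigma_{Q,M}$, and it vanishes for $\lambda\in\wt H^\sigma_{Q,M}$ by \eqref{wHQM}.

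Thus only the weights in $\wt\Lambda^\sigma_{Q,M}$ contribute to the double sum. Partitioning them into $\Gamma_M$-orbits and applying orbit-stabilizer together with $|\Gamma_M|=|\Gamma|$, each orbit contributes $|\Gamma|$, so the sum equals the number of $\Gamma_M$-orbits contained in $\wt\Lambda^\sigma_{Q,M}$. Since $MF_P$ is a fundamental domain of the $\Gamma_M$-action on $MF_Q$, relation \eqref{gaml1} and definition \eqref{pomlam} identify this number with $|MF_P\cap\wt\Lambda^\sigma_{Q,M}|=|\wt\Lambda^\sigma_{P,M}|$. Hence $\dim\Pi^{\mathcal{A},\sigma}_M=|\wt\Lambda^\sigma_{P,M}|$, and Proposition~\ref{body} gives $|\wt\Lambda^\sigma_{P,M}|=|\rho^\sigma+\wt\Lambda^\sigma_{P,M}|=|\Lambda^\sigma_{P,M+m^\sigma}|$, completing the proof.

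The step demanding the most care is the first one: checking cleanly that $\mathcal{A}(\gamma)$ preserves $\Pi_M$ and that its fixed monomials correspond to exactly the stabilizer condition entering the definition \eqref{wLQM} of $\wt\Lambda^\sigma_{Q,M}$. Once this dictionary between the permutation action on monomials and the $\Gamma_M$-action on weights is in place, the remainder is a routine Burnside computation followed by orbit-stabilizer and an appeal to Proposition~\ref{body}. One may also avoid the explicit Burnside sum altogether: the linear span of a single $\Gamma_M$-orbit of monomials is, as a $\Gamma$-module, induced from the trivial character of $\mathrm{Stab}_{\Gamma_M}(\lambda)$, and Frobenius reciprocity then shows its $(\sigma\circ\widehat\psi)$-multiplicity equals $1$ when $\lambda\in\wt\Lambda^\sigma_{P,M}$ and $0$ when $\lambda\in\wt H^\sigma_{P,M}$, which again yields $\dim\Pi^{\mathcal{A},\sigma}_M=|\wt\Lambda^\sigma_{P,M}|$.
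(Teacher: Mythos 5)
Your proof is correct, but it takes a genuinely different route from the paper's. The paper argues constructively: for each $\la\in\wt\Lambda^\sigma_{P,M}$ it forms the orbit sum $f^\sigma_\la(x)=\sum_{\gamma\in\Gamma_M}\sigma\circ\widehat\psi(\gamma)\,x^{\gamma\cdot\la}$, checks that these polynomials are $(\mathcal{A},\sigma)$-invariant and linearly independent (by isolating the monomial $x^\la$ contributed by the stabilizer), and then shows they span $\Pi^{\mathcal{A},\sigma}_M$ by decomposing an arbitrary invariant polynomial along the partition $\Lambda_M=\wt\Lambda^\sigma_{Q,M}\cup\wt H^\sigma_{Q,M}$ and observing that the $\wt H^\sigma_{Q,M}$-contributions cancel. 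You instead compute the multiplicity of the character $\sigma\circ\widehat\psi$ in the permutation representation via the standard inner-product formula, reduce the trace to counting fixed monomials, and finish with orbit--stabilizer; the cancellation over $\wt H^\sigma_{Q,M}$ reappears in your argument as the vanishing of a nontrivial character summed over a subgroup, which is the same mechanism in different clothing. Your route is shorter, requires only standard character theory of finite abelian groups, and ties Proposition~\ref{lemA} directly to the Burnside computations the paper performs later for the explicit counting formulas \eqref{burn}--\eqref{burn3}; what it gives up is the explicit basis $\{f^\sigma_\la\}$ of $\Pi^{\mathcal{A},\sigma}_M$, which the paper's argument produces as a by-product. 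All the supporting steps you flag --- that $\mathcal{A}(\gamma)$ preserves $\Pi_M$ because $m^\vee_{\pi_\gamma(i)}=m^\vee_i$, that fixed monomials correspond exactly to $\gamma_M\in\mathrm{Stab}_{\Gamma_M}(\la)$ via \eqref{Ax} and \eqref{matice}, that stabilizers are constant along $\Gamma_M$-orbits so each orbit in $\wt\Lambda^\sigma_{Q,M}$ contributes $\abs{\Gamma}$ to the double sum, and the final appeal to Proposition~\ref{body} --- are sound.
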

\begin{proof}
Utilizing the group action \eqref{akcegamM}, the polynomial $f_{\la}^\sigma\in \Pi_M$ is for any sign homomorphism $\sigma$ and any $\lambda\in\wt\Lambda^\sigma_{P,M}$ introduced by 
\begin{equation}\label{fla}
f_{\la}^\sigma(x)\equiv\sum_{\gamma\in \Gamma_M}\sigma\circ\widehat\psi (\gamma)\, x^{\gamma\cdot\lambda}.	
\end{equation}
Any $\gamma\in\Gamma_M$ is represented by \eqref{matice} as a permutation matrix $\mathcal{A}(\gamma)$ and thus the polynomial action \eqref{matakce} on the monomials takes the form 
\begin{equation}\label{Ax}
\mathcal{A}(\gamma)\cdot x^\lambda=x^{\gamma \cdot\lambda}.	
\end{equation}
Substituting \eqref{Ax} into \eqref{fla} and taking into account relations \eqref{hatpsi} and \eqref{matice},   
the polynomials $f^\sigma_\lambda$, $\lambda\in\wt\Lambda^\sigma_{P,M}$ are deduced to be $(\mathcal{A},\sigma)-$invariant, $$f^\sigma_\lambda\in\Pi_M^{\mathcal{A},\sigma}.$$
Since any $\gamma\in\mathrm{Stab}_{\Gamma_M} \left(\la\right)$ satisfies from \eqref{pomlam} that $\sigma\circ\widehat\psi(\gamma)=1$, a general linear combination of polynomials \eqref{fla} is of the form
\begin{align}\nonumber
\sum_{\lambda\in\wt\Lambda_{P,M}^\sigma}c_\lambda f_\lambda^\sigma(x)&=\sum_{\lambda\in\wt\Lambda_{P,M}^\sigma}\sum_{\gamma\in\Gamma_M}c_\lambda\, \sigma\circ\widehat\psi(\gamma)\,x^{\gamma\cdot\lambda}\\ \nonumber
&=\sum_{\lambda\in\wt\Lambda_{P,M}^\sigma}\left(\sum_{\gamma\in\mathrm{Stab}_{\Gamma_M} \left(\lambda\right)}c_\lambda\,x^\lambda+\sum_{\gamma\in\Gamma_M\setminus
\mathrm{Stab}_{\Gamma_M}\left(\lambda\right)}c_\lambda\,\sigma\circ\widehat \psi(\gamma)\, x^{\gamma\cdot\lambda} \right) \\ & = \sum_{\lambda\in\wt\Lambda_{P,M}^\sigma}\
|\mathrm{Stab}_{\Gamma_M} \left(\lambda\right)|\,c_\lambda\,x^\lambda+\sum_{\lambda\in\wt\Lambda_{P,M}^\sigma}\sum_{\gamma\in\Gamma_M\setminus
\mathrm{Stab}_{\Gamma_M}\left(\lambda\right)}c_\lambda\,\sigma\circ\widehat \psi(\gamma)\, x^{\gamma\cdot\lambda} . \label{keystab}
\end{align}
Since $MF_P$ is a fundamental domain of $\Gamma_M$ acting on $MF_Q$, the second term of \eqref{keystab} is a linear combination of monomials $x^\la$ with  $\lambda\notin\wt\Lambda^\sigma_{P,M}$.
Since the monomials $x^\la, \,\lambda\in\wt\Lambda^\sigma_{P,M}$ in the first term of \eqref{keystab} are linearly independent, setting \eqref{keystab} equal to zero grants that all  $c_\la \in \C$ with $\lambda\in\wt\Lambda^\sigma_{P,M}$ are also zero. The polynomials \eqref{fla} are thus linearly independent. 
 
For any $(\mathcal{A},\sigma)-$invariant polynomial $f \in\Pi_M^{\mathcal{A},\sigma}$ of the form $$f(x)=\sum_{\la\in\Lambda_M}c_\la \,x^\la$$  
and for any $\gamma\in\Gamma_M$ imply properties \eqref{sigakce}, \eqref{hatpsi}, \eqref{matice} and \eqref{Ax}  that 
\begin{align}\label{invpoly}
\sigma\circ\widehat\psi(\gamma)\sum_{\la\in\Lambda_M}c_\la \,x^\la= \sum_{\la\in\Lambda_M}c_\la \, x^{\gamma\cdot \la}.
\end{align}
The set of weights $\Lambda_M$ is invariant under $\Gamma_M$ and therefore, comparing the coefficients in \eqref{invpoly} yields that $\sigma\circ\widehat\psi(\gamma)c_\lambda=c_{\gamma^{-1}\cdot\lambda}$. Thus, relation $\sigma\circ\widehat\psi(\gamma^{-1})=\sigma\circ\widehat\psi(\gamma)$ guarantees for all $\gamma\in\Gamma_M$ that
 \begin{equation}\label{koef}
 \sigma\circ\widehat\psi(\gamma)c_\lambda= c_{\gamma\cdot\lambda}.
 \end{equation}
The disjoint decomposition \eqref{disLam} of the weight set $\Lambda_M$ together with relations \eqref{gaml1}, \eqref{gaml2} and \eqref{koef} ensures that
\begin{align} \nonumber
f(x)&=\sum_{\gamma\in\Gamma_M} \sum_{\lambda\in\wt\Lambda^\sigma_{P,M}}c_{\gamma\cdot\la} \,x^{\gamma\cdot\la}+\sum_{\gamma\in\Gamma_M}\sum_{\lambda\in \wt{H}^\sigma_{P,M}}c_{\gamma\cdot\la} \,x^{\gamma\cdot\la}\\&= \sum_{\lambda\in\wt\Lambda^\sigma_{P,M}}c_\la f_{\la}^\sigma(x)+\sum_{\lambda\in \wt{H}^\sigma_{P,M}}c_\la\sum_{\gamma\in\Gamma_M}\sigma\circ\widehat\psi(\gamma)\, x^{\gamma\cdot\la}. \label{genn}
\end{align}
Defining relation \eqref{pomlamH} guarantees that for any $\lambda\in \wt{H}^\sigma_{P,M}$ there exists $\wt\gamma\in\mathrm{Stab}_{\Gamma_M}\left(\lambda\right)$ such that $\sigma\circ\widehat\psi(\wt\gamma)=-1$ and thus the second term in  \eqref{genn} vanishes,
$$\sum_{\gamma\in\Gamma_M}\sigma\circ\widehat\psi(\gamma)\,x^{\gamma\cdot\la}=\sum_{\gamma\in\Gamma_M}\sigma\circ\widehat\psi(\gamma)\, x^{\gamma \cdot (\wt\gamma\cdot \la)}=-\sum_{\gamma\in\Gamma_M}\sigma\circ\widehat\psi(\gamma)\, x^{\gamma\cdot\la}=0,$$ 
and the polynomials \eqref{fla} generate the space $\Pi_M^{\mathcal{A},\sigma}$. The constructed  basis $f_{\la}^\sigma, \,\lambda\in\wt\Lambda^\sigma_{P,M}$ provides together with 
Proposition \ref{body} the resulting relation for dimension of  $\Pi_M^{\mathcal{A},\sigma}$,
$$\dim{\Pi^{\mathcal{A},\sigma}_{M}}=\abs{\wt\Lambda^\sigma_{P,M}}=\abs{\Lambda^\sigma_{P,M+m^\sigma}}. $$
\end{proof} 
\begin{tvr}\label{lemD}
The dimension of $\Pi^{\mathcal{D},\sigma}_M$ is for any $M\in\N$ equal to the number of points in $F^\sigma_{Q^\vee,M+m^\sigma}$,
$$\dim{\Pi^{\mathcal{D},\sigma}_M}=\abs{F^\sigma_{Q^\vee,M+m^\sigma}}.$$
\end{tvr}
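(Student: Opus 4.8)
The plan is to use that $\mathcal{D}$ is diagonal, so that the monomial basis of $\Pi_M$ is already an eigenbasis for the action of $\Gamma$, reducing $\dim\Pi^{\mathcal{D},\sigma}_M$ to a lattice count, and then to match that count with $|F^\sigma_{Q^\vee,M+m^\sigma}|$ through the explicit congruences \eqref{cond2} and a $\rho^\sigma$-type translation parallel to Proposition \ref{body}. Write $\mathcal{D}(\gamma)=\mathrm{diag}(d_0(\gamma),\dots,d_n(\gamma))$. For $\lambda\in\Lambda_M$ the polynomial action \eqref{matakce} gives $\mathcal{D}(\gamma)\cdot x^\lambda=\bigl(\prod_{i=0}^n d_i(\gamma)^{-\lambda_i}\bigr)x^\lambda$, so each $x^\lambda$, $\lambda\in\Lambda_M$, is a common eigenvector of all $\mathcal{D}(\gamma)$, $\gamma\in\Gamma$. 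Hence $\Pi^{\mathcal{D},\sigma}_M$ is spanned by exactly those $x^\lambda$ for which $\prod_{i=0}^n d_i(\gamma)^{-\lambda_i}=\sigma\circ\widehat\psi(\gamma)$ holds for all $\gamma\in\Gamma$, and therefore
\[
\dim\Pi^{\mathcal{D},\sigma}_M=\abs{\setb{\lambda\in\Lambda_M}{\prod\nolimits_{i=0}^n d_i(\gamma)^{-\lambda_i}=\sigma\circ\widehat\psi(\gamma),\ \gamma\in\Gamma}}.
\]

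The next step is to identify this count with $|F^\sigma_{Q^\vee,M+m^\sigma}|$, beginning with $\sigma=\mathfrak{1}$ (where $m^\sigma=0$ and $\rho^\sigma=0$). Reading the eigenvalue equations $\prod_i d_i(\gamma)^{-\lambda_i}=1$ off the explicit matrices in \eqref{diagonal}, for each non-trivial $\Gamma$ they reduce, using that the group generated has order $c$, to the reduced congruences listed in \eqref{cond2}, up to the relabelling of the index set $\{0,1,\dots,n\}$ built into the block structure of $\mathbb{P}$ from Table~\ref{matP} (which in particular interchanges the marks $m_i$ and dual marks $m_i^\vee$). Under the same relabelling the defining relation $\lambda_0+\sum_{i=1}^n m_i^\vee\lambda_i=M$ becomes $\sum_{i=0}^n m_i s_i=M$ and $\lambda_i\ge 0$ becomes $s_i\ge 0$, so the relabelling is a bijection of the set above onto the Kac description \eqref{fundQvee} of $F^{\mathfrak{1}}_{Q^\vee,M}$.

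For a general $\sigma$, I would mimic the proof of Proposition \ref{body}: translate by the $\rho$-type vector that matches the positivity pattern of the symbols $s^\sigma_i$ recorded before \eqref{fundQvee}, setting $s^\sigma_i=\lambda_i+\rho^\sigma_i$ after the relabelling. By \eqref{msigma} the degree equation then yields $\sum_{i=0}^n m_i s^\sigma_i=M+m^\sigma$; by \eqref{gammarho} the translating vector has $\Gamma$-invariant Kac coordinates, and one checks from \eqref{diagonal} (equivalently, as in the proof of Proposition \ref{body}, from the fact that this vector lies in $m^\sigma F^\sigma_Q$ with the required integrality) that $\prod_i d_i(\gamma)^{-\rho^\sigma_i}=\sigma\circ\widehat\psi(\gamma)$, so the eigenvalue condition on $\lambda$ is converted into the $\tfrac{1}{M}Q^\vee$-integrality condition \eqref{cond2} on $s^\sigma$; simultaneously $\lambda_i\ge 0$ turns into the $\sigma$-dependent positivity of $s^\sigma_i$. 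This makes $\lambda\mapsto s^\sigma$ a bijection onto $F^\sigma_{Q^\vee,M+m^\sigma}$, which proves the proposition.

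I expect the main obstacle to lie in the case-by-case verification behind the second and third steps. The $D_n$ family is the delicate one: there $\Gamma$ is $\Z_4$ or $\Z_2\times\Z_2$ with two generators, the matrices \eqref{diagonal} and the congruences \eqref{cond2} split into four residue classes modulo $4$, and the relabelling coming from $\mathbb{P}$ is least transparent; the case $B_{2k}$ also needs care since the naive reversal of indices does not reproduce the congruences. One must further reconcile the boundary bookkeeping: the affine node $0$ enters the positivity of $s^\sigma$ with a sign that is interchanged between $\sigma^s$ and $\sigma^l$ relative to the conditions \eqref{fun1} defining $F^\sigma_Q$, so the translating $\rho$-type vector used in the bijection differs from $\rho^\sigma$ at the node $0$ for these two homomorphisms, and this discrepancy has to be shown consistent with the shift $M\mapsto M+m^\sigma$.
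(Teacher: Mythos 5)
Your proposal follows essentially the same route as the paper: the diagonal representation makes the monomials $x^\lambda$, $\lambda\in\Lambda_M$, a common eigenbasis, so $\dim\Pi^{\mathcal{D},\sigma}_M$ counts the monomials whose eigenvalue equals $\sigma\circ\widehat\psi(\gamma)$ on the generators, and this count is matched to $\abs{F^\sigma_{Q^\vee,M+m^\sigma}}$ by a $\rho^\sigma$-type shift of the Kac coordinates together with a case-by-case identification of the resulting congruences with \eqref{cond2} up to a permutation of the index set. The only small inaccuracy is attributing that permutation to the block structure of $\mathbb{P}$ (which preserves the dual marks rather than interchanging $m_i$ with $m_i^\vee$); the paper simply observes directly that the two systems of a degree equation plus congruences coincide up to relabelling.
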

\begin{proof}
Invariance property \eqref{sigakce} of the diagonal representation \eqref{diagdef} grants that any polynomial from $\Pi_M^{\mathcal{D},\sigma}$ is a linear combination of $(\mathcal{D},\sigma)-$invariant monomials. Therefore, the set of $(\mathcal{D},\sigma)-$invariant monomials $x^\la$, $\la \in \Lambda_M$ satisfying for all $\gamma\in\Gamma$
\begin{equation}\label{Dgamma}
\mathcal{D}(\gamma)\cdot x^\la=\sigma\circ\widehat\psi(\gamma)\,x^\la,	
\end{equation}
forms a basis of $\Pi_M^{\mathcal{D},\sigma}$. Since for any $\gamma_1,\gamma_2\in\Gamma$ it holds that
$$\mathcal{D}(\gamma_1 \gamma_2)\cdot x^\la=\mathcal{D}(\gamma_1) \cdot (\mathcal{D}( \gamma_2)\cdot x^\la)=\sigma\circ\widehat\psi( \gamma_2)\mathcal{D}(\gamma_1) \cdot\,x^\la =\sigma\circ\widehat\psi(\gamma_1\gamma_2)\,x^\la,$$
 verifying property \eqref{Dgamma} only for the generators of $\Gamma$  yields its validity for all $\gamma\in\Gamma$.
For any $\la=[\la_0,\dots,\la_n]\in\Lambda_M$, with its Kac coordinates $\la_i\in\Z^{\geq0}$, $i\in \{0,1,\dots,n\}$ satisfying the defining relation in \eqref{lambda}
\begin{equation}\label{condition1}
\la_0+m_1^\vee\la_1+\dots+m_n^\vee\la_n=M,
\end{equation}
and for any diagonal generator $\mathcal{D}(\gamma)\equiv\mathrm{diag}(d_0,\dots,d_n)$ of $\Gamma$ is condition \eqref{Dgamma} equivalent to the relation
\begin{equation}\label{condition}
d_0^{\la_0}\dots d_n^{\la_n}=\sigma\circ\widehat\psi(\gamma).
\end{equation} 
Explicit forms \eqref{diagonal} of the diagonal generators yield the following explicit reformulations of \eqref{condition} for the non-trivial cases,
\begin{equation}\label{condition2}
\begin{alignedat}{2}
&A_n:&\q & \la_1+2\la_2+3\la_3+\dots+n\la_n\equiv h^\sigma \mod {n+1},\\
&B_{2k+1}:&\q &\la_1+\la_3+\dots+\la_{2k+1} \equiv h^\sigma \mod 2,\\
&B_{2k}:&\q&\la_3+\la_5+\dots+\la_{2k-1}+\la_{2k}\equiv h^\sigma \mod 2,\\
&C_n:&\q &\la_1\equiv h^\sigma \mod 2,\\
&D_{2k+1}:&\q &2\la_1+2\la_3+\dots+2\la_{2k-1}+\la_{2k}+3\la_{2k+1}\equiv h^\sigma \mod 4,\\
&D_{2k}:&\q & \la_{2k-1}+\la_{2k}\equiv 0 \mod 2,\\
&&\q &\la_1+\la_3+\dots+\la_{2k-1}\equiv h^\sigma \mod 2,\\
&E_6:&\q &\la_1+2\la_2+\la_4+2\la_5\equiv 0 \mod 3,\\
&E_7:&\q & \la_4+\la_6+\la_7\equiv h^\sigma\mod 2,
\end{alignedat}
\end{equation}
where non-zero values of $h^\sigma$ are listed in Table~\ref{kongr}.
\begin{table}
\begin{tabular}{|c|c|c|c|c|c|c|c|c|}
\hline
&$A_{2k+1}$&$B_{4k+1}$&$B_{4k+2}$&$B_{4k+3}$&$C_n$&$D_{4k+2}$&$D_{4k+3}$&$E_7$\\
\hline\hline
$h^{\sigma^e}$&$k+1$&$1$&$1$&$0$&$1$&$1$&$2$&$1$\\
$h^{\sigma^s}$&$-$&$1$&$0$&$1$&$0$&$-$&$-$&$-$\\
$h^{\sigma^l}$&$-$&$0$&$1$&$1$&$1$&$-$&$-$&$-$
\\\hline
\end{tabular}
\bigskip
\caption{The non-zero values of symbols $h^\sigma$.}
\label{kongr}
\end{table}

Recall from Proposition 2.1 in \cite{HMPdis} that the short and long Coxeter numbers \eqref{msigma} are of the form
\begin{equation}\label{slcoxnum}
m^{\sigma^s}=\sum_{\alpha_i\in\Delta_s}m_i,\q m^{\sigma^l}=\sum_{\alpha_i\in\Delta_l}m_i+1.
\end{equation} 
Introducing the symbols $s_i\in\Z^{\geq0}$ by relations
\begin{alignat*}{4}
&s_i^\id=s_i&\q&i=0,\dots,n,\\
&s_i^{\sigma^e}=1+s_i&\q&i=0,\dots,n,\\
&s_i^{\sigma^s}=1+s_i,&\q&\alpha_i\in\Delta_s,&\q\q&s_0^{\sigma^s}=s_0,\,s_i^{\sigma^s}=s_i&\q&\alpha_i\in\Delta_l,  \\
&s_i^{\sigma^l}=s_i, &\q&\alpha_i\in\Delta_s,&\q\q& s_0^{\sigma^l}=1+s_0,\,s_i^{\sigma^l}=1+s_i&\q&\alpha_i\in\Delta_l,
\end{alignat*}
and substituting them into the expressions \eqref{fundQvee} and \eqref{cond2} together with formulas \eqref{coxnum} and \eqref{slcoxnum} implies, that the number of points in $F^\sigma_{Q^\vee,M+m^\sigma}$ is equal to the number of solutions of the equations
\begin{equation}\label{condition3}
s_0+m_1 s_1+\dots+ m_n s_n=M
\end{equation}
and
\begin{equation}\label{condition4}
\begin{alignedat}{2}
&A_n:&\q & s_1+2s_2+3s_3+\dots+ ns_n\equiv h^\sigma \mod {n+1},\\
&B_{2k+1}:&\q & s_1+s_3+s_5+\dots+s_{2k+1}\equiv h^\sigma \mod 2 ,\\
&B_{2k}:&\q & s_1+s_3+s_5+\dots+s_{2k-1}\equiv h^\sigma \mod 2 ,\\
&C_n:&\q & s_n\equiv h^\sigma \mod 2,\\
&D_{4k:}&\q & s_1+s_3+\dots+s_{4k-3}+s_{4k-1}\equiv 0 \mod 2,\\
&&& s_{4k-1}+s_{4k}\equiv 0 \mod 2,\\
&D_{4k+1}:&\q & 2s_1+2s_3+\dots+ 2s_{4k-1}+3s_{4k}+s_{4k+1} \equiv 0 \mod 4,\\
&D_{4k+2}:&\q &s_1+s_3+\dots+s_{4k-1}+s_{4k+2}\equiv h^\sigma \mod 2,\\
&&& s_{4k+1}+s_{4k+2}\equiv 0 \mod 2,\\
&D_{4k+3}:&\q &2s_1+2s_3+\dots+ 2s_{4k+1}+s_{4k+2}+3s_{4k+3} \equiv h^\sigma \mod 4,\\
&E_6:&\q & s_1+2s_2+s_4+2s_5\equiv 0 \mod 3,\\
&E_7:&\q & s_4+s_6+s_7\equiv h^\sigma \mod 2.
\end{alignedat}
\end{equation}
The solutions $s_i\in\Z^{\geq0}$ of  the equations \eqref{condition3} and \eqref{condition4} and solutions $\la_i\in\Z^{\geq0}$ of the equations \eqref{condition1} and \eqref{condition2} coincide for each case up to a permutation and therefore, their numbers are identical.
\end{proof}

\begin{thm}\label{pocet}
For any $M\in\N,M>m^\sigma$ it holds that
$$\abs{F^\sigma_{Q^\vee,M}}=\abs{\Lambda^\sigma_{P,M}}.$$
\end{thm}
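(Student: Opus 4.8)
The plan is to obtain the statement simply by chaining together the three results that immediately precede it. Theorem~\ref{isoAD} asserts that the polynomial spaces $\Pi^{\mathcal{A},\sigma}_M$ and $\Pi^{\mathcal{D},\sigma}_M$ are isomorphic for every $M\in\N$, hence they have the same dimension. Proposition~\ref{lemA} identifies this common dimension with $\abs{\Lambda^\sigma_{P,M+m^\sigma}}$ and Proposition~\ref{lemD} identifies it with $\abs{F^\sigma_{Q^\vee,M+m^\sigma}}$. Putting these together yields
$$\abs{\Lambda^\sigma_{P,M+m^\sigma}}=\dim\Pi^{\mathcal{A},\sigma}_M=\dim\Pi^{\mathcal{D},\sigma}_M=\abs{F^\sigma_{Q^\vee,M+m^\sigma}}$$
for every $M\in\N$.

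The only remaining step is a reindexing. I would substitute $M\mapsto M-m^\sigma$ in the displayed chain of equalities. Since $m^\sigma$ is a non-negative integer — equal to $0$ for $\sigma=\id$ and positive otherwise, directly from its defining relation \eqref{msigma} — the quantity $M-m^\sigma$ is a positive integer, hence an admissible index for Propositions~\ref{lemA} and~\ref{lemD}, precisely when $M>m^\sigma$. Under this substitution the identity reads $\abs{\Lambda^\sigma_{P,M}}=\abs{F^\sigma_{Q^\vee,M}}$, which is the assertion.

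No genuine obstacle arises at this point: all the substantive work has already been done upstream — the construction of the $(\mathcal{A},\sigma)$-invariant basis $f^\sigma_\lambda$ and the $\rho^\sigma$-shift bijection of Proposition~\ref{body}, the matching of the congruence conditions \eqref{condition2} and \eqref{condition4} in Proposition~\ref{lemD}, and the simultaneous diagonalization of the permutation representation $\mathcal{A}$ together with the degree-preservation criterion of Proposition~\ref{cond} underlying Theorem~\ref{isoAD}. The single point deserving a line of care is that the index ranges match exactly: every integer $M>m^\sigma$ is of the form $M'+m^\sigma$ for a unique $M'\in\N$ and conversely, so the equality of cardinalities transfers verbatim rather than only on a sub-range of admissible $M$.
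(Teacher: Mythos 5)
Your proposal is correct and is essentially identical to the paper's own proof, which likewise chains Proposition~\ref{lemA}, Proposition~\ref{lemD} and Theorem~\ref{isoAD} to obtain $\abs{F^\sigma_{Q^\vee,M+m^\sigma}}=\dim\Pi^{\mathcal{D},\sigma}_M=\dim\Pi^{\mathcal{A},\sigma}_M=\abs{\Lambda^\sigma_{P,M+m^\sigma}}$ for all $M\in\N$. Your explicit remark on the reindexing $M\mapsto M-m^\sigma$ and the matching of index ranges is left implicit in the paper but is a worthwhile clarification.
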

\begin{proof}
Combining Propositions \ref{lemA}, Proposition \ref{lemD} and Theorem \ref{isoAD} grants for any $M\in\N$ that
$$\abs{F^\sigma_{Q^\vee,M+m^\sigma}}=\dim{\Pi^{\mathcal{D},\sigma}_M}=  \dim{\Pi^{\mathcal{A},\sigma}_M}=\abs{\Lambda^\sigma_{P,M+m^\sigma}}.  $$
\end{proof}

\subsection{Counting formulas}\

In order to obtain explicit counting formulas for cardinalities of all $\Lambda^\sigma_{P,M}$ and $M> m^\si$, the following crucial identity stemming from 
Proposition \ref{body} is used,
\begin{equation}\label{revert}
\abs{\Lambda^\sigma_{P,M}}=\abs{\wt\Lambda^\sigma_{P,M-m^\sigma}}.	
\end{equation}
The calculation of the cardinality of $\Lambda^\sigma_{P,M}$ is thus reverted to counting the weights in $\wt\Lambda^\sigma_{P,M}$ for all $M\in \N$. The group $\Gamma_M$ partitions the sets of weights \eqref{wLQM} and \eqref{wHQM} into $\Gamma_M-$orbits and the sets $\wt\Lambda^\sigma_{P,M}$ and $\wt H^\sigma_{P,M}$  consist of exactly one point from each $\Gamma_M-$orbit. Therefore, the number of points in $\wt \Lambda_{P,M}^\si$ and $\wt H_{P,M}^\si$ is equal to the number of $\Gamma_M-$orbits in  $\wt \Lambda_{Q,M}^\si$ and $\wt H_{Q,M}^\si$, respectively.

Starting with the identity sign homomorphism set $\wt \Lambda^\id_{P,M}$ and introducing a set of points in $\Lambda_M$ fixed  by a given $\gamma\in \Gamma_M$,
\begin{equation*}
\mathrm{Fix}_M(\gamma)\equiv\setb{\lambda\in\Lambda_M}{\gamma\cdot\lambda=\lambda},	
\end{equation*}
the Burnside's lemma applied to the set $\wt \Lambda_{Q,M}^\id= \Lambda_M$ provides the relation
\begin{equation}\label{burn}
\abs{ \Lambda_{P,M}^\id}=\abs{\wt \Lambda_{P,M}^\id}=\frac{1}{c}\sum_{\gamma\in\Gamma_M}\abs{\mathrm{Fix}_M(\gamma)},\q
\end{equation}
Moreover, the group $\Gamma_M$ is partitioned into disjoint sets containing the elements of the same order $|\gamma|=d$ which implies that \eqref{burn} is reformulated as
\begin{equation}\label{burn2}
\abs{\Lambda_{P,M}^\id}=\frac{1}{c}\sum_{d\mid c}\sum_{\substack{\abs{\gamma}=d\\\gamma\in\Gamma_M}}\abs{\mathrm{Fix}_M(\gamma)}.
\end{equation}

For a general sign homomorphism, disjoint decomposition \eqref{disPLam} yields the relation
\begin{equation}\label{countid}
\abs{\wt\Lambda^\sigma_{P,M}}=\abs{\Lambda_{P,M}^\id}-\abs{\wt H_{P,M}^\si},
\end{equation}
and employing the Burnside's lemma to the set $\wt H_{Q,M}^\si$ produces the identity
\begin{equation}\label{burn3}
\abs{\wt H_{P,M}^\si}=\frac{1}{c}\sum_{d\mid c}\sum_{\substack{\abs{\gamma}=d\\\gamma\in\Gamma_M}}\abs{\mathrm{Fix}_M(\gamma)\cap \wt H_{Q,M}^\si}.
\end{equation}
 Using the Euler's totient function $\phi$ together with equations \eqref{revert}, \eqref{burn2}, \eqref{countid} and \eqref{burn3}, the explicit counting formulas for the numbers of points in the weight sets $\Lambda^\sigma_{P,M}$ are for all cases listed in the following theorem.
\begin{thm}
The numbers of elements in $\Lambda^\sigma_{P,M}$ are for any $M\in\N, M>m^\sigma$ determined by the following formulas.
\begin{enumerate}
\item $A_n (n\geq1)$:
{\small 
\begin{align}
&\abs{\Lambda^\id_{P,M}(A_n)}=\frac{1}{n+M+1}\sum_{d\,| \gcd{(n+1,M)}} \varphi(d) \begin{pmatrix}\frac{n+M+1}{d}\\ \frac{n+1}{d} \end{pmatrix}, \label{pocetAn}\\
&\nonumber\\
&\abs{\Lambda^{\sigma^e}_{P,M}(A_n)}=\frac{(-1)^{n+1}}{M}\sum_{d\,| \gcd{(n+1,M)}} (-1)^{\tfrac{n+1}{d}}\varphi(d) \begin{pmatrix}\frac{M}{d}\\ \frac{n+1}{d} \end{pmatrix},\label{SpocetAn}
\end{align}}
\item $B_n (n\geq 3)$:
{\small
\begin{align*}
&\abs{\Lambda^\id_{P,M}(B_n)}=\begin{cases}
{{n+k}\choose n}&M=2k+1,\\
\frac12\left[{{2m+2k+l+1}\choose {2m+1}}+{{2m+2k+l}\choose {2m+1}}+{{m+k\choose m}}\right]&M=4k+2l,n=2m+1,\\
\frac12\left[{{2m+2k+l}\choose {2m}}+{{2m+2k+l-1}\choose {2m}}+{{m+k\choose m}}+{{m+k+l-1}\choose m}\right]&M=4k+2l,n=2m,
\end{cases}\\
&\\
&\abs{\Lambda^{\sigma^e}_{P,M}(B_n)}=\begin{cases}
\frac12\left[{{2k+l}\choose {4m+1}}+{{2k+l-1}\choose {4m+1}}-{{k+l-1\choose 2m}}\right]&M=4k+2l,n=4m+1,\\
\frac12\left[{{2k+l}\choose {4m+2}}+{{2k+l-1}\choose {4m+2}}-{{k\choose 2m+1}}-{{k+l-1}\choose 2m+1}\right]&M=4k+2l,n=4m+2,\\
\abs{\Lambda^\id_{P,M-2n}(B_n)}&\text{otherwise},\\
\end{cases}\\
&\\
&\abs{\Lambda^{\sigma^s}_{P,M}(B_n)}=\begin{cases}
\frac12\left[{{2m+2k+l}\choose {2m+1}}+{{2m+2k+l-1}\choose {2m+1}}-{{m+k+l-1\choose m}}\right]&M=4k+2l,n=2m+1,\\
\abs{\Lambda^\id_{P,M-2}(B_n)}&\text{otherwise},
\end{cases}\\
&\\
&\abs{\Lambda^{\sigma^l}_{P,M}(B_n)}=\begin{cases}
\frac12\left[{{2k+l+1}\choose {4m+2}}+{{2k+l}\choose {4m+2}}-{{k\choose 2m+1}}-{{k+l}\choose 2m+1}\right]&M=4k+2l,n=4m+2,\\
\frac12\left[{{2k+l+1}\choose {4m+3}}+{{2k+l}\choose {4m+3}}-{{k\choose 2m+1}}\right]&M=4k+2l,n=4m+3,\\
\abs{\Lambda^\id_{P,M-2n+2}(B_n)}&\text{otherwise},
\end{cases}
\end{align*}}
where $l\in \{0,1\}$.\\
\item $C_n (n\geq 2)$:
{\small
\begin{align*}
&\abs{\Lambda^\id_{P,M}(C_n)}={{n+k}\choose{n}},\q \text{where } k=\left\lfloor \tfrac{M}{2}\right\rfloor,\\
&\abs{\Lambda^{\sigma^e}_{P,M}(C_n)}=\abs{\Lambda^\id_{P,M-2n-1}(C_n)}, \\
&\abs{\Lambda^{\sigma^s}_{P,M}(C_n)}=\abs{\Lambda_{P,M-2n+2}^\id(C_n)},\\
&\abs{\Lambda^{\sigma^l}_{P,M}(C_n)}=\abs{\Lambda_{P,M-3}^\id(C_n)}.
\end{align*}}
\\
\item $D_n (n\geq4)$:

{\small
\begin{align*}
&\abs{\Lambda_{P,M}^\id(D_n)}=\begin{cases}
{{n+k}\choose {n}}+{{n+k-1}\choose {n}}&M=2k+1,\\
\frac14\left[{{2m+2k+1}\choose{2m+1}}+6{{2m+2k}\choose{2m+1}}+{{2m-1+2k}\choose {2m+1}}+{{2m-1+2k}\choose {2m-1}}+2{{m+k-1}\choose{m-1}}\right]&M=4k,n=2m+1,\\
\frac14\left[{{2m+2k+2}\choose{2m+1}}+6{{2m+2k+1}\choose{2m+1}}+{{2m+2k}\choose {2m+1}}+{{2m+2k}\choose {2m-1}}\right]&M=4k+2,n=2m+1,\\
\text{{\tiny$\begin{aligned}
\frac14&\left[{{2m+2k}\choose{2m}}+6{{2m+2k-1}\choose{2m}}+{{2m+2k-2}\choose{2m}}+{{2m+2k-2}\choose{2m-2}}
\right.\\
&\left.+2{{m+k}\choose{m}}+6{{m+k-1}\choose{m}}\right]
\end{aligned}$}}&M=4k,n=2m,\\
\text{{\tiny$\begin{aligned}\frac14&\left[{{2m+2k+1}\choose{2m}}+6{{2m+2k}\choose{2m}}+{{2m+2k-1}\choose{2m}}+{{2m+2k-1}\choose{2m-2}}
\right. \\
&\left.+6{{m+k}\choose{m}}+2{{m+k-1}\choose{m}}\right]
\end{aligned}$}}&M=4k+2,n=2m,
\end{cases}\\
&\\
&\abs{\Lambda^{\sigma^e}_{P,M}(D_n)}=\begin{cases}
\frac14\left[{{2k+1}\choose{4m+3}}+6{{2k}\choose{4m+3}}+{{2k-1}\choose {4m+3}}+{{2k-1}\choose {4m+1}}-2{{k-1}\choose{2m}}\right]&M=4k,n=4m+3,\\
\frac14\left[{{2k+2}\choose{4m+2}}+6{{2k+1}\choose{4m+2}}+{{2k}\choose{4m+2}}+{{2k}\choose{4m}}
-2{{k+1}\choose{2m+1}}-6{{k}\choose{2m+1}}\right]
&M=4k+2,n=4m+2,\\
\frac14\left[{{2k+1}\choose{4m+2}}+6{{2k}\choose{4m+2}}+{{2k-1}\choose{4m+2}}+{{2k-1}\choose{4m}}
-6{{k}\choose{2m+1}}-2{{k-1}\choose{2m+1}}\right]
&M=4k,n=4m+2,\\
\abs{\Lambda^\id_{P,M-2n+2}(D_n)}&\text{otherwise}.
\end{cases}
\end{align*}
}\\
\item $E_6$:
{\small
\begin{align*}
&\abs{\Lambda^\id_{P,M}(E_6)}=\begin{cases}
\frac13\left[ \abs{\Lambda_{6k}(E_6)}+2{{k+2}\choose{2}}+2{{k+1}\choose{2}}\right]&M=6k,\\
\frac13\left[ \abs{\Lambda_{6k+3}(E_6)}+4{{k+2}\choose{2}}\right]&M=6k+3,\\
\frac13\abs{\Lambda_M(E_6)}&\text{otherwise},
\end{cases}\\
&\\
&\abs{\Lambda^{\sigma^e}_{P,M}(E_6)}=\abs{\Lambda^\id_{P,M-12}(E_6)}.
\end{align*}
}
\item $E_7$:
{\small
\begin{align*}
&\abs{\Lambda^\id_{P,M}(E_7)}=\begin{cases}
\frac12 \abs{\Lambda_{2k+1}(E_7)}&M=2k+1,\\
\frac12\left[\abs{\Lambda_{12k+2l}(E_7)}+\sum_{i=0}^3d_{li}{{4-i+k}\choose{4}}\right] &M=12k+2l,l\in\{0,\dots,5\},
\end{cases}\\
&\\
&\abs{\Lambda^{\sigma^e}_{P,M}(E_7)}=\begin{cases}
\frac12\left[\abs{\Lambda_{12k+2l-18}(E_7)}-\sum_{i=0}^3d_{l+3,i}{{2-i+k}\choose{4}}\right]&M=12k+2l,l\in\{0,1,2\},\\
\frac12\left[\abs{\Lambda_{12k+2l-18}(E_7)}-\sum_{i=0}^3d_{l-3,i}{{3-i+k}\choose{4}}\right]&M=12k+2l,l\in\{3,4,5\},\\
\abs{\Lambda^\id_{P,M-18}(E_7)}&\text{otherwise},\\
\end{cases}
\end{align*}
}
where 
{\small $$\left(d_{li}\right)=\left(\begin{matrix}
1&34&64&9\\2&46&55&5\\5&55&46&2\\9&64&34&1\\16&67&25&0\\25&67&16&0
\end{matrix}\right).$$}\\
\item $E_8,F_4,G_2$: {\small $$\abs{\Lambda_{P,M}^\sigma}=\abs{\Lambda^\sigma_{Q,M}}.$$}
\end{enumerate}
\end{thm}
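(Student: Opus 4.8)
The plan is to reduce every cardinality $\abs{\Lambda^\sigma_{P,M}}$, via the shift identity \eqref{revert} obtained from Proposition~\ref{body}, to counting the representative weights $\wt\Lambda^\sigma_{P,M-m^\sigma}$, and then to count these by Burnside's lemma applied to the action of the abelian group $\Gamma_M\cong\Gamma$ on the explicit finite set $\Lambda_M$ of Kac coordinates \eqref{lambda}. For the trivial sign homomorphism this is precisely formula \eqref{burn2}; for a general $\sigma$ one subtracts $\abs{\wt H^\sigma_{P,M}}$ using the disjoint decomposition \eqref{disPLam}, i.e. \eqref{countid}, together with the companion Burnside count \eqref{burn3}. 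Thus the entire theorem is assembled from two families of elementary quantities: the fixed-point numbers $\abs{\mathrm{Fix}_M(\gamma)}$ and the sign-twisted fixed-point numbers $\abs{\mathrm{Fix}_M(\gamma)\cap\wt H^\sigma_{Q,M}}$, summed over $\gamma\in\Gamma_M$ grouped by order $d\mid c$; when $\Gamma$ is cyclic the number of elements of order $d$ is $\phi(d)$, which is how the Euler totient enters \eqref{pocetAn}–type formulas, while for $D_{2k}$ one simply adds the (distinct) fixed-point counts over the three involutions read off from Table~\ref{gamtab}.

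\textbf{Computing the fixed-point numbers.} By \eqref{akcegamM} a weight $\lambda=[\lambda_0,\dots,\lambda_n]\in\Lambda_M$ is fixed by $\gamma$ exactly when its Kac coordinates are constant along the cycles of the permutation $\pi_\gamma$ of Table~\ref{gamtab}. Collapsing each cycle to a single nonnegative variable turns the defining constraint $\lambda_0+m^\vee_1\lambda_1+\dots+m^\vee_n\lambda_n=M$ into a linear Diophantine equation in fewer unknowns whose coefficients are the sums of dual marks over the cycles, and whose number of solutions is a binomial coefficient. For type $A_n$, where all dual marks equal $1$ and $\pi_{\gamma_i}$ is a cyclic shift, this is the classical cycle-index count: the order-$d$ elements contribute $\binom{M/d+(n+1)/d-1}{(n+1)/d-1}$, and the term-by-term identity $\tfrac{1}{n+1}\binom{M/d+(n+1)/d-1}{(n+1)/d-1}=\tfrac{1}{n+M+1}\binom{(n+M+1)/d}{(n+1)/d}$ turns Burnside's sum into \eqref{pocetAn}. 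For $B_n,C_n,D_n,E_6,E_7$ the relevant cycle structures are short, so each $\abs{\mathrm{Fix}_M(\gamma)}$ is again an explicit binomial coefficient, with the parity or residue class of $M$ entering through the sum of dual marks over a nontrivially permuted cycle; this produces the piecewise dependence on $M \bmod 2$, $M\bmod 4$ (and on $n\bmod 4$) visible in the stated formulas.

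\textbf{The sign-twisted counts and the exceptional types.} For a nontrivial sign homomorphism one intersects each fixed-point set with the congruence conditions defining $\wt H^\sigma_{Q,M}$: by \eqref{wHQM} membership in $\wt H^\sigma_{Q,M}$ requires a stabilizing $\gamma'$ with $\sigma\circ\widehat\psi(\gamma')=-1$, and the values of $\sigma\circ\widehat\psi(\gamma)$ tabulated in Table~\ref{gamtab} convert this into a residue restriction on the collapsed variables; counting nonnegative solutions of the resulting restricted equation again yields binomial coefficients, and the integer array $(d_{li})$ in the $E_7$ formula records exactly the result of this bookkeeping. Finally, for $E_8,F_4,G_2$ one has $J=\emptyset$, hence $\Gamma=\{1\}$, $F_P=F_Q$, $F^\sigma_P=F^\sigma_Q$, and therefore $\Lambda^\sigma_{P,M}=\Lambda^\sigma_{Q,M}$ with nothing to prove.

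\textbf{Main obstacle.} The only real difficulty is the case-by-case combinatorial bookkeeping for $D_n$ and $E_7$: there $\Gamma$ carries several non-identity elements with differing cycle types, and one must simultaneously track the order $d$ of $\gamma$, the cycle partition of $\pi_\gamma$, the parity or residue of $M$ relative to the pertinent partial sums of dual marks, the congruence $n\bmod 4$ (for $D_n$), the shift by $\rho^\sigma$, and the extra residue constraint imposed by $\sigma$, and then check that the Burnside sum collapses to the listed closed binomial expressions. This is lengthy but entirely mechanical once the individual fixed-point counts of the previous step are in hand.
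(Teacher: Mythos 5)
Your proposal follows essentially the same route as the paper: the reduction $\abs{\Lambda^\sigma_{P,M}}=\abs{\wt\Lambda^\sigma_{P,M-m^\sigma}}$ via Proposition~\ref{body}, Burnside's lemma on the $\Gamma_M$-action grouped by element order (whence the totient for the cyclic $A_n$ case), fixed-point counts obtained by collapsing Kac coordinates along the cycles of $\pi_\gamma$, and the subtraction of $\abs{\wt H^\sigma_{P,M}}$ for nontrivial $\sigma$ — which is exactly the paper's argument, carried out there in detail only for $A_n$ with the remaining types left as analogous bookkeeping. The one place where your sketch is lighter than the paper is the sign-twisted count for $A_n$ with $n$ odd, where the intersection $\mathrm{Fix}_M(\gamma_1^{l})\cap\wt H^{\sigma^e}_{Q,M}$ is not a mere residue restriction but is identified with $\mathrm{Fix}_M(\gamma_1^{p})$ for $p$ the odd part of $l$ via a stabilizer-subgroup argument; this is a detail of execution rather than a different method.
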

\begin{proof}
A detailed calculation is presented for the infinite series of groups $\Gamma_M$ of $A_n$. Since the remaining infinite series $B_n$, $C_n$ and $D_n$ share common groups $\Gamma_M$ listed in Table \ref{gamtab}, the proof of the corresponding counting formulas is less complex. 

The group $\Gamma_M$ of $A_n$ is a cyclic group of order $n+1$ generated by the permutation $\gamma_1$ satisfying $\gamma_1^{n+1}=1$,  $$\Gamma_M=\{\gamma_1,\gamma_1^2,\dots,\gamma_1^{n+1}\}.$$ 
For each $k\in\{1,\dots,n+1\}$ such that order $\abs{\gamma_1^k}=d$ it holds that
\begin{equation*}
d=\frac{n+1}{l},\q l=\mathrm{gcd}(n+1,k).
\end{equation*}
Since for any weight $\la\in\mathrm{Fix}_M\left(\gamma_1^k\right)$, its cyclic isotropy subgroup $\mathrm{Stab}_{\Gamma_M}(\la)$ contains both $\gamma_1^k$ and $\gamma_1^l$, it holds that $\la\in\mathrm{Fix}_M\left(\gamma_1^l\right)$. Conversely, as $l$ divides $k$, the set of fixed points $\mathrm{Fix}_M\left(\gamma_1^l\right)$ is contained in the set $\mathrm{Fix}_M\left(\gamma_1^k\right)$ and thus
\begin{equation}\label{rovnost}
\mathrm{Fix}_M\left(\gamma_1^k\right)=\mathrm{Fix}_M\left(\gamma_1^l\right).
\end{equation} 
Since the number of elements of order $d$ in a cyclic group $\Gamma_M$ is the value of the Euler's totient function $\phi(d)$ and all elements of order $d$ satisfy \eqref{rovnost}, 
counting relation \eqref{burn2} specializes to
\begin{equation}\label{countAn}
\abs{\Lambda_{P,M}^\id}=\frac{1}{n+1}\sum_{d\mid n+1} \phi(d)  \abs{\mathrm{Fix}_M(\gamma_1^{l})}.
\end{equation}
The Kac coordinates $\la_0,\dots,\la_n\in\Z^{\geq0}$ of a point $\lambda\in\Lambda_{M}$ satisfy equation \eqref{condition1} specialized to $A_n$,
\begin{equation}\label{eq1}
\la_0+\la_1+\dots+\la_n=M.
\end{equation}
Employing explicit expression for the cyclic permutation listed in Table~\ref{gamtab}, a weight $\la\in \Lambda_M$ is stabilized by $\gamma_1^l\in \Gamma_M$, i.e. $\lambda\in \mathrm{Fix}_{M}\left(\gamma_1^l\right)$, if and only if 
 \begin{equation}\label{eq3}
 \la_i=\la_{i+l}=\dots=\la_{i+(d-1)l},\q i=0,\dots,l-1.
 \end{equation}
Substituting \eqref{eq3} into relation \eqref{eq1} yields 
\begin{equation}\label{eq2}
d(\la_0+\dots+\la_{l-1})=M,
\end{equation}
and thus, if $M$ is not divisible by $d$, then 
\begin{equation}\label{ndivM}
\abs{\mathrm{Fix}_M\left(\gamma_1^l\right)}=0, \q d\nmid M.	
\end{equation}
If $d$ divides $M$, then the number $\abs{\mathrm{Fix}_M(\gamma_1^l)}$ coincides with the number of non-negative integer solutions of equation \eqref{eq2}. 
This number is determined by Proposition 3.1 in \cite{HP} as
\begin{equation}\label{Fix}
\abs{\mathrm{Fix}_M\left(\gamma_1^l\right)}={{\tfrac{n+1+M}{d}-1}\choose{\tfrac{n+1}{d}-1}}=\frac{n+1}{n+M+1}{{\tfrac{n+1+M}{d}}\choose{\tfrac{n+1}{d}}}, \q d\mid M.	
\end{equation}
Substituting  relations \eqref{Fix} and \eqref{ndivM} into \eqref{countAn} results in counting formula \eqref{pocetAn}. 

Taking into account \eqref{hatpsi}, the values of the composition of homomorphisms $\sigma^e\circ\widehat\psi$ on the elements of the group $\Gamma_M$ follow for the case $A_n$ from Table \ref{gamtab} as
\begin{equation}\label{signAn}
\sigma^e\circ\widehat\psi(\gamma^k_1)=(-1)^{nk}.	
\end{equation}
For $n$ even, definition \eqref{wHQM} and relation \eqref{signAn} immediately guarantee that $\wt H_{Q,M}^{\si^e}=\emptyset$ and thus
\begin{equation}\label{zeroeven}
\abs{\wt H_{P,M}^{\si^e}}=0,\q n \,\, \mathrm{even.}	
\end{equation}
Continuing with $n$ odd, equality \eqref{rovnost} again grants that counting formula \eqref{burn3} simplifies as
\begin{equation}\label{countHAn}
\abs{\wt H_{P,M}^{\si^e}}=\frac{1}{n+1}\sum_{d\mid n+1} \phi(d)  \abs{\mathrm{Fix}_M(\gamma_1^{l}) \cap\wt H_{Q,M}^{\si^e} }=\frac{1}{n+1}\sum_{l\mid n+1} \phi\left(\frac{n+1}{l}\right)  \abs{\mathrm{Fix}_M(\gamma_1^{l}) \cap\wt H_{Q,M}^{\si^e} }.
\end{equation}
In order to determine the structure of the intersection sets in \eqref{countHAn}, note that for an odd $n$ there exist an odd number $m$ and $j\in \N$ such that 
$$n+1=2^jm,$$
and any divisor $l$ of $n+1$ is of the form
\begin{equation}\label{divl}
l= 2^i p, \q p \mid m,\,  i\in \{0,1,\dots,j\}. 	
\end{equation}
Taking any $\la\in \mathrm{Fix}_M(\gamma_1^{l}) \cap\wt H_{Q,M}^{\si^e}$, relations \eqref{signAn} and \eqref{wHQM} yield that there exists an odd number $u\in \{1,\dots,n+1\}$ such that 
$\gamma_1^u\in\mathrm{Stab}_{\Gamma_M}(\la)$. The stabilizer subgroup $\mathrm{Stab}_{\Gamma_M}(\la)$ therefore contains a subgroup generated by $\gamma_1^u, \gamma_1^l\in \mathrm{Stab}_{\Gamma_M}(\la),$
$$\setb{(\gamma_1^u)^r (\gamma_1^l)^t}{r,t\in \Z}=\setb{\gamma_1^{ur +lt}}{r,t\in \Z} \subset \mathrm{Stab}_{\Gamma_M}(\la) ,  $$
and thus $\gamma_1^{ \gcd{(u,l)}}\in \mathrm{Stab}_{\Gamma_M}(\la).$ Since $u$ is odd, the greatest common divisor $\gcd{(u,l)}$ is odd and divides $p$ and therefore 
$\la \in  \mathrm{Fix}_M(\gamma_1^{ \gcd{(u,l)}})\subset  \mathrm{Fix}_M(\gamma_1^{p})$. Conversely, taking any $\la \in\mathrm{Fix}_M(\gamma_1^{p})\subset \mathrm{Fix}_M(\gamma_1^{l})$, the divisor $p$, being odd, grants that $\la \in\wt H_{Q,M}^{\si^e}$ and hence
\begin{equation}\label{inter}
\mathrm{Fix}_M(\gamma_1^{l}) \cap\wt H_{Q,M}^{\si^e}=\mathrm{Fix}_M(\gamma_1^{p}).
\end{equation}
Relations \eqref{inter} and \eqref{divl} allow to further evaluate the counting formula \eqref{countHAn} as
\begin{equation}\label{countHAn2}
\abs{\wt H_{P,M}^{\si^e}}=\frac{1}{n+1}\sum_{l\mid n+1} \phi\left(\frac{n+1}{l}\right)  \abs{\mathrm{Fix}_M(\gamma_1^{p})  }= \frac{1}{n+1}\sum_{p\mid m}\sum_{i=0}^j \phi\left(\frac{n+1}{2^i p}\right)  \abs{\mathrm{Fix}_M(\gamma_1^{p})  }.
\end{equation}
Since the Euler's totient function $\phi$ is multiplicative and $\phi(2^i)=2^{i-1}$ holds for any $i\in \N$, the following identities are obtained,
$$\sum_{i=0}^j \phi\left(\frac{n+1}{2^i p}\right)=\sum_{i=0}^j \phi\left(2^i\frac{m}{ p}\right)=\phi \left(\frac{m}{ p}\right)\sum_{i=0}^j \phi\left(2^i\right) =2^j \phi \left(\frac{m}{ p}\right)=2\phi \left(\frac{n+1}{ p}\right), $$
and thus \eqref{countHAn2} is simplified as
\begin{align}
\abs{\wt H_{P,M}^{\si^e}}&= \frac{2}{n+1}\sum_{p\mid m} \phi\left(\frac{n+1}{ p}\right)  \abs{\mathrm{Fix}_M(\gamma_1^{p})  }=\frac{1}{n+1}\sum_{l\mid n+1}\left(1-(-1)^l\right) \phi\left(\frac{n+1}{ l}\right)  \abs{\mathrm{Fix}_M(\gamma_1^{l})  }\nonumber   \\
 &=\frac{1}{n+1}\sum_{d\mid n+1}\left(1-(-1)^{l}\right) \phi\left(d\right)  \abs{\mathrm{Fix}_M(\gamma_1^{l})  }. \label{countHAn3}
\end{align}
The resulting counting formula 
\begin{equation}\label{countres}
\abs{\wt H_{P,M}^{\si^e}}=	\frac{1}{n+1}\sum_{d\mid n+1}\left(1-(-1)^{l+n+1}\right) \phi\left(d\right)  \abs{\mathrm{Fix}_M(\gamma_1^{l})  }
\end{equation}
specializes for $n$ odd to \eqref{countHAn3} and for $n$ even to \eqref{zeroeven}.  Substituting \eqref{ndivM} and \eqref{Fix} into the counting formula \eqref{countres} and the result into \eqref{countid} and  \eqref{revert}, while taking into account that the Coxeter number of $A_n$ is $m^{\si^e}=n+1$, yields the final counting relation \eqref{SpocetAn}.
\end{proof}

\section{Discrete transforms of Weyl orbit functions}
\subsection{Weyl and Hartley orbit functions}\

The sign homomorphisms $\sigma$ of the Weyl group $W$ induce up to four families of Weyl orbit functions. The Weyl orbit functions $\varphi^\sigma_b :\R^n \map \C$, labelled by parameter $b\in \R^n$, are defined for any $a\in \R^n$ by 
\begin{equation}\label{Weylorb}
\varphi^\sigma_b(a)\equiv\sum_{w\in W}\sigma(w)\,e^{2\pi \i \langle wb,a\rangle}.
\end{equation}
The functions $\varphi^\id_b$ and $\varphi^{\si^e}_b$ are called $C-$functions and $S-$functions, respectively. In the cases of simple Lie algebras with two different root lengths, the functions $\varphi^{\si^s}_b$ and $\varphi^{\si^l}_b$ are termed the $S^s-$ and $S^l-$functions in \cite{MMP}, respectively. 

Recall from Proposition 3.1 in \cite{CzHr} that while restricting the label to the the weight lattice $\la\in P$, the argument invariance of Weyl orbit functions with respect to the action of the element of the affine Weyl group $w^{\mathrm{aff}}\in W^{\mathrm{aff}}$ is for any $a\in \R^n$ of the form
\begin{equation}\label{Wcond3}
\varphi^\sigma_\la(w^{\mathrm{aff}}a)=\sigma\circ\psi (w^{\mathrm{aff}}) \cdot \, \varphi^\sigma_\la(a).
\end{equation}
Moreover, the functions $\varphi^\sigma_\la$ are zero for the boundary points $a'\in F \setminus F^\sigma $ of the sets \eqref{domainF2},
\begin{equation}\label{Wcond3z}
\varphi^\sigma_\la(a')=0,\q  a'\in F \setminus F^\sigma.
\end{equation}
Restricting the arguments of Weyl orbit functions to the refined dual root lattice $\tfrac{1}{M}Q^\vee$, additional label symmetries with respect to the extended dual affine Weyl group \eqref{WaffP} are generated.
\begin{tvr}\label{labelthm}
Let $s \in \frac{1}{M}Q^{\vee},$ then for any $ w_P\in W_P^{\mathrm{aff}}$ and any $b \in \R^n$ it holds that
\begin{equation}\label{invphi}
\varphi_{M w_P \left(\frac{b}{M}\right)}^\si(s)= \si\circ\widehat{\psi}(w_P)\cdot\varphi_{b}^\si (s).	
\end{equation}
Additionally,  the Weyl orbit function $\varphi_{b'}^\si$ vanishes for any $b'\in M(F_P\setminus F_P^\sigma)$,
\begin{equation}\label{vanphi}
\phi_{b'}^{\si}(s)=0, \q b'\in M(F_P\setminus F_P^\sigma).	
\end{equation}
\end{tvr}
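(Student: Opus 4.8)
The plan is to reduce the identity \eqref{invphi} to the defining formula \eqref{Weylorb} together with the duality between $P$ and $Q^\vee$. First I would write an arbitrary element of the extended dual affine Weyl group in the form $w_P=T(p)w$ with $p\in P$ and $w\in W$, so that $\widehat\psi(w_P)=w$ and, by the action \eqref{WaffP},
$$M\,w_P\!\left(\frac{b}{M}\right)=M\left(w\frac{b}{M}+p\right)=wb+Mp .$$
Substituting this label into \eqref{Weylorb} and separating the exponential yields
$$\varphi^\sigma_{wb+Mp}(s)=\sum_{w'\in W}\sigma(w')\,e^{2\pi\i\langle w'wb,\,s\rangle}\,e^{2\pi\i\langle w'Mp,\,s\rangle}.$$

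The key step is to observe that the last factor is identically $1$ whenever $s\in\frac1M Q^\vee$. Writing $s=\frac1M q^\vee$ with $q^\vee\in Q^\vee$, one has $\langle w'Mp,\,s\rangle=\langle w'p,\,q^\vee\rangle$; since $W$ preserves the weight lattice $P$ and $\langle P,Q^\vee\rangle\subseteq\Z$ by \eqref{P}, this pairing is an integer, so the exponential equals $1$. Hence $\varphi^\sigma_{wb+Mp}(s)=\sum_{w'\in W}\sigma(w')e^{2\pi\i\langle w'wb,\,s\rangle}$, and the reindexing $w''=w'w$ — using that $\sigma$ is a homomorphism into $\{\pm1\}$, so $\sigma(w')=\sigma(w'')\sigma(w)$ — pulls out the factor $\sigma(w)=\sigma\circ\widehat\psi(w_P)$ and leaves precisely $\varphi^\sigma_b(s)$. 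This is exactly \eqref{invphi}; it is the same manipulation used for the affine invariance \eqref{Wcond3}, with the additional input that the $P$-translation part of $w_P$ drops out against the $Q^\vee$-lattice points.

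For the vanishing statement \eqref{vanphi}, I would take $b'\in M(F_P\setminus F^\sigma_P)$, so that $b'/M\in F_P\setminus F^\sigma_P$. By the definition \eqref{FsiP} of $F^\sigma_P$ the subgroup $\sigma\circ\widehat\psi(\mathrm{Stab}_{W^\mathrm{aff}_P}(b'/M))$ is not $\{1\}$, hence equals $\{\pm1\}$, so there exists $w_P\in\mathrm{Stab}_{W^\mathrm{aff}_P}(b'/M)$ with $\sigma\circ\widehat\psi(w_P)=-1$. Applying \eqref{invphi} with $b'$ in place of $b$, and using $w_P\cdot(b'/M)=b'/M$, hence $M\,w_P(b'/M)=b'$, gives $\varphi^\sigma_{b'}(s)=\sigma\circ\widehat\psi(w_P)\,\varphi^\sigma_{b'}(s)=-\varphi^\sigma_{b'}(s)$, so $\varphi^\sigma_{b'}(s)=0$.

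There is no serious obstacle in this argument; it is short bookkeeping. The only point that must be handled carefully is that the translation part of $w_P$ genuinely disappears, which relies precisely on $s$ belonging to the rescaled \emph{dual root} lattice $\frac1M Q^\vee$ (so that the relevant pairing is $\langle P,Q^\vee\rangle\subseteq\Z$) and on the $W$-invariance of $P$; the remaining manipulations mirror those already used for \eqref{Wcond3} and \eqref{Wcond3z}.
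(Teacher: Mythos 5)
Your proof is correct and follows essentially the same route as the paper: decompose $w_P=T(p)w$, use the $\Z$-duality $\langle P,Q^\vee\rangle\subseteq\Z$ (with $W$-invariance of $P$) to annihilate the translation part against $s\in\frac1M Q^\vee$, and then derive the vanishing on $M(F_P\setminus F_P^\sigma)$ from a stabilizing element with sign $-1$. The only cosmetic difference is that you spell out the reindexing $w''=w'w$ that extracts $\sigma(w)$, which the paper leaves implicit.
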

\begin{proof}
The $\Z-$duality \eqref{P} of the Weyl group invariant lattices $P$ and $Q^{\vee}$ ensure that the relation$$\langle Mw p,s\rangle=\langle w p,Ms\rangle \in \Z$$ is valid for any $p\in P$ and $w\in W$ and thus, for any element of the extended dual affine Weyl group $w_P=T(p)w \in W^{\mathrm{aff}}_P$ it holds that
$$\varphi_{M w_P \left(\frac{b}{M}\right)}^\si(s)= \sum_{w'\in W}\sigma(w')\,e^{2\pi \i \langle w' w b+ Mw' p,s\rangle}=  \sum_{w'\in W}\sigma(w')\,e^{2\pi \i \langle w' w b,s\rangle}=\sigma(w) \varphi_{b}^\si (s).$$
For points from the boundary set $b'\in M(F_P\setminus F_P^\sigma)$ there exists by \eqref{FsiP} an element $w_P= \mathrm{Stab}_{W^{\mathrm{aff}}_P} (b'/M)$ such that $\sigma\circ\widehat\psi(w_P)=-1$ and hence
\begin{equation*}
	\phi_{b'}^{\si}(s)=\varphi_{M w_P \left(\frac{b'}{M}\right)}^\si(s)= \si\circ\widehat{\psi}(w_P)\cdot\varphi_{b'}^\si (s)=-\phi_{b'}^{\si}(s).
\end{equation*}
\end{proof}

A real-valued version of the Weyl orbit functions employs the Hartley kernel functions
\begin{equation}\label{cas}
\mathrm{cas}\,\alpha=\cos{\alpha}+\sin{\alpha}, \q \alpha\in \R.
\end{equation}
The sign homomorphisms $\sigma$ of the Weyl group $W$ induce up to four families of the Hartley orbit functions $\hart^\sigma_b :\R^n \map \R$, labelled by parameter $b\in \R^n$, 
\begin{equation*}
\hart^\sigma_b(a)=\sum_{w\in W}\sigma(w)\mathrm{cas}\,(2\pi\langle wb,a\rangle).
\end{equation*}
The relation between the complex exponential function in \eqref{Weylorb} and the Hartley kernel function \eqref{cas} provides the following expression,
\begin{equation}\label{ReIm}
	\hart^\sigma_b = \mathrm{Re}\, \phi_{b}^{\si}+\mathrm{Im}\, \phi_{b}^{\si}.
\end{equation}
Equation \eqref{ReIm} straightforwardly extracts from \eqref{Wcond3} that restricting the label to the weight lattice $\la\in P$, the argument invariance of Hartley orbit functions with respect to the action of the element of the affine Weyl group $w^{\mathrm{aff}}\in W^{\mathrm{aff}}$ is for any $a\in \R^n$ of the form
\begin{equation}\label{Hcond}
\hart^\sigma_\la(w^{\mathrm{aff}}a)=\sigma\circ\psi (w^{\mathrm{aff}}) \cdot \, \hart^\sigma_\la(a).
\end{equation}
Moreover, the functions $\hart^\sigma_\la$ are zero for the boundary points $a'\in F \setminus F^\sigma $ of the sets \eqref{domainF2},
\begin{equation}\label{Hcondz}
\hart^\sigma_\la(a')=0,\q  a'\in F \setminus F^\sigma.
\end{equation}
Using \eqref{ReIm} again, a real-valued analogue of Proposition \ref{labelthm} is derived.
\begin{tvr}\label{labelthmH}
Let $s \in \frac{1}{M}Q^{\vee},$ then for any $ w_P\in W_P^{\mathrm{aff}}$ and any $b \in \R^n$ it holds that
$$\hart_{M w_P \left(\frac{b}{M}\right)}^\si(s)= \si\circ\widehat{\psi}(w_P)\cdot\hart_{b}^\si (s).$$
Additionally,  the Hartley orbit function $\hart_{b'}^\si$ vanishes for any $b'\in M(F_P\setminus F_P^\sigma)$,
$$\hart_{b'}^{\si}(s)=0, \q b'\in M(F_P\setminus F_P^\sigma). $$
\end{tvr}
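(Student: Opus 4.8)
The plan is to deduce both assertions directly from their complex counterparts in Proposition~\ref{labelthm}, using the decomposition \eqref{ReIm} and the fact that the sign factor $\si\circ\widehat{\psi}(w_P)$ is real.

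First I would fix $s\in\tfrac1MQ^\vee$, $w_P\in W_P^{\mathrm{aff}}$ and $b\in\R^n$ and apply the identity \eqref{invphi}. Since $\si\circ\widehat{\psi}(w_P)\in\{\pm1\}\subset\R$, comparing real and imaginary parts of \eqref{invphi} gives
\begin{align*}
\mathrm{Re}\,\varphi_{Mw_P(b/M)}^\si(s)&=\si\circ\widehat{\psi}(w_P)\cdot\mathrm{Re}\,\varphi_b^\si(s),\\
\mathrm{Im}\,\varphi_{Mw_P(b/M)}^\si(s)&=\si\circ\widehat{\psi}(w_P)\cdot\mathrm{Im}\,\varphi_b^\si(s).
\end{align*}
Adding these two equalities and invoking \eqref{ReIm} for the labels $b$ and $Mw_P(b/M)$ yields the first identity
$$\hart_{Mw_P(b/M)}^\si(s)=\si\circ\widehat{\psi}(w_P)\cdot\hart_b^\si(s).$$

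For the vanishing statement I would take $b'\in M(F_P\setminus F_P^\sigma)$; relation \eqref{vanphi} forces $\phi_{b'}^\si(s)=0$, hence both $\mathrm{Re}\,\phi_{b'}^\si(s)=0$ and $\mathrm{Im}\,\phi_{b'}^\si(s)=0$, and \eqref{ReIm} then gives $\hart_{b'}^\si(s)=0$.

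I do not anticipate a genuine obstacle: the whole argument rests on the realness of $\si\circ\widehat{\psi}(w_P)$, which makes $\mathrm{Re}\,\varphi_b^\si$ and $\mathrm{Im}\,\varphi_b^\si$ scale by the same factor, so their sum $\hart_b^\si$ inherits the transformation law verbatim. One could equivalently rerun the short computation in the proof of Proposition~\ref{labelthm} with $e^{2\pi\i\langle\cdot,\cdot\rangle}$ replaced by $\mathrm{cas}(2\pi\langle\cdot,\cdot\rangle)$, using the $\Z$-duality $\langle Mw'p,s\rangle\in\Z$ and the $2\pi$-periodicity of $\mathrm{cas}$; the shortcut via \eqref{ReIm} is however the cleaner route.
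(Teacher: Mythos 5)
Your proposal is correct and follows exactly the paper's route: the paper derives Proposition~\ref{labelthmH} from Proposition~\ref{labelthm} via the decomposition \eqref{ReIm}, relying precisely on the realness of the sign factor $\si\circ\widehat{\psi}(w_P)$ as you do. Your write-up merely makes explicit the real/imaginary-part bookkeeping that the paper leaves implicit.
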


\subsection{Discrete orthogonality}\

Argument  symmetries of Weyl orbit functions \eqref{Wcond3}, \eqref{Wcond3z} grant that the functions $\varphi^\si_b, \la\in P$ discretized to the grid  $\tfrac{1}{M}Q^\vee$ are fully determined by their non-zero values in the sets  \eqref{FQvee}. Label symmetries, induced by the selection of the discrete grid  $\tfrac{1}{M}Q^\vee$ and determined by Proposition \ref{labelthm}, restrict the labels to the weight sets \eqref{LP}. The same argument and label symmetries are guaranteed for the Hartley orbit functions by  relations \eqref{Hcond}, \eqref{Hcondz} and Proposition \ref{labelthmH}. A scalar product of two discrete complex valued functions $f,g:F_{Q^\vee,M}^\sigma\map \C$ is introduced via relation
\begin{equation}\label{scal}
\langle f,g\rangle_M^\sigma\equiv\sum_{s\in F_{Q^\vee,M}^\sigma} \varepsilon(s) f(s)\overline{g(s)},
\end{equation}
with the function $\varepsilon$ defined by \eqref{ep}, and the resulting finite-dimensional Hilbert space is denoted by $\mathcal{H}_{Q^\vee,M}^\si$.
Note that for any $s\in \tfrac{1}{M}Q^\vee$ and $\la \in P$ it holds
$$e^{2\pi\i\langle \la,s+Q^\vee\rangle}=e^{2\pi\i\langle \la,s\rangle},$$
and the mapping $e^{2\pi\i\langle\la,y\rangle}$ with $y\in\tfrac{1}{M}Q^\vee/Q^\vee$ is well defined.
In order to derive the discrete orthogonality of Weyl orbit functions \eqref{Weylorb} with respect to the scalar product \eqref{scal}, the following basic orthogonality relations of exponential functions are essential.
\begin{tvr}\label{lemdis}
Let $M\in\N$ and $\la,\la'\in P$, then
\begin{equation}\label{dddeee}
\sum_{y\in \frac{1}{M}Q^\vee/ Q^\vee}e^{2\pi\i\langle\la-\la',y\rangle}=
\begin{cases}
M^n&\la-\la'\in MP,\\
0&\text{otherwise}.
\end{cases}
\end{equation}
\end{tvr}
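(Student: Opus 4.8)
The plan is to reduce the sum over the finite abelian group $\frac{1}{M}Q^\vee/Q^\vee$ to a product of one-dimensional geometric sums by choosing convenient coordinates. First I would fix the basis $\{\tfrac{1}{M}\al_1^\vee,\dots,\tfrac{1}{M}\al_n^\vee\}$ of $\tfrac1M Q^\vee$, so that a generic $y\in \tfrac1M Q^\vee/Q^\vee$ is represented by $y=\tfrac1M(k_1\al_1^\vee+\dots+k_n\al_n^\vee)$ with each $k_j$ ranging over a complete residue system modulo $M$; this gives exactly $M^n$ representatives. Writing $\mu\equiv\la-\la'\in P$ and using the duality pairing $\langle \om_i,\al_j^\vee\rangle=\delta_{ij}$, the exponent becomes $\langle \mu,y\rangle=\tfrac1M\sum_{j=1}^n k_j\langle \mu,\al_j^\vee\rangle$. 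Since $\mu\in P$, each integer $\mu_j\equiv\langle\mu,\al_j^\vee\rangle$ is an ordinary integer, so the whole sum factorizes:
\begin{equation*}
\sum_{y\in \frac1M Q^\vee/Q^\vee} e^{2\pi\i\langle\mu,y\rangle}=\prod_{j=1}^n\left(\sum_{k_j=0}^{M-1} e^{2\pi\i k_j \mu_j/M}\right).
\end{equation*}

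Next I would invoke the elementary fact that $\sum_{k=0}^{M-1} e^{2\pi\i k t/M}$ equals $M$ when $M\mid t$ and $0$ otherwise. Applying this to each factor, the product is $M^n$ precisely when $M\mid \mu_j$ for every $j\in\{1,\dots,n\}$, and $0$ as soon as $M\nmid\mu_j$ for some $j$. It then remains only to observe that the condition ``$M\mid\langle\mu,\al_j^\vee\rangle$ for all $j$'' is exactly the statement that $\tfrac1M\mu$ lies in the dual lattice of $Q^\vee$, which by the defining relation \eqref{P} is the weight lattice $P$; equivalently $\mu\in MP$. Substituting $\mu=\la-\la'$ yields the claimed dichotomy \eqref{dddeee}.

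There is no serious obstacle here — the statement is the standard orthogonality of characters of the finite group $\tfrac1M Q^\vee/Q^\vee$, whose character group is identified with $P/MP$ via the pairing. The only point requiring a little care is the bookkeeping of representatives: one must check that the $M^n$ tuples $(k_1,\dots,k_n)\in(\Z/M\Z)^n$ genuinely parametrize $\tfrac1M Q^\vee/Q^\vee$ bijectively, i.e. that $q^\vee\in Q^\vee$ with $\tfrac1M q^\vee\in\tfrac1M Q^\vee$ contributes the zero coset iff all its $\al^\vee$-coordinates are divisible by $M$; this is immediate since $\{\al_1^\vee,\dots,\al_n^\vee\}$ is a $\Z$-basis of $Q^\vee$. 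With that remark in place the proof is a two-line computation, so I would present it compactly rather than belaboring the geometric-sum evaluation.
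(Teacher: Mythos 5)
Your proof is correct, but it takes a genuinely different route from the paper. The paper proves the vanishing case by the translation-invariance trick: for $\la-\la'\in P\setminus MP$ the $\Z$-duality of $P$ and $Q^\vee$ supplies a coset $y'\in\tfrac1M Q^\vee/Q^\vee$ with $e^{2\pi\i\langle\la-\la',y'\rangle}\neq 1$, and multiplying the sum by this nontrivial character value while re-indexing $y\mapsto y+y'$ forces the sum to be zero; the value $M^n$ in the other case is then immediate since every summand equals $1$ and the quotient group has order $M^n$. You instead pick the $\Z$-basis $\{\al_1^\vee,\dots,\al_n^\vee\}$ of $Q^\vee$, parametrize the quotient by $(\Z/M\Z)^n$, and factor the character sum into $n$ one-dimensional geometric sums. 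Both arguments are elementary and complete. The paper's version is basis-free and transfers verbatim to any finite abelian group, which is why it can simply cite the computation from \cite{MP2}; your version is more explicit, and as a by-product actually \emph{establishes} the order $M^n$ of $\tfrac1M Q^\vee/Q^\vee$ (via the bijection with $(\Z/M\Z)^n$ that you rightly flag as the one point needing care), whereas the paper quotes that order as known. Your identification of the divisibility condition $M\mid\langle\la-\la',\al_j^\vee\rangle$ for all $j$ with $\la-\la'\in MP$ via the defining relation \eqref{P} is exactly right and is the same duality fact the paper uses to locate its witness $y'$.
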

\begin{proof}
For any $\la\in P$, such that for all $y\in\tfrac{1}{M}Q^\vee/Q^\vee$ is satisfied $\langle\la,y\rangle\in\Z$, the $\Z-$duality of  $P$ and $Q^\vee$ yields that $\la\in MP$. Thus
for all $\la\in P\setminus MP$, there exists $y'\in \tfrac{1}{M}Q^\vee/Q^\vee$ such that $\langle\la,y'\rangle \notin \Z$ and $e^{ 2\pi \i\langle\la,y'\rangle}\neq1$. Therefore, the following calculation from \cite{MP2}, specialized for the finite quotient group $\tfrac{1}{M}Q^\vee/Q^\vee$ of order $M^n$,  
 $$e^{ 2\pi \i\langle\la,y'\rangle}\sum_{y\in \frac{1}{M}Q^\vee/ Q^\vee}e^{2\pi\i\langle\la ,y\rangle}= \sum_{y\in \frac{1}{M}Q^\vee/ Q^\vee}e^{2\pi\i\langle\la ,y+y'\rangle}=\sum_{y\in \frac{1}{M}Q^\vee/ Q^\vee}e^{2\pi\i\langle\la ,y\rangle},$$ 
forces basic orthogonality relations \eqref{dddeee}.
\end{proof}
\begin{thm}\label{disortogthm}
For any $\la,\la'\in\Lambda^\sigma_{P,M}$, it holds that
\begin{equation}\label{disortog}
\langle \varphi^\sigma_\la,\varphi^\sigma_{\la'}\rangle_M^\sigma=\abs{W}M^nh_{P,M}(\la)\,\delta_{\la\la'},
\end{equation}
where the coefficients $h_{P,M}$ are defined by \eqref{hPM}.
\end{thm}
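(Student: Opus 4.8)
The plan is to expand the scalar product $\langle \varphi^\sigma_\la,\varphi^\sigma_{\la'}\rangle_M^\sigma$ using the definitions and reduce it to the basic exponential orthogonality relation of Proposition \ref{lemdis}. First I would write out
$\varphi^\sigma_\la(s)\overline{\varphi^\sigma_{\la'}(s)}=\sum_{w,w'\in W}\sigma(w)\sigma(w')\,e^{2\pi\i\langle w\la - w'\la',s\rangle}$
and substitute into \eqref{scal}. The key preliminary step is to convert the weighted sum over the fundamental-domain point set $F^\sigma_{Q^\vee,M}$ with the multiplicity factor $\varepsilon(s)$ into an unweighted sum over the full orbit space $\tfrac{1}{M}Q^\vee/Q^\vee$. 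This is the standard ``unfolding'' argument: by the three properties \eqref{rfun1}--\eqref{rfunstab} of the $W$-action on the torus $\R^n/Q^\vee$, together with \eqref{ept}, each point $y$ in $\tfrac{1}{M}Q^\vee/Q^\vee$ lies in a $W$-orbit that meets $F\cap \tfrac{1}{M}Q^\vee/Q^\vee$ in exactly one point $s$, and the orbit has size $\varepsilon(s)$; moreover by \eqref{Wcond3}, \eqref{Wcond3z} the summand $\varphi^\sigma_\la(s')\overline{\varphi^\sigma_{\la'}(s')}$ is $W^{\mathrm{aff}}$-invariant (the two sign factors cancel) and vanishes on $F\setminus F^\sigma$, so extending the sum to all of $F\cap\tfrac1M Q^\vee/Q^\vee$ and then unfolding gives
$\langle \varphi^\sigma_\la,\varphi^\sigma_{\la'}\rangle_M^\sigma=\sum_{y\in\frac1M Q^\vee/Q^\vee}\varphi^\sigma_\la(y)\overline{\varphi^\sigma_{\la'}(y)}$.

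Next I would interchange the $W\times W$ sum with the sum over $y$ and apply Proposition \ref{lemdis}: the inner sum $\sum_y e^{2\pi\i\langle w\la-w'\la',y\rangle}$ equals $M^n$ when $w\la-w'\la'\in MP$ and $0$ otherwise. So the scalar product becomes
$M^n\sum_{(w,w')\,:\,w\la-w'\la'\in MP}\sigma(w)\sigma(w')$.
The crucial arithmetic point is then to identify exactly which pairs $(w,w')$ contribute. Here I would invoke the label symmetry Proposition \ref{labelthm}: the congruence $w\la\equiv w'\la'\ (\mathrm{mod}\ MP)$ means $w'\la'$ and $w\la$ represent the same class, and since $\la,\la'\in\Lambda^\sigma_{P,M}\subset P\cap MF_P$ are the distinguished representatives in their $W^{\mathrm{aff}}_P$-orbits under the lexicographic rule \eqref{lexorder}, one concludes that a contributing pair can only occur when $\la=\la'$. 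Concretely: $w\la - w'\la'\in MP$ is equivalent to $\la = M w_P(\la'/M)$ for some $w_P\in W^{\mathrm{aff}}_P$ with $\widehat\psi(w_P)=w'^{-1}w$ on the torus; applying $\varphi$'s label invariance \eqref{invphi} and the fact that $\varphi^\sigma$ is supported on $MF^\sigma_P$ together with the uniqueness of representatives forces $\la=\la'$, giving the Kronecker $\delta_{\la\la'}$.

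Finally, for $\la=\la'$ I would count the contributing pairs with sign. The set of $(w,w')$ with $w\la-w'\la'\in MP$ then corresponds to pairs whose ``difference'' $w'^{-1}w$ lifts to an element of $\mathrm{Stab}_{W^{\mathrm{aff}}_P}(\la/M)$; reparametrizing by $w'$ free (giving $|W|$) and $w'^{-1}w$ ranging over (the $\widehat\psi$-image of) that stabilizer, and using that $\sigma(w)\sigma(w')=\sigma(w'^{-1}w)$ equals $\sigma\circ\widehat\psi$ of the stabilizer element, the signed count is $|W|\cdot\sum_{g\in \widehat\psi(\mathrm{Stab}_{W^{\mathrm{aff}}_P}(\la/M))}\sigma(g)\cdot(\text{multiplicity})$. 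Because $\la\in\Lambda^\sigma_{P,M}$ satisfies $\sigma\circ\widehat\psi(\mathrm{Stab}_{W^{\mathrm{aff}}_P}(\la/M))=\{1\}$ by \eqref{FsiP}, all these signs are $+1$, and the total multiplicity collapses to $|W|\cdot h_{P,M}(\la)$ with $h_{P,M}$ as in \eqref{hPM}. Combining with the factor $M^n$ yields \eqref{disortog}. I expect the main obstacle to be the careful bookkeeping in this last step — precisely matching the stabilizer $\mathrm{Stab}_{W^{\mathrm{aff}}_P}(\la/M)$ (decomposed via Proposition \ref{stabil}) against the set of sign-contributing pairs $(w,w')$, and verifying that the count is exactly $|W|h_{P,M}(\la)$ rather than off by the order of some stabilizer — but the structural results \eqref{stabil}, \eqref{rfunstab} and the defining property \eqref{FsiP} of $\Lambda^\sigma_{P,M}$ are exactly what make this go through cleanly.
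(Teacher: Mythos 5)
Your proposal is correct and follows essentially the same route as the paper's proof: extend the sum from $F^\sigma_{Q^\vee,M}$ to $\tfrac1M Q^\vee\cap F$ using \eqref{Wcond3z}, unfold over $\tfrac1M Q^\vee/Q^\vee$ via the $W^{\mathrm{aff}}$-invariance of the summands, apply Proposition \ref{lemdis}, deduce $\la=\la'$ from both weights lying in the fundamental domain $MF_P$, and finally count the surviving terms as $|W|\,h_{P,M}(\la)$ using that $\sigma\circ\widehat\psi$ is trivial on $\mathrm{Stab}_{W_P^{\mathrm{aff}}}(\la/M)$ by \eqref{FsiP} and that $\widehat\psi$ is injective on that stabilizer. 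The only cosmetic difference is that you keep the full $W\times W$ pair count where the paper first uses $W$-invariance of the quotient group to reduce to a single sum over $w'$; the bookkeeping you flag as the main obstacle is resolved exactly as you anticipate.
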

\begin{proof}
Since the functions $\phi^{\sigma}_\lambda$ vanish by \eqref{Wcond3z}  on the sets $F\setminus F^\sigma$, the scalar product is evaluated as 
\begin{equation}\label{exprs}
\langle \varphi^\sigma_\la,\varphi^\sigma_{\la'}\rangle_M^\sigma= \sum_{s\in F_{Q^\vee,M}^\sigma}\ep(s)\phi^\sigma_\la(s)\overline{\phi^\sigma_{\la'}(s)}=\sum_{s\in \frac{1}{M}Q^\vee \cap F}\ep(s)\phi^\sigma_\la(s)\overline{\phi^\sigma_{\la'}(s)}.
\end{equation}
The $W^{\mathrm{aff}}-$invariance properties \eqref{epinv} and \eqref{Wcond3} grant that the summands in \eqref{exprs} are $W^{\mathrm{aff}}-$invariant, i.e. for all $w^{\mathrm{aff}}\in W^{\mathrm{aff}}$ it holds that 
\begin{equation}\label{Waffin}
\ep(s)\phi^\sigma_\la(s)\overline{\phi^\sigma_{\la'}(s)}=\ep(w^{\mathrm{aff}}s)
\phi^\sigma_\la(w^{\mathrm{aff}}s)\overline{\phi^\sigma_{\la'}(w^{\mathrm{aff}}s)}.
\end{equation}
The $Q^\vee-$shift invariance in \eqref{Waffin}  and relation \eqref{ept} imply
$$\sum_{s\in \frac{1}{M}Q^\vee \cap F}\ep(s)\phi^\sigma_\la(s)\overline{\phi^\sigma_{\la'}(s)}=\sum_{x\in \left[ \frac{1}{M}Q^\vee/Q^\vee\right] \cap F}\wt\ep(x)\phi^\sigma_\la(x)\overline{\phi^\sigma_{\la'}(x)} $$
and the $W-$invariance in \eqref{Waffin} and relations \eqref{rfun1}, \eqref{rfun2} produce the identity
 $$ \sum_{x\in \left[ \frac{1}{M}Q^\vee/Q^\vee\right] \cap F}\wt\ep(x)\phi^\sigma_\la(x)\overline{\phi^\sigma_{\la'}(x)} =\sum_{y\in\frac{1}{M}Q^\vee/Q^\vee }\phi^\sigma_\la(y)\overline{\phi^\sigma_{\la'}(y)}.  $$
The $W-$invariance of the quotient group $\frac{1}{M}Q^\vee/Q^{\vee}$ is used to further simplify the result,
\begin{align}
\sum_{y\in\frac{1}{M}Q^\vee/Q^\vee }\phi^\sigma_\la(y)\overline{\phi^\sigma_{\la'}(y)} = &\sum_{w'\in W}\sum_{w\in W} \sum_{y\in \frac{1}{M}Q^\vee/Q^{\vee}}\sigma(ww')e^{2\pi\i\sca{w\la-w'\la'}{y}}\nonumber \\ = &\abs{W}\sum_{w'\in W}\sigma(w')\sum_{y \in \frac{1}{M}Q^\vee/Q^{\vee}}e^{2\pi\i\sca{\la-w'\la'}{y}}. \label{ddd}
\end{align}
If $\la-w'\la'\in MP$, then $\la/M=w'\la'/M+p$ holds for some $p\in P$ and $w'\in W$ and thus $\la /M$ and $\la' /M$ are in the same $W^{\mathrm{aff}}_P-$orbit.
Definition \eqref{LP} of the weight set $ \Lambda^\sigma_{P,M}$ guarantees that both $\la /M$ and $\la' /M$ are in the fundamental domain $F_P$ and therefore $\la = \la'$. Contrapositive implication yields that if $\la \neq \la'$, then for all $w'\in W$ it holds that $\la-w'\la'\notin MP$ and hence basic orthogonality relations \eqref{dddeee} grant zero value $\sca{\phi^\sigma_\la}{\phi^\sigma_{\la'}}_M^\si=0$.

If $\la=\la'$, then basic orthogonality relations \eqref{dddeee} guarantee that summands in \eqref{ddd} do not vanish only if $\la-w'\la\in MP$, or equivalently $w'\in \widehat\psi \left(\mathrm{Stab}_{W_P^{\mathrm{aff}}}\left(\la/M\right)\right)$ and thus
\begin{align*}
\abs{W}\sum_{w'\in W}\sigma(w')\sum_{y \in \frac{1}{M}Q^{\vee}/Q^{\vee}}e^{2\pi\i\sca{\la-w'\la}{y}} = & |W|\, M^n\sum_{w'\in \widehat\psi\left(\mathrm{Stab}_{W_P^{\mathrm{aff}}}\left(\frac{\la}{M}\right)\right)}\sigma(w') . 
\end{align*}
Since for $w_P\in \mathrm{Stab}_{W_P^{\mathrm{aff}}}(\la/M)$ the property $\widehat\psi(w_P)=1$ forces $w_P=1$, the subgroups
$\mathrm{Stab}_{W_P^{\mathrm{aff}}}\left(\la/M\right)$ and $\widehat\psi \left(\mathrm{Stab}_{W_P^{\mathrm{aff}}}\left(\la/M\right)\right) $ are isomorphic and hence
$$\sum_{w'\in \widehat\psi\left(\mathrm{Stab}_{W_P^{\mathrm{aff}}}\left(\frac{\la}{M}\right)\right)}\sigma(w')=\sum_{w_P\in \mathrm{Stab}_{W_P^{\mathrm{aff}}}\left(\frac{\la}{M}\right)}\sigma\circ \widehat\psi(w_P).$$
Definition \eqref{LP} of the weight set $ \Lambda^\sigma_{P,M}$ also ensures that for any $\la \in \Lambda^{\sigma}_{P,M}$ it holds that $\la/M\in F_P^\sigma $ and, taking into account defining relation \eqref{FsiP} and notation \eqref{hPM}, the final identity yielding relations \eqref{disortog} is derived,
$$\sum_{w_P\in \mathrm{Stab}_{W_P^{\mathrm{aff}}}\left(\frac{\la}{M}\right)}\sigma\circ \widehat\psi(w_P)=\sum_{w_P\in \mathrm{Stab}_{W_P^{\mathrm{aff}}}\left(\frac{\la}{M}\right)}1= h_{P,M}(\la). $$   
\end{proof}
\begin{thm}\label{Wartbas}
The functions $\varphi^\sigma_\la$, $\la\in\Lambda^\sigma_{P,M}$ form for any $M\in \N$, $M>m^\si$ an orthogonal basis of the Hilbert space $\mathcal{H}_{Q^\vee,M}^\si$.
\end{thm}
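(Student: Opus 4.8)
The plan is to combine the discrete orthogonality relations of Theorem~\ref{disortogthm} with the cardinality identity of Theorem~\ref{pocet}. The orthogonality theorem shows that the system $\{\varphi^\sigma_\la\}_{\la\in\Lambda^\sigma_{P,M}}$ is pairwise orthogonal and consists of non-zero vectors, hence is linearly independent in $\mathcal{H}^\sigma_{Q^\vee,M}$; the cardinality theorem shows that the number of these vectors matches the dimension of $\mathcal{H}^\sigma_{Q^\vee,M}$, which forces the independent system to span and thus to be an orthogonal basis.

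In more detail, I would first invoke Theorem~\ref{disortogthm} to get $\langle\varphi^\sigma_\la,\varphi^\sigma_{\la'}\rangle^\sigma_M=\abs{W}M^n h_{P,M}(\la)\,\delta_{\la\la'}$ for all $\la,\la'\in\Lambda^\sigma_{P,M}$. Since the identity element lies in $\mathrm{Stab}_{W^{\mathrm{aff}}_P}(\la/M)$, one has $h_{P,M}(\la)\geq1$, so $\langle\varphi^\sigma_\la,\varphi^\sigma_\la\rangle^\sigma_M>0$; hence the functions $\varphi^\sigma_\la$, $\la\in\Lambda^\sigma_{P,M}$, are non-zero, pairwise distinct, and — being a pairwise orthogonal family of non-zero vectors — linearly independent in $\mathcal{H}^\sigma_{Q^\vee,M}$. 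Next I would observe that $\mathcal{H}^\sigma_{Q^\vee,M}$ is, by construction, the space of all complex-valued functions on the finite point set $F^\sigma_{Q^\vee,M}$ endowed with the scalar product \eqref{scal} (whose weights $\ep(s)$ are strictly positive), so $\dim\mathcal{H}^\sigma_{Q^\vee,M}=\abs{F^\sigma_{Q^\vee,M}}$. Then, since $M>m^\sigma$, Theorem~\ref{pocet} gives $\abs{F^\sigma_{Q^\vee,M}}=\abs{\Lambda^\sigma_{P,M}}$. Consequently the linearly independent set $\{\varphi^\sigma_\la\}_{\la\in\Lambda^\sigma_{P,M}}$ has cardinality equal to $\dim\mathcal{H}^\sigma_{Q^\vee,M}$, and a linearly independent subset of that size is automatically a basis; being orthogonal, it is an orthogonal basis.

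I do not expect a genuine obstacle at this stage: the substance has already been carried out in Theorem~\ref{disortogthm} and, above all, in Theorem~\ref{pocet}, whose proof absorbs the heavy counting through the invariant-polynomial isomorphism of Theorem~\ref{isoAD}. The only points warranting a moment's care are that the hypothesis $M>m^\sigma$ is precisely the one under which the cardinality identity of Theorem~\ref{pocet} applies, and that the strict positivity of the diagonal values $\abs{W}M^n h_{P,M}(\la)$ — hence the non-degeneracy of the orthogonal system — rests on the trivial fact that $1\in\mathrm{Stab}_{W^{\mathrm{aff}}_P}(\la/M)$.
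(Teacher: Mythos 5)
Your proposal is correct and follows exactly the paper's argument: Theorem~\ref{disortogthm} gives orthogonality (with strictly positive norms, hence linear independence), and Theorem~\ref{pocet} supplies the cardinality match $\abs{\Lambda^\sigma_{P,M}}=\abs{F^\sigma_{Q^\vee,M}}=\dim\mathcal{H}^\sigma_{Q^\vee,M}$, forcing the orthogonal system to be a basis. Your added remarks on $h_{P,M}(\la)\geq 1$ and the positivity of the weights $\ep(s)$ are sound elaborations of details the paper leaves implicit.
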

\begin{proof}
Theorem \ref{disortogthm} grants linear independence of the set of discretized functions $\varphi^\sigma_\la: F_{Q^\vee,M}^\sigma \map \C$, $\la\in\Lambda^\sigma_{P,M}$, 
and Theorem \ref{pocet} guarantees that this set of orthogonal functions has the cardinality
$$\abs{\Lambda^\sigma_{P,M}}=\abs{F_{Q^\vee,M}^\sigma}=\dim\mathcal{H}_{Q^\vee,M}^\si. $$
\end{proof}
As a consequence of the discrete orthogonality of Weyl orbit functions, the discrete orthogonality of Hartley orbit functions is derived in the following theorem.
\begin{thm}
For any $\la,\la'\in\Lambda^\sigma_{P,M}$, it holds that
\begin{equation*}
\langle \hart^\sigma_\la,\hart^\sigma_{\la'}\rangle_M^\sigma=\abs{W}M^nh_{P,M}(\la)\,\delta_{\la\la'},
\end{equation*}
where the coefficients $h_{P,M}$ are defined by \eqref{hPM}.
\end{thm}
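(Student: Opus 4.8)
The plan is to reduce the statement to the discrete orthogonality of Weyl orbit functions (Theorem~\ref{disortogthm}) by means of the pointwise identity \eqref{ReIm}. Writing $\mathrm{Re}\,\phi^\sigma_b=\tfrac12(\phi^\sigma_b+\overline{\phi^\sigma_b})$ and $\mathrm{Im}\,\phi^\sigma_b=\tfrac{1}{2\i}(\phi^\sigma_b-\overline{\phi^\sigma_b})$, relation \eqref{ReIm} reads $\hart^\sigma_b=\tfrac12\bigl((1-\i)\phi^\sigma_b+(1+\i)\overline{\phi^\sigma_b}\bigr)$. Substituting this expression, for both $\la$ and $\la'$, into the scalar product \eqref{scal}, expanding by sesquilinearity and using that $\varepsilon$ is real-valued, I would obtain $\langle\hart^\sigma_\la,\hart^\sigma_{\la'}\rangle_M^\sigma$ as a sum of four terms built from $\langle\phi^\sigma_\la,\phi^\sigma_{\la'}\rangle_M^\sigma$, its conjugate, and the two ``unconjugated'' pairings $\langle\overline{\phi^\sigma_\la},\phi^\sigma_{\la'}\rangle_M^\sigma$ and $\langle\phi^\sigma_\la,\overline{\phi^\sigma_{\la'}}\rangle_M^\sigma$.

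Collecting the terms, the two ``diagonal'' contributions combine into $\mathrm{Re}\,\langle\phi^\sigma_\la,\phi^\sigma_{\la'}\rangle_M^\sigma$, which by Theorem~\ref{disortogthm} is precisely $\abs{W}M^nh_{P,M}(\la)\,\delta_{\la\la'}$ since this value is already real. The two remaining terms are complex conjugates of each other, so they add up to $\mathrm{Im}\,z$ with $z\equiv\langle\phi^\sigma_\la,\overline{\phi^\sigma_{\la'}}\rangle_M^\sigma=\sum_{s\in F^\sigma_{Q^\vee,M}}\varepsilon(s)\phi^\sigma_\la(s)\phi^\sigma_{\la'}(s)$. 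Hence the whole argument reduces to verifying that $z$ is real; the Hartley orthogonality relation then follows at once.

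The reality of $z$ is the only non-formal point, and it is where I expect the ``work'' to be. The summand $s\mapsto\varepsilon(s)\phi^\sigma_\la(s)\phi^\sigma_{\la'}(s)$ is $W^{\mathrm{aff}}$-invariant, because $\varepsilon$ is $W^{\mathrm{aff}}$-invariant by \eqref{epinv} and the two factors $\sigma\circ\psi(w^{\mathrm{aff}})$ appearing in \eqref{Wcond3} multiply to $1$, and it vanishes on $F\setminus F^\sigma$ by \eqref{Wcond3z}. Therefore the reduction already carried out in the proof of Theorem~\ref{disortogthm} applies word for word and gives $z=\sum_{y\in\frac1M Q^\vee/Q^\vee}\phi^\sigma_\la(y)\phi^\sigma_{\la'}(y)$. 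Since $\overline{\phi^\sigma_b(y)}=\phi^\sigma_{-b}(y)=\phi^\sigma_b(-y)$, conjugating this finite sum and reindexing via the bijection $y\mapsto-y$ of the group $\frac1M Q^\vee/Q^\vee$ yields $\bar z=z$, so $\mathrm{Im}\,z=0$. As an alternative one may instead invoke Proposition~\ref{labelthm}: the restriction of $\phi^\sigma_{-\la'}$ to $\tfrac1M Q^\vee$ equals $\sigma\circ\widehat\psi(w_P)\,\phi^\sigma_{\la'''}$ for the $W^{\mathrm{aff}}_P$-representative $\la'''\in\Lambda^\id_{P,M}$ of $-\la'$; then either $\la'''\in\Lambda^\sigma_{P,M}$ and $z=\sigma\circ\widehat\psi(w_P)\langle\phi^\sigma_\la,\phi^\sigma_{\la'''}\rangle_M^\sigma$ is real by Theorem~\ref{disortogthm}, or $\phi^\sigma_{\la'''}$ vanishes on $\tfrac1M Q^\vee$ by \eqref{vanphi} and $z=0$.

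Putting the pieces together, $\langle\hart^\sigma_\la,\hart^\sigma_{\la'}\rangle_M^\sigma=\mathrm{Re}\,\langle\phi^\sigma_\la,\phi^\sigma_{\la'}\rangle_M^\sigma+\mathrm{Im}\,z=\abs{W}M^nh_{P,M}(\la)\,\delta_{\la\la'}$, which is the claim. I do not anticipate a genuine obstacle: the proof is essentially bookkeeping of complex conjugates layered on top of Theorem~\ref{disortogthm}, and the single substantive ingredient is the reality of the unconjugated pairing $z$, which comes either from the negation symmetry of the grid $\frac1M Q^\vee/Q^\vee$ or from the label symmetry in Proposition~\ref{labelthm}.
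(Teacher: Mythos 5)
Your proposal is correct. The first half coincides with the paper's argument: the paper obtains the same identity $\langle\hart^\sigma_\la,\hart^\sigma_{\la'}\rangle_M^\sigma=\mathrm{Re}\,\langle\varphi^\sigma_\la,\varphi^\sigma_{\la'}\rangle_M^\sigma+\mathrm{Im}\,\langle\varphi^\sigma_\la,\overline{\varphi^\sigma_{\la'}}\rangle_M^\sigma$ from a trigonometric identity for the $\mathrm{cas}$ kernel, which is just your expansion of $\hart^\sigma_b=\tfrac12\bigl((1-\i)\varphi^\sigma_b+(1+\i)\overline{\varphi^\sigma_b}\bigr)$ written multiplicatively, and the diagonal term is handled identically via Theorem \ref{disortogthm}. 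Where you genuinely diverge is in killing the cross term $z=\langle\varphi^\sigma_\la,\overline{\varphi^\sigma_{\la'}}\rangle_M^\sigma$: the paper writes $\overline{\varphi^\sigma_{\la'}}=\varphi^\sigma_{-\la'}$, picks a representative $\mu\in P\cap MF_P$ of $-\la'$ under $W^{\mathrm{aff}}_P$, and invokes the label symmetry of Proposition \ref{labelthm} together with a case split ($\mu\in\Lambda^\sigma_{P,M}$, where Theorem \ref{disortogthm} gives a real value, versus $\mu\in M(F_P\setminus F_P^\sigma)$, where \eqref{vanphi} gives zero) --- this is exactly your stated alternative. Your primary route instead re-runs the unfolding from the proof of Theorem \ref{disortogthm} (legitimate, since the unconjugated summand $\ep(s)\varphi^\sigma_\la(s)\varphi^\sigma_{\la'}(s)$ is indeed $W^{\mathrm{aff}}$-invariant, the two signs $\sigma\circ\psi(w^{\mathrm{aff}})$ squaring to $1$, and vanishes off $F^\sigma$) to land on the full quotient $\tfrac1M Q^\vee/Q^\vee$, and then uses the negation symmetry $y\mapsto-y$ of that group together with $\overline{\varphi^\sigma_b(y)}=\varphi^\sigma_b(-y)$ to conclude $\bar z=z$. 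This buys you independence from Proposition \ref{labelthm} and the vanishing property \eqref{vanphi} at the cost of repeating the orbit-unfolding computation; the paper's version is shorter given that Proposition \ref{labelthm} is already established, but both arguments are sound.
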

\begin{proof}
The following trigonometric identity, 
$$(\cos{\alpha}+\sin{\alpha})(\cos{\alpha'}+\sin{\alpha'})=\mathrm{Re}\left(e^{\i\alpha}e^{-\i\alpha'}\right)+\mathrm{Im}\left(e^{\i\alpha}e^{\i\alpha'}\right),$$
valid for any $\al, \al' \in \R$, implies together with discrete orthogonality relations \eqref{disortog} that
\begin{equation*}
\langle\hart^\sigma_\lambda,\hart^\sigma_{\lambda'}\rangle_M^\sigma=\mathrm{Re}\,\langle\varphi^\sigma_\lambda,\varphi^\sigma_{\lambda'}\rangle^\sigma_M+\mathrm{Im}\,\langle\varphi^\sigma_\lambda,\overline{\varphi^\sigma_{\lambda'}}\rangle_M^\sigma =\abs{W}M^nh_{P,M}(\la)\,\delta_{\la\la'} +\mathrm{Im}\,\langle\varphi^\sigma_\lambda,\overline{\varphi^\sigma_{\lambda'}}\rangle_M^\sigma.
\end{equation*}
Definition \eqref{Weylorb} immediately provides the following relation for complex conjugated function $\varphi^\sigma_{\lambda'},$
\begin{equation}\label{im}
\overline{\varphi^\sigma_{\lambda'}}=\varphi^\sigma_{-\lambda'},	
\end{equation}
and lattice property \eqref{P} of  the weight lattice $P$ ensures that $-\lambda'\in P$. Since the lattice $\frac{1}{M}P$ is $W_P^{\mathrm{aff}}-$invariant, there exist $\mu \in P\cap MF_P$ and $w_P \in W_P^{\mathrm{aff}}$ such that 
\begin{equation}\label{muM}
-\la'= M 	w_P \left(\frac{\mu}{M}\right).
\end{equation}
Relations \eqref{im}, \eqref{muM} and label symmetry \eqref{invphi} allow to calculate
\begin{equation*}
\mathrm{Im}\,\langle\varphi^\sigma_\lambda,\overline{\varphi^\sigma_{\lambda'}}\rangle_M^\sigma	=\mathrm{Im}\,\langle\varphi^\sigma_\lambda,\varphi^\sigma_{-\lambda'}\rangle_M^\sigma	
= \si\circ\widehat{\psi}(w_P)\cdot \mathrm{Im}\,\langle\varphi^\sigma_\lambda,\varphi^\sigma_{\mu}\rangle_M^\sigma.	
\end{equation*}
If $\mu\in \Lambda^\sigma_{P,M}$, then discrete orthogonality relations \eqref{disortog} ensure that $\mathrm{Im}\,\langle\varphi^\sigma_\lambda,\varphi^\sigma_{\mu}\rangle_M^\sigma=0$. If, on the other hand, $\mu \in M(F_P\setminus F_P^\si),$ then vanishing property \eqref{vanphi} grants directly that  $\langle\varphi^\sigma_\lambda,\varphi^\sigma_{\mu}\rangle_M^\sigma=0$.	
\end{proof}
The discrete orthogonality of Hartley orbit functions also generates Hartley version of Theorem \ref{Wartbas}.
\begin{thm}\label{Hartbas}
The functions $\hart^\sigma_\la$, $\la\in\Lambda^\sigma_{P,M}$ form for any $M\in \N$, $M>m^\si$ an orthogonal basis of the Hilbert space $\mathcal{H}_{Q^\vee,M}^\si$.
\end{thm}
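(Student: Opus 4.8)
The plan is to mirror verbatim the argument used for Theorem~\ref{Wartbas}. The discrete orthogonality of Hartley orbit functions established in the preceding theorem gives $\langle \hart^\sigma_\la,\hart^\sigma_{\la'}\rangle_M^\sigma=\abs{W}M^nh_{P,M}(\la)\,\delta_{\la\la'}$ for all $\la,\la'\in\Lambda^\sigma_{P,M}$. Since $h_{P,M}(\la)=\abs{\mathrm{Stab}_{W_P^{\mathrm{aff}}}(\la/M)}\geq 1$, each $\hart^\sigma_\la$ has strictly positive norm in $\mathcal{H}_{Q^\vee,M}^\si$; being in addition pairwise orthogonal, the family $\setb{\hart^\sigma_\la}{\la\in\Lambda^\sigma_{P,M}}$ is linearly independent in $\mathcal{H}_{Q^\vee,M}^\si$.

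Next I would match the cardinality of this family with the dimension of the space. By construction $\mathcal{H}_{Q^\vee,M}^\si$ is the space of all complex-valued functions on the finite point set $F_{Q^\vee,M}^\sigma$ equipped with the weighted scalar product \eqref{scal}, so $\dim\mathcal{H}_{Q^\vee,M}^\si=\abs{F_{Q^\vee,M}^\sigma}$. Under the hypothesis $M>m^\si$, Theorem~\ref{pocet} yields $\abs{\Lambda^\sigma_{P,M}}=\abs{F_{Q^\vee,M}^\sigma}=\dim\mathcal{H}_{Q^\vee,M}^\si$. A linearly independent subset of a finite-dimensional space whose cardinality equals the dimension is automatically a basis, so the orthogonal family $\setb{\hart^\sigma_\la}{\la\in\Lambda^\sigma_{P,M}}$ spans $\mathcal{H}_{Q^\vee,M}^\si$ and is an orthogonal basis.

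There is essentially no genuine obstacle here: both nontrivial inputs — the discrete orthogonality of the Hartley functions and the equality of the two cardinalities — are already available, and the remaining step is the elementary linear-algebra packaging above. The only place where the assumption $M>m^\si$ is used is in invoking Theorem~\ref{pocet}, and it is truly needed, since for small $M$ the weight set $\Lambda^\sigma_{P,M}$ can be too sparse for the count to hold. It is also worth recording, exactly as in the Weyl-function case, that the argument symmetries of Proposition~\ref{labelthmH} together with the vanishing \eqref{Hcondz} guarantee that each $\hart^\sigma_\la$ is genuinely determined by its restriction to $F_{Q^\vee,M}^\sigma$, which is what legitimizes regarding the family as living inside $\mathcal{H}_{Q^\vee,M}^\si$ in the first place.
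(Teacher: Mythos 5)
Your proposal is correct and follows exactly the route the paper intends: the paper states Theorem~\ref{Hartbas} without a separate proof precisely because it is the verbatim analogue of Theorem~\ref{Wartbas}, combining the Hartley discrete orthogonality relations (which give linear independence of functions with strictly positive norm) with the cardinality identity $\abs{\Lambda^\sigma_{P,M}}=\abs{F_{Q^\vee,M}^\sigma}=\dim\mathcal{H}_{Q^\vee,M}^\si$ from Theorem~\ref{pocet}. Your additional remarks on the role of $M>m^\si$ and on Proposition~\ref{labelthmH} are consistent with the paper's framework.
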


\subsection{Discrete transforms}\

The interpolating function $I[f]^{\sigma}_M:\R^n\map \C$ of any function $f\in\mathcal{H}_{Q^\vee,M}^\si $ is defined as a linear combination of the basis functions $\phi^\sigma_\la,$
\begin{equation}
I[f]^{\sigma}_M(a)\equiv \sum_{\la\in \Lambda_{P,M}^\sigma} c^\sigma_\la \phi^\sigma_\la(a), \label{intc} 
\end{equation}
satisfying the condition
\begin{equation*}
I[f]^\sigma_M(s)= f(s), \q s\in F_{Q^\vee,M}^\sigma. \label{intcs}
\end{equation*}
The frequency spectrum coefficients $c^\sigma_\la$ in \eqref{intc} are  uniquely determined by Theorem \ref{Wartbas} and calculated as standard Fourier coefficients,
\begin{equation}\label{for}
c^\sigma_\la= \frac{\sca{f}{\phi^\sigma_\la}_M^\sigma}{\sca{\phi^\sigma_\la}{\phi^\sigma_\la}_M^\sigma}=(\abs{W} M^nh_{P,M}(\la))^{-1}\sum_{s\in F_{Q^\vee,M}^\sigma}\ep(s) f(s)\overline{\phi^\sigma_\la(s)},
\end{equation}
and the corresponding Plancherel formulas also hold
$$\sum_{s\in F_{Q^\vee,M}^\sigma}\ep(s)\abs{f(s)}^2=\abs{W}M^n\sum_{\la\in\Lambda_{P,M}^\sigma}h_{P,M}(\la)\abs{c_\lambda}^2.$$
Equations \eqref{for} and \eqref{intc} establish forward and backward discrete Fourier-Weyl transforms, respectively, of the function $f\in\mathcal{H}_{Q^\vee,M}^\si $.

Similarly, the Hartley interpolating function $H[f]^{\sigma}_M:\R^n\map \C$ of any function $f\in\mathcal{H}_{Q^\vee,M}^\si $ is defined as a linear combination of the Hartley basis functions $\hart^\sigma_\la,$
\begin{align}
H[f]^{\sigma}_M(a)\equiv& \sum_{\la\in \Lambda_{P,M}^\sigma} d^\sigma_\la \hart^\sigma_\la(a), \label{intd} 
\end{align}
satisfying the condition
\begin{align*}
H[f]^\sigma_M(s)=& f(s), \q s\in F_{Q^\vee,M}^\sigma. 
\end{align*}
The frequency spectrum coefficients $d^\sigma_\la$ in \eqref{intd} are  uniquely determined by Theorem \ref{Hartbas} and calculated as standard Fourier coefficients,
\begin{align}
d^\sigma_\la=& \frac{\sca{f}{\hart^\sigma_\la}_M^\sigma}{\sca{\hart^\sigma_\la}{\hart^\sigma_\la}_M^\sigma}=(\abs{W} M^nh_{P,M}(\la))^{-1}\sum_{s\in F_{Q^\vee,M}^\sigma}\ep(s) f(s)\hart^\sigma_\la(s),\label{dtrans}
\end{align}
and the corresponding Plancherel formulas also hold
$$\sum_{x\in F_{Q^\vee,M}^\sigma}\ep(s)\abs{f(s)}^2=\abs{W}M^n\sum_{\la\in\Lambda_{P,M}^\sigma}h_{P,M}(\la)\abs{d^\si_\lambda}^2.$$
Equations \eqref{dtrans} and \eqref{intd} establish forward and backward discrete Hartley-Weyl transforms, respectively, of the function $f\in\mathcal{H}_{Q^\vee,M}^\si $.

\section*{Concluding remarks}
\begin{itemize}\item The choice of the fundamental domain $F_P$ of the dual extended affine Weyl group is not unique. As demonstrated in Example \ref{exA3}, the Brillouin zone \cite{Mich} of $A_3$, intersected with the dominant Weyl chamber and with certain boundaries omitted, produces another viable fundamental domain for the $A_3$ case. Significant advantage of the presented Kac coordinates approach stems from existence of an effective algorithm for constructing the weight sets $\Lambda^\sigma_{P,M}$ from formula~\eqref{MFP} for any case. Indeed, explicit relations \eqref{FQS} and \eqref{fun1} immediately produce the weight sets $P\cap MF^\si_Q$. Sorting the weights from $P\cap MF^\si_Q$ into $\Gamma_M-$orbits, selecting the lexicographically highest in each orbit, while excluding those with negative sign homomorphism values of stabilizing $\gamma\in \Gamma_M$, yields directly the set  $\Lambda^\sigma_{P,M}$ for any fixed $M\in \N$.
\item 
The counting formulas for the case $A_n$ are due to their link to combinatorics of necklaces already present in various contexts in the mathematical literature. Indeed, the action of the cyclic group $\Gamma_M$ on the set $\Lambda_M$ implies that the number $|\Lambda^{\id}_{P,M}(A_n)|$ coincides with the number of necklaces with $n+1$ white and $M$ black beads. Moreover, conditions \eqref{condition2} for the point sets $F_{Q^\vee,M}^{\sigma}(A_n)$ are in fact special cases of equation (2) in \cite{Elash} and thus, preserving the notation from  \cite{Elash}, it holds that 
\begin{align}
\abs{F_{Q^\vee,M}^{\id }(A_{n})}&=a_{0}(n+1,M),\nonumber\\
\abs{F_{Q^\vee,M}^{\sigma^e }(A_{2k})}&=a_{0}(2k+1,M-2k-1),\label{Ram}\\
\abs{F_{Q^\vee,M}^{\sigma^e }(A_{2k+1})}&=a_{k+1}(2k+2,M-2k-2).\nonumber	
\end{align}
Using the H\"{o}lder's identity \cite{Hold} to evaluate the Ramanujan sums in the explicit expressions for $a_k(m,n)$ in \cite[Thm. 1]{Elash} allows to bring equations \eqref{Ram} to the form of $|\Lambda^{\sigma}_{P,M}(A_n)|$ counting formulas \eqref{pocetAn} and \eqref{SpocetAn}. Such direct comparison provides an alternative proof of the $A_n$ case of Theorem~\ref{pocet}. Note also that the counting formulas studied in conjunction with the perfect forms in  Lemma 5.1 in \cite{Barn} coincide with the counting formula for $|\Lambda^{\sigma^e}_{P,M}(A_3)|$.
\item Good behaviour of the dual-weight discretization of Weyl orbit functions in interpolation estimates \cite{HaHrPa2} indicates similar viability of the discrete transforms \eqref{for} and \eqref{dtrans} in various applications related to digital data processing. Interpolation performance of the dual-root lattice discretization and existence of interpolation convergence criteria \cite{xuAd} pose open problems. The four families of Weyl orbit functions induce four families of orthogonal polynomials \cite{MPcub,MMP} which are special cases of multivariate Jacobi and Macdonald polynomials \cite{HMjac}.  Existence of generalization of the dual-weight lattice orthogonality of the selected subset of Macdonald polynomials \cite{diejen} to the dual-root lattice poses an open problem. The related polynomial interpolation and approximation methods, cubature formulas \cite{HMPcub} and their comparison to the weight and dual weight versions deserve further study.
\item Besides developing novel discrete transforms on generalized and composed grids, other fundamentally different options are provided by existence of functions invariant with respect to some subgroups of the given Weyl group. These even normal subgroups of index 2 of the Weyl groups are generated as kernels of each of the sign homomorphisms. There exist six more types of special functions induced by the even subgroups, called $E-$functions, for root systems with two lengths of the roots \cite{HJ} and one type for root systems with one length of the roots \cite{MP2}. The dual-weight lattice discretization of all ten types of Weyl and Hartley orbit functions is derived in a unified manner and full generality in \cite{HJ}. Extending the present dual-root lattice Fourier calculus to all ten types of Weyl and Hartley orbit functions poses a deep unsolved problem.
\end{itemize}

\section*{Acknowledgements}

This work was supported by the Grant Agency of the Czech Technical University in Prague, grant number SGS16/239/OHK4/3T/14.  LM and JH gratefully acknowledge the support of this work by RVO14000.

\end{document}